
\documentclass[sigconf,nonacm]{acmart}
\AtBeginDocument{%
  }

\setcopyright{acmlicensed}
\copyrightyear{2018}
\acmYear{2018}
\acmDOI{XXXXXXX.XXXXXXX}
\acmConference[Conference acronym 'XX]{Make sure to enter the correct
  conference title from your rights confirmation email}{June 03--05,
  2018}{Woodstock, NY}
\acmISBN{978-1-4503-XXXX-X/2018/06}




\usepackage [xcolor,hyperref,notion,electronic]{knowledge}
\knowledgestyle*{intro notion}{color=black,emphasize}
\knowledgestyle*{notion}{color}
\knowledge{notion}
 | strict

\knowledge{notion}
 | is specified
 | specifies

\knowledge{notion}
 | model functor

\knowledge{notion}
 | exponentiable

\knowledge{notion}
 | cartesian

\knowledge{notion}
 | preserves pushforwards
 | preserving pushforwards
 | preserve pushforwards

\knowledge{notion}
 | two-sortification functor

\knowledge{notion}
 | family GAT
 | family GATs

\knowledge{notion}
 | family functor

\knowledge{notion}
 | coreflector morphism
 | coreflector morphisms

\knowledge{notion}
 | cartesian category

\knowledge{notion}
 | free filtered completion
 | free filtered completions

\knowledge{notion}
 | faithful (resp. fully faithful) at
 | faithful (resp. full) at
 | fully faithful at
 | full at
 | fullness at
 | faithful at
 
\usepackage{color}
\usepackage{tabularx,colortbl}
\usepackage{amsfonts,amsmath,amsthm}
\usepackage{enumitem}
\usepackage{mdframed}
\usepackage{bbm}
\usepackage{calc}
\usepackage{tikz-cd}
\tikzset{%
symbol/.style={%
draw=none,
every to/.append style={%
edge node={node [sloped, allow upside down, auto=false]{$#1$}}}
}
}
\usepackage{mathtools}
\usepackage{hyperref}
\usepackage{float}
\usepackage{proof}
\usepackage{todonotes}
\usepackage{quiver}
\usepackage{yade}

\usepackage{tcolorbox}
\usepackage[graphicx]{realboxes}
\usepackage{adjustbox}
\usepackage{rotating}
\usepackage{stmaryrd}
\usepackage{xparse}
\usepackage{thmtools}
\usepackage{xspace}
\usepackage{ebutf8}

\newcommand{\cat}[1]{\ensuremath{\mathbf{#1}}}
\newcommand{\Set}{\cat{Set}}

\newcommand{\Fam}{\ensuremath{\cat{Fam}}}

\newcommand{\ra}{\rightarrow}

\newcommand{\id }{\ensuremath{\mathsf{Id}}}

\newcommand{\op}{\ensuremath{\mathsf{op}}}

\newcommand{\iso}{\cong}

\newcommand{\Hom}{\mathrm{Hom}}

\newcommand{\Cat}{\cat{Cat}}
\newcommand{\XX}{\underline{X}}
\newcommand{\YY}{\underline{Y}}
\newcommand{\pp}{\underline{p}}

\knowledgenewrobustcmd{\PtdSet}{\cmdkl{𝐏𝐭𝐝𝐒𝐞𝐭}}
\knowledgenewrobustcmd{\FamG}{\cmdkl{\Fam}}
\knowledgenewrobustcmd{\FamS}{\cmdkl{\Fam}}
\knowledgenewrobustcmd{\iGAT}{\cmdkl{e}}
\knowledgenewrobustcmd{\Lex}[1]{\cmdkl{\mathsf{Lex}}(#1)}
\knowledgenewrobustcmd{\TmSet}[1]{\cmdkl{\mathsf{Tm}}(#1)}
\knowledgenewrobustcmd{\TmSetF}[2]{\cmdkl{\mathsf{Tm}}_{#1}(#2)}

\knowledgenewrobustcmd{\TmFam}[1]{\cmdkl{\mathsf{TmFam}}(#1)}
\knowledgenewrobustcmd{\TmFamF}[2]{\cmdkl{\mathcal{F}}_{#1}(#2)}
\knowledgenewrobustcmd{\TmElFam}[1]{\cmdkl{\mathcal{E}}({#1})}
\knowledgenewrobustcmd{\TWFam}[1]{\cmdkl{\mathsf{TFam}}(#1)}
\knowledgenewrobustcmd{\UGAT}[1]{\cmdkl{\U_{#1}}}
\knowledgenewrobustcmd{\UGATw}[1]{\cmdkl{\U_{#1}}}
\knowledgenewrobustcmd{\ElMorTw}[1]{\cmdkl{\El\,#1}}
\knowledgenewrobustcmd{\ElMor}[2]{\cmdkl{\El_{#1}\,#2}}
\newcommand{\TmU}[1]{\TmSet{\UGAT{#1}}}
\newcommand{\TmEl}[2]{\TmSet{\ElMor{#1}{#2}}}
\knowledgenewrobustcmd{\CartExp}{\cmdkl{\cat{CartExp}}}
\knowledgenewrobustcmd{\TW}{\cmdkl{T}}
\knowledgenewrobustcmd{\TWb}{\cmdkl{ \mathbb{T}}} 
\knowledgenewrobustcmd{\coref}[1]{\cmdkl{R}_{#1}}
\knowledgenewrobustcmd\FinGat{\cmdkl{\cat{FinGat}}}
\knowledgenewrobustcmd\GatCat{\cmdkl{\cat{Gat}}}
\knowledgenewrobustcmd\pGAT{\cmdkl{p_\mathcal{G}}}
\knowledgenewrobustcmd\XGAT{\cmdkl{X_\mathcal{G}}}
\knowledgenewrobustcmd\YGAT{\cmdkl{Y_\mathcal{G}}}
\knowledgenewrobustcmd\CAT{\cmdkl{\cat{CAT}}}
\knowledgenewrobustcmd\model[1]{\cmdkl{⟦}{#1}\cmdkl{⟧}}
\knowledgenewrobustcmd\FamCat[1]{\cmdkl{\text{Fam}}(#1)}
\knowledgenewrobustcmd\GatFam[1]{{#1}_{\cmdkl{\text{Fam}}}}
\knowledgenewrobustcmd\FamGat{\cmdkl{\cat{FamGat}}}
\knowledgenewrobustcmd\FamFunctor[1]{\cmdkl{\text{Fam}}_{#1}}
\knowledgenewrobustcmd\MFunctor[1]{#1\cmdkl{/\Fam}}
\knowledgenewrobustcmd\modelTerm[2]{\cmdkl{[}{#1}\cmdkl{]}({#2})}
\knowledgenewrobustcmd\yo{\cmdkl{𝐲}}
\knowledgenewrobustcmd\M[1]{\cmdkl{M}_{#1}}
\knowledgenewrobustcmd\Z[1]{\cmdkl{Z}_{#1}}
\knowledgenewrobustcmd\tw[1]{\cmdkl{τ}_{#1}}
\knowledgenewrobustcmd\bang{\cmdkl{\mathop{!}}}

\KnowledgeNewDocumentCommand{\colaxSlice}{ o o }{%
  \IfNoValueTF{#1}
    {{\ensuremath{\cmdkl{\cat{CAT}\sslash\Fam}}}}
	{\IfNoValueTF{#2}
	{\ensuremath{#1\sslash\Fam}}
	{\ensuremath{#1\sslash #2}}}
}

\KnowledgeNewDocumentCommand{\colaxSliceCart}{ o o }{%
  \IfNoValueTF{#1}
    {\ensuremath{\cmdkl{\CAT\sslash_{\mathsf{c}}\Fam}}}
    {
	\IfNoValueTF{#2}	
	{\ensuremath{#1\sslash_{\mathsf{c}}\Fam}}
	{\ensuremath{#1\sslash_{\mathsf{c}}#2}}}
}

\KnowledgeNewDocumentCommand{\laxSliceCart}{ o o }{%
  \IfNoValueTF{#1}
    {\ensuremath{\Cat\slash_{\mathsf{c}}\Fam}}
    {
	\IfNoValueTF{#2}
	{\ensuremath{#1\cmdkl{\slash_{\mathsf{c}}\Fam}}}
	{\ensuremath{#1\slash_{\mathsf{c}} #2}}
	}
}


\newcommand{\dom}{\text{dom}}
\newcommand{\cod}{\text{cod}}

\newcommand{\CC}{\ensuremath{\mathbb{C}}}
\newcommand{\DD}{\ensuremath{\mathbb{D}}}

\newcommand{\PP}{\ensuremath{\mathbb{P}}}

\usepackage{ifthen}
\NewDocumentCommand\pullbackFunctor{
  m 
}{\ensuremath{#1^{*}}}
\NewDocumentCommand\postcompFunctor{
  m 
}{\ensuremath{#1_{!}}}
\NewDocumentCommand\forgetSliceFunctor{
  m 
}{\ensuremath{#1_{!}}}
\NewDocumentCommand\fiberProductFunctor{
  m 
  m 
}{\ensuremath{(- \times_{#1} #2)}}

\usepackage[capitalize]{cleveref}

\theoremstyle{plain}
\newtheorem{theorem}{Theorem}[section]
\crefname{theorem}{Theorem}{Theorems}
\newtheorem{lemma}[theorem]{Lemma}
\crefname{lemma}{Lemma}{Lemmas}
\newtheorem{proposition}[theorem]{Proposition}
\crefname{proposition}{Proposition}{Propositions}

\theoremstyle{definition}
\newtheorem{definition}[theorem]{Definition}
\crefname{definition}{Definition}{Definitions}

\crefname{examples}{Examples}{Examples}
\newtheorem{example}[theorem]{Example}
\crefname{example}{Example}{Examples}
\newtheorem{remark}[theorem]{Remark}
\crefname{remark}{Remark}{Remarks}
\newtheorem{notation}[theorem]{Notation}
\crefname{notation}{Notation}{Notations}
\newtheorem{corollary}[theorem]{Corollary}
\crefname{corollary}{Corollary}{Corollaries}

\crefname{intuition}{Intuition}{Intuitions}

\crefname{construction}{Construction}{Constructions}

\numberwithin{theorem}{section}

\definecolor{dkgreen}{rgb}{0,0.6,0}
\definecolor{ltblue}{rgb}{0,0.4,0.4}
\definecolor{dkviolet}{rgb}{0.3,0,0.5}
\definecolor{dkblue}{RGB}{35, 38, 176}

\NewCommandCopy{\oldGamma}{\Gamma}\renewcommand{\Gamma}{{\mathit{\oldGamma}}}
\NewCommandCopy{\oldDelta}{\Delta}\renewcommand{\Delta}{{\mathit{\oldDelta}}}
\NewCommandCopy{\oldTheta}{\Theta}\renewcommand{\Theta}{{\mathit{\oldTheta}}}
\newcommand{\GVar}[1]{\mathbf{#1}}
\newcommand{\GatSet}{\mathsf{Set}}

\newcommand{\blank}{\mathord{\hspace{1pt}\text{--}\hspace{1pt}}} 
\newcommand{\X}{\mathsf{X}}

\newcommand{\U}{\mathsf{U}}
\newcommand{\El}{\mathsf{El}}




\newcommand{\appendixOrNot}[2]{#1}

\definecolor{RocqOrange}{RGB}{194,99,0}
\NewDocumentCommand{\proofLink}{
  m 
  o 
  m 
  m 
}{{\color{RocqOrange}\href{./html/#3.#1.html\IfValueT{#2}{\##2}}{#4.#1}}}
\NewDocumentCommand{\proofLinkCT}{
  m 
  o 
}{\proofLink{#1}[#2]{CategoryTheory}{CT}}
\NewDocumentCommand{\proofLinkTS}{
  m 
  o 
}{\proofLink{#1}[#2]{TwoSortification}{TS}}

\newcommand{\Rocq}{\textsf{Rocq}\xspace}

\begin{document}

\title[For GATs, Two Sorts Are Enough]{ For Generalised Algebraic Theories, Two Sorts Are Enough}


\author{Samy Avrillon}
\affiliation{%
  \institution{ENS Lyon}
  \city{Lyon}
  \country{France}}
\email{samy.avrillon@ens-lyon.fr}

\author{Ambrus Kaposi}
\affiliation{%
  \institution{Eötvös Loránd University}
  \city{Budapest}
  \country{Hungary}}
\email{akaposi@inf.elte.hu}
\orcid{0000-0001-9897-8936}

\author{Ambroise Lafont}
\email{lafont@lix.polytechnique.fr }
\orcid{0000-0002-9299-641X}
\author{Niyousha Najmaei}
\email{najmaei@lix.polytechnique.fr }
\orcid{0009-0001-2892-5537}
\author{Johann Rosain}
\email{rosain@lix.polytechnique.fr }
\orcid{0000-0003-1719-2654}
\affiliation{%
  \institution{LIX, CNRS, Inria, École polytechnique, Institut Polytechnique de Paris}
  \city{Palaiseau}
  \country{France}
}

\renewcommand{\shortauthors}{Avrillon et al.}

\begin{abstract}

  Generalised algebraic theories (GATs) allow multiple sorts indexed
  over each other. For example, the theories of categories or
  Martin-Löf type theories form GATs. Categories have two sorts,
  objects and morphisms, and the latter are double-indexed over the
  former. Martin-Löf type theory has four sorts: contexts,
  substitutions, types and terms. For example, types are indexed over
  contexts, and terms are indexed over both contexts and types. In
  this paper we show that any GAT can be reduced to a GAT with only
  two sorts, and there is a section-retraction correspondence (formally, a strict coreflection) between
  models of the original and the reduced GAT. 
  In particular, any model of the original GAT can
  be turned into a model of the reduced (two-sorted) GAT and back, and
  this roundtrip is the identity. 
  
  The reduced GAT is simpler than the original GAT in the following
  aspects: it does not have sort equalities; it does not have
  interleaved sorts and operations; if the original GAT did not have
  interleaved sorts and operations, then the reduced GAT won't have
  operations interleaved between different sorts. In a type-theoretic
  metatheory, the initial algebra of a GAT is called a quotient
  inductive-inductive type (QIIT). Our reduction provides a way to
  implement QIITs with sort equalities or interleaved constructors
  which are not allowed by Cubical Agda. An instance of our reduction
  is the well-known method of reducing mutual inductive types to a
  single indexed family.  Our approach is semantic in that it does not
  rely on a syntactic description of GATs, but instead, on Uemura's
  bi-initial characterisation of the category of (finite) GATs
  in the 2-category of finitely complete categories with a
  chosen exponentiable morphism.
\end{abstract}

\begin{CCSXML}
<ccs2012>
   <concept>
       <concept_id>10003752.10003790.10011740</concept_id>
       <concept_desc>Theory of computation~Type theory</concept_desc>
       <concept_significance>300</concept_significance>
       </concept>
   <concept>
       <concept_id>10003752.10010124</concept_id>
       <concept_desc>Theory of computation~Semantics and reasoning</concept_desc>
       <concept_significance>500</concept_significance>
       </concept>
 </ccs2012>
\end{CCSXML}

\ccsdesc[300]{Theory of computation~Type theory}
\ccsdesc[500]{Theory of computation~Semantics and reasoning}

\keywords{Categorical Semantics, Inductive Types, 
Generalised Algebraic Theories}


\maketitle

\section{Introduction}

\emph{Generalised algebraic theories} (GATs), introduced by Cartmell~\cite{DBLP:journals/apal/Cartmell86}, extend 
the notion of algebraic theories by allowing the specification of multi-sorted algebraic structures where sorts can be indexed over each other. For
example, the GAT of transitive graphs has two sorts, vertices and edges, 
and the sort of edges is double-indexed over the sort of objects.
Syntactically, it consists of two sort declarations  $𝐕 :\GatSet,
𝐄:V\ra V\ra\GatSet$, and an operation $𝐓 : ∏ v₁, v₂, v₃ : V. E\, v₁\, v₂ → E\, v₂\, v₃ → E\, v₁\, v₃$.
The category of models of this GAT is precisely the category of transitive graphs.
Another example is the GAT of
Martin-Löf type theory, which has four sorts: contexts,
substitutions, types and terms, with various indexing between them.
This GAT involves the sort declarations $\GVar{Con}:\GatSet,
\GVar{Sub}:\mathsf{Con}\ra \mathsf{Con}\ra\GatSet,
\GVar{Ty}:\mathsf{Con}\ra \GatSet,
\GVar{Tm}:\Pi \Gamma:\mathsf{Con}.\Pi A:\mathsf{Ty}\,\Gamma.\GatSet$.

The present work shows that any GAT can be reduced to a \emph{family GAT},
i.e. a GAT which, essentially, starts with $\GVar{U}:\GatSet, \GVar{El}:\U\ra\GatSet$ and does not involve any other sort declaration.
We call this reduction \emph{two-sortification}. The idea of the translation is simple: 
in the original GAT, replace $\GatSet$ with $\U$, and insert $\El$ in front of any used sort;
  finally prepend the result with $\GVar{U}:\GatSet,\GVar{El}:\U\ra\GatSet$.
For example, applying the translation to the GAT of transitive graphs yields $\GVar{U}:\GatSet,\GVar{El}:\U\ra\GatSet,
\GVar{V}:\U,\GVar{E}:\El\,V\ra \El\,V\ra \U, 
𝐓 : ∏ v₁, v₂, v₃ : \El\,V. \El(E\, v₁\, v₂) → \El(E\, v₂\, v₃) → \El(E\, v₁\, v₃)
$.
\subsection{Motivation}

This reduction has concrete applications in the implementation of
inductive types in type theory.  Different classes of inductive types
correspond to different classes of algebraic theories, e.g. mutual
inductive types are initial models of equation-free multi-sorted
algebraic theories; 
quotient-inductive-inductive types (QIITs) are initial models of (a variant of Cartmell's) GATs, 
and similarly,
inductive-inductive types (IITs) are initial
models of equation-free GATs.
Reducing between different classes of GATs
is important for implementation. For instance, IITs are not supported by Rocq,
but can be recovered via their reduction to indexed inductive types
\cite{DBLP:conf/types/KaposiKL19}. Two-sortification reduces general
QIITs to the smaller class of two-sorted QIITs, which is useful on its
own: for example, Sestini \cite{sestini} defines another reduction of IITs to
indexed inductive types, but only considers two-sorted IITs for
simplicity. His simplification is justified by the current
paper. Additionally, two-sortification simplifies the GAT and its corresponding QIIT
in the following three respects.

\paragraph{1. Eliminating Sort Equations}
Two-sortification gets rid of equations between expressions of type $\GatSet$,
which we call \emph{sort equations}.
As an example, consider the extension of the GAT of type theory with a Russell universe, that is, with 
a type constructor\footnote{Additional equations should ensure that $R$ is compatible with substitution, and indexing is needed to avoid Russell's paradox.
}
 $R : ∏ Γ, \mathsf{Ty}\, Γ$ and a sort equation
 $\mathsf{Tm}\,Γ\,(R\, Γ) =_{\GatSet} \mathsf{Ty}\, Γ,$
 which expresses that terms of type $R$ are identified with types.
In the reduced GAT, this equation becomes 
$\mathsf{Tm}\,Γ\,(R\, Γ) =_{U} \mathsf{Ty}\, Γ,$ which is no longer a sort equation.

Therefore, two-sortification provides a workaround for Cubical Agda's
lack of support for sort equations in QIITs. This technique has been used,
for example, by \citet{DBLP:conf/lics/AltenkirchS20} to define the integers as a
pointed type with a sort equation.


\paragraph{2. Interleaved Operators}
Agda does not support defining inductive types with interleaved
operators for different sorts. An example is the syntax of Martin-Löf
type theory with sorts $\GVar{Con}, \GVar{Sub}, \GVar{Ty},
\GVar{Tm}$. The functor law $A[\gamma\circ\delta] = A[\gamma][\delta]$
is a constructor of $\GVar{Ty}$, but it refers to the substitution composition constructor of
$\GVar{Sub}$, while the constructor for substitution extension
$\blank,\blank :
(\gamma:\GVar{Sub}(\Delta,\Gamma))\times\GVar{Tm}(\Delta,A[\gamma])\rightarrow\GVar{Sub}(\Delta,\Gamma.A)$
refers to the type substitution constructor $\blank[\blank]$ in
$\GVar{Ty}$. 
Consequently, two-sortification has been used in practice, for example by \citet{DBLP:journals/corr/abs-2509-14988}, to implement the syntax of type theory in Cubical
Agda.

\paragraph{3. Interleaved Sorts and Operators}
In the theory of System F$_\omega$, types are indexed by kinds and terms
are indexed by types at kind ${*}$; thus, the operator ${*}$ has to be
declared before the sort of terms. The two-sortified version of System
F$_\omega$ has only two sorts $\GVar{U}$ and $\GVar{El}$. 
As a result, two-sortification transforms interleaved sorts and operators into
interleaved operators of different sorts. 
In point 2 above, we explained how two-sortification removes interleaved
operators; hence, by applying two-sortification once more, we can also remove
interleaved operators of different sorts. 

\subsection{What is a GAT?}
A GAT can be thought of as a context in a logical framework or domain-specific type theory:
this has been the guiding intuition in the examples given so far. Substitutions between such contexts
are the morphisms.
It has occasionally been argued~\cite{DBLP:journals/pacmpl/KaposiKA19,kovacs_thesis} that GATs are inconvenient to work with directly because 
of their low-level definition as raw syntax and typing judgements~\cite{DBLP:journals/apal/Cartmell86}. 
Here, we adopt a more high-level approach by leveraging a simple universal property of the category of finite GATs due to Uemura~\cite{UEMURA2022106991}. 
Concretely, the category of finite GATs, equipped with the projection 
 $\pGAT ∶ (\GVar{U}:\GatSet, \GVar{u} : \U) → (\GVar{U}: \GatSet)$, is bi-initial among 
finitely-complete categories equipped with an exponentiable morphism.
Exponentiability means that the pullback functor along this morphism has a right adjoint,
called 
the pushforward along that morphism.
Bi-initiality means that given any \kl{cartesian} category $\CC$ with a chosen 
exponentiable morphism $p'$, there exists a unique (up to unique isomorphism) 
finite-limit-preserving functor $F$
from the category of GATs to $\CC$ 
such that $F(\pGAT) ≅ p'$ and pushforwards along $\pGAT$ are mapped (up to isomorphism) to pushforwards along $p'$.

 Type-theoretic perspectives on (variants of) GATs can be found in the literature~\cite{DBLP:journals/pacmpl/KaposiKA19,DBLP:conf/lics/KovacsK20,DBLP:journals/mscs/BezemCDE21}.
 These provide (stricter) initiality properties and avoid dealing with untyped syntax.
 To construct a functor from the category of GATs to some category $\CC$,
 this approach requires upgrading $\CC$ with 
 the structure of a category with families~\cite{DBLP:conf/types/Dybjer95} with suitable type formers. 
In contrast, the above bi-initiality characterisation requires elementary categorical structures 
on $\CC$:
finite limits and a chosen exponentiable morphism,
which the functor is expected to map the GAT morphism $(\GVar{U}:\GatSet, \GVar{u} : \U) → (\GVar{U}: \GatSet)$ to. 

As a preliminary example, we can easily reconstruct the \emph{model functor} to (large) categories that computes the category of models of a given GAT:
since we expect the exponentiable GAT morphism $(\GVar{U}:\GatSet, \GVar{u} : \U) → (\GVar{U}: \GatSet)$ to be mapped 
to the forgetful functor from the category of pointed sets to the category of sets, we simply 
choose $p'$ to be this (exponentiable) functor.
We will see that two-sortification can be defined in a similar a way, by equipping 
the category of GATs sliced over the GAT of families with a suitable exponentiable morphism.

\subsection{Contributions}
\label{sec:contributions}
\paragraph{A Syntactic and Semantic Account of Two-sortification}
We give a formal account of two-sortification as an endofunctor on the category of GATs. 
Moreover, for any theory $Γ$, we construct a GAT morphism from its translation $\reintro*\TW Γ$ to 
$Γ$,
which we call the \emph{coreflector morphism} of $Γ$, as well as a GAT morphism  
from $\reintro*\TW Γ$ to the theory $\U ∶ \GatSet, \El ∶ \U → \GatSet$, intuitively forgetting everything from $\reintro* \TW Γ$ but the two sorts.
The mapping $Γ ↦ (\TW Γ →(\U, \El))$ actually induces a fully faithful functor 
to the slice category over the theory $\U, \El$.

On the semantic side, we define the \emph{model functor}, from the category of GATs to the category of locally small categories, that computes the category of models of a given GAT.
We show that the image of the coreflector morphism by the model functor has a left adjoint inducing a strict coreflection between the categories of models of the original GAT and its two-sortification: the roundtrip starting from a model of the original GAT is the identity.

\begin{example}
  \label{ex:transitive-graphs-coreflection}
   Consider the GAT of transitive graphs. The right adjoint 
 maps a model $(U,El,V,E,T)$ of the translated GAT to a transitive graph whose underlying set of vertices is $El(V)$, whose set of edges 
between two vertices $a$ and $b$ is $El(E(a,b))$, and whose transitive operation is given by $T$. Conversely, the left adjoint 
  maps a transitive graph $G=(V,E,T)$ to the family defined by
$U_G = \{ * \} + V  \times  V$, $El_G(*) = V$, and $El_G(a,b) = E(a,b)$. The transitive operation is again given by $T$.
The roundtrip starting from a transitive graph yields the same transitive graph.
\end{example}
We also provide a simple description of the category of models of the reduced GAT
that we now sketch.
First, let us define the \emph{family functor} of a GAT as the composition of the coreflective left adjoint and the projection mapping a model of the reduced GAT to its underlying (set-indexed) family.
Let $El_M ∶ U_M → 𝐒𝐞𝐭$ denote the image of a model $M$ of the original GAT by the family functor. We show that the category of models of the reduced GAT is, up to isomorphism, the category defined as follows:
\begin{itemize}
    \item An object consists of a model $M$ of the original GAT, a family $(U',El')$, and a 
    function $f ∶ U_M → U'$ such that $El'(f(u)) = El_M(u)$ for each $u ∈ U_M$;
    \item A morphism between two objects consists of a morphism between the underlying models and a morphism between the underlying families compatible with the underlying functions.
  \end{itemize}
  In other words, denoting by $F$ the family functor, the category of models of the reduced GAT is, up to isomorphism, the full subcategory of the comma category $F/\Fam$ spanned by morphisms that are in the canonical (splitting) cleavage of the fibration from families to sets.
\begin{example}
  \label{ex:transitive-graphs-characterisation-reduced}
 By definition, the family functor of the GAT of transitive graphs maps a transitive graph $G=(V,E,T)$ to the family $(U_G, El_G)$ defined as in \cref{ex:transitive-graphs-coreflection}.
 Instantiating the above description of the category of models of the reduced GAT, 
 an object consists of a transitive graph $G=(V,E,T)$, a family $(U',El')$, and a 
    function $f ∶ U_G → U'$ such that $El'(f(u)) = El_G(u)$ for each $u ∈ U_G$.
  Giving such a function $f$ amounts to picking an element $V'$ in $U'$, and providing 
an operation $E':V × V → U'$. In this way, the equation reformulates as $El'(V') = V$ and 
$El'(E'(a,b))=E(a,b)$. This means  that $V$ and $E$ are uniquely determined from $(U',El',V',E')$. 
Finally, the transitive operation $T$ has exactly the right type to 
make $(U',El',V',E',T)$ a model of the translated GAT. This induces a bijection between 
objects of this category and the models of the translated GAT, that extends to an isomorphism of categories.
\end{example}


\paragraph{Universal property of infinite GATs}
In the type-theoretic perspective, infinite GATs can be thought of as infinite contexts,
such as, for example, the signature of semi-simplicial types~\cite{kraus2018role}.
We show that infinite GATs enjoy a bi-initial characterisation similar to that of finite GATs: we merely have to consider small limits instead of finite limits.
Although we mainly work with finite GATs for simplicity, it is straightforward to check that all our results actually extend to infinite GATs thanks to this characterisation.

\paragraph{Initiality models of GATs}
 We prove that every GAT has an initial model, exploiting the above mentioned bi-initial characterisation.
\subsection{Related Work}
\paragraph{Formal definition of GATs}
As hinted above, we heavily rely on Uemura's bi-initial characterisation of GATs~\cite{UEMURA2022106991}. Garner~\cite{GARNER20151885} provides an alternative categorical definition of GATs, as algebras for a certain monad on a category of presheaves.

We also already mentioned related work with a type-theoretic perspective on GATs, studying variants of Cartmell's original notion.
Kovács's thesis~\cite{kovacs_thesis} develops the theory of finitary QIITs with sort equations (Chapter 4, see also \cite{DBLP:journals/pacmpl/KaposiKA19}), and
the theory of infinitary QIITs without sort equations (Chapter 5, see also \citet{DBLP:conf/lics/KovacsK20}). Infinitary QIITs allow operations with infinitely
many arguments, a feature absent from GATs. These works feature signatures that can be infinite but in some restricted way, e.g., the GAT of semi-simplicial types above mentioned is out of scope.
Finally, let us mention that we follow their convention regarding the direction of the morphisms of GATs, which differs from Cartmell (in particular, a morphism of GAT induces a functor between the categories of models in the same direction).

\paragraph{Initial Models of GATs}
In the spirit of Initial Algebra Semantics~\cite{InitialSemantics},
an initial model of a GAT can be thought of as the categorical account of the corresponding inductive type with its full dependent recursion principle
(see, e.g., \cite[Section 7.3]{DBLP:journals/pacmpl/KaposiKA19}).

The above cited works about signatures for QIITs provide proofs of existence of initial models for their respective notions of GATs.
Our own proof is actually inspired by them.
Cartmell's original paper on GATs~\cite{DBLP:journals/apal/Cartmell86}
states (without proof) that the image of any GAT morphism by the model functor has a left adjoint, thus providing an initial model of any GAT by considering GAT morphisms to the empty theory. 
This is also  a consequence of 
   Cartmell's translation from GATs to essentially algebraic theories
  (although the translation of sort equalities is not much detailed).

  There is also a line of research consisting in constructing 
  initial models by reducing some notion of GATs to a simpler one. 
  In this respect, 
  two-sortification can be understood as an extension of the reduction of mutual inductive types to indexed inductive types~\cite{DBLP:conf/fscd/KaposiR20}.
As explained above, it eliminates 
some features of GATs such as sort equations. The strict coreflection ensures the existence of the initial model of the original GAT whenever the reduced one has an initial model.
In fact, Sestini~\cite{sestini} explicitly conjectured 
the validity of two-sortification for his variant of GATs
to justify focusing on the two-sorted ones, before 
reducing them further to indexed-inductive types, in a type-metatheoretic setting.
Nonetheless, our proof of initiality is direct and does not rely on the reduction to two-sorted GATs.

 \subsection{Synopsis}
 In \cref{sec:prelim}, we recall some preliminaries. Importantly, we review Uemura's bi-initial characterisation of finite GATs.
 In \cref{sec:syntactic-translation}, we define two-sortification as a bi-initial functor from finite GATs to GATs sliced over the theory $\Fam$ of families.
 In \cref{sec:semantic-translation}, we construct the category of models of the translated GAT from
 the category of models of the original GAT, from which we deduce the coreflection between the two categories of models.
In \cref{sec:initial-models-gats}, we prove that each finite GAT has an initial model. The arguments developed there are used in 
\cref{sec:two-sortification-fully-faithful}
to show that the two-sortification functor is fully faithful.
Finally, in \cref{sec:infinite-gats}, we discuss how to extend our results to infinite GATs,
by extending the bi-initiality property of the category of finite GATs.

For the reader's convenience, we provide hyperlinks from occurrences of a notion to its definition.

A few definitions and results of this paper have been formalised in \Rocq. We provide this
formalisation as supplementary material~\cite{supplementary-material}. A detailed account
of what has been formalised can be found in the
\href{https://anonymous.4open.science/r/FormalizedTwoSortification/README.md}{\texttt{README}}
file.

 \section{Preliminaries}
 \label{sec:prelim}
 This section recalls various definitions and results that we use throughout the paper, 
 most importantly the bi-initial characterisation of GATs in \cref{sec:generalised-algebraic-theories}.

 \subsection{Exponentiable morphisms}
 \label{sec:exponentiable-morphisms}
 In this subsection, we recall the definition of exponentiable morphisms in a category with finite limits, and various basic properties and definitions related to them. 
\begin{definition}[{\cite[Definition~2.4]{UEMURA2022106991}}]
  \label{def:exp-mor}
  A \intro{cartesian category} is a category with finite limits. \AP A morphism
  $f : Y \to X$ in a \kl{cartesian category} $ℂ$ is said to be {\intro{exponentiable}} if
  the pullback functor $f^*:ℂ/X → ℂ/Y$ along $f$ has a right adjoint, where $ℂ/Z$ denotes
  the slice category over an object $Z$: objects are morphisms to $Z$ and morphisms
  between them are commuting triangles.  We call this right adjoint the \emph{pushforward}
  along $f$ and denote it with $f_*$.
\end{definition}
 
\begin{proposition}[{\cite[Corollary 1.2]{NIEFIELD1982147}}]
  \label{prop:exp-mor-P-def}
   A morphism $f : Y \to X$ is \kl{exponentiable} in the sense of \cref{def:exp-mor} if and only if the following composite has a right adjoint, where $\dom ∶ ℂ/Y → ℂ$ denotes the functor mapping a morphism to $Y$ to its domain.
    \[\begin{tikzcd}[ampersand replacement=\&]
      {\CC/X} \& {\CC/Y} \& \CC
      \arrow["{f^*}", from=1-1, to=1-2]
      \arrow["\dom", from=1-2, to=1-3]
    \end{tikzcd}\]
\end{proposition}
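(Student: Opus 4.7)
The plan is to establish both implications. For the easy direction, note that $\dom : \CC/Y \to \CC$ admits a right adjoint on its own, namely $R(Z) := (\pi_1 : Y \times Z \to Y)$: a morphism $(A \to Y) \to (Y \times Z \to Y)$ in $\CC/Y$ has its first component forced to be the structure map and is therefore determined by its second component $A \to Z$, giving the natural bijection $\Hom_{\CC/Y}(A \to Y,\, Y \times Z \to Y) \cong \Hom_\CC(A, Z)$. Hence, if $f^*$ admits a right adjoint $f_*$, then $\dom \circ f^* \dashv f_* \circ R$.

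For the non-trivial direction, assume $G \colon \CC \to \CC/X$ is a right adjoint to $\dom \circ f^*$. I would construct a right adjoint $f_*$ to $f^*$ by the explicit formula
\[
f_*(b : B \to Y) \;:=\; G(B) \times_{G(Y)} 1_{\CC/X},
\]
where $G(b) : G(B) \to G(Y)$ is the evident map, and $1_{\CC/X} \to G(Y)$ is the transpose, along $\dom \circ f^* \dashv G$, of $\mathrm{id}_Y \in \Hom_\CC(\dom(f^*(1_{\CC/X})),\, Y)$, using that $f^*$ sends the terminal object $\mathrm{id}_X : X \to X$ of $\CC/X$ to $\mathrm{id}_Y : Y \to Y$. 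The motivation is the pullback presentation
\[
\Hom_{\CC/Y}(f^*A,\, b) \;\cong\; \Hom_\CC(A \times_X Y,\, B) \times_{\Hom_\CC(A \times_X Y,\, Y)} \{\pi_Y\},
\]
which expresses that a morphism in $\CC/Y$ over $Y$ is a morphism $g : A \times_X Y \to B$ in $\CC$ satisfying $b \circ g = \pi_Y$. Transposing each factor via $\dom \circ f^* \dashv G$ turns this into a pullback of representables $\Hom_{\CC/X}(A, -)$; since $\CC/X$ inherits finite limits from $\CC$, this presheaf is representable by $G(B) \times_{G(Y)} 1_{\CC/X}$, giving the desired adjunction $f^* \dashv f_*$ by Yoneda.

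The main obstacle will be tracking the naturality of the corner $\pi_Y$ under the adjunction so that the pullback presentation of hom-sets transports cleanly. Concretely, one must verify that $\pi_Y$, viewed as an element of $\Hom_\CC(\dom(f^*(A \to X)),\, Y)$ natural in $(A \to X) \in \CC/X$, corresponds under the adjunction to a genuine morphism $1_{\CC/X} \to G(Y)$ in $\CC/X$, and that $b$ contributes the other map $G(b)$ compatibly. Once this bookkeeping is in place, functoriality of $f_*$ in $b$ and naturality of the resulting isomorphism in $A$ both follow from the universal property of the pullback together with the naturality of the original adjunction $\dom \circ f^* \dashv G$.
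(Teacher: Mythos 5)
Your proof is correct. Note that the paper does not prove this statement at all --- it is imported verbatim as Corollary~1.2 of Niefield's paper --- so there is no in-paper argument to compare against; what you have written is essentially the standard argument underlying that citation. The easy direction is fine: $\dom$ has the right adjoint $Z \mapsto (\pi_1 : Y \times Z \to Y)$, so right adjoints compose. For the converse, your formula $f_*(b) := G(B) \times_{G(Y)} 1_{\CC/X}$ is exactly the right construction, and the one piece of bookkeeping you flag does go through cleanly: since $f^*(\mathrm{id}_X) = \mathrm{id}_Y$, the projection $\pi_Y = \dom f^*(!_{A})$ is the image under $\dom \circ f^*$ of the unique map $!_A : (A \to X) \to 1_{\CC/X}$, so the triangle identity for transposes of composites gives $\widehat{\pi_Y} = \widehat{\mathrm{id}_Y} \circ {!_A}$, i.e.\ the corner element corresponds precisely to the composite $A \to 1_{\CC/X} \to G(Y)$ through your chosen point of $G(Y)$. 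Combined with the naturality of the adjunction in the codomain variable (which identifies $b \circ (-)$ with $G(b) \circ (-)$), the fibered hom-set $\Hom_{\CC/Y}(f^*A, b)$ becomes a pullback of representables, hence is represented by $G(B) \times_{G(Y)} 1_{\CC/X}$ (which exists because $\CC/X$ inherits finite limits from $\CC$), and pointwise representability yields the adjoint $f^* \dashv f_*$ with functoriality for free.
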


\begin{notation}
  We denote the right adjoint of the composite in \cref{prop:exp-mor-P-def} by $P_f : \CC \to \CC/X$. We usually conflate $P_f$ with $P_f \circ \dom : \CC \to \CC$, as it will be clear from the context which one is meant.
\end{notation}

\begin{definition}
  \label{def:preserves-pushforwards}
  \AP Let $\CC$ and $\CC'$ be categories with finite limits, with $\CC$ equipped with an \kl{exponentiable} morphisms $f : Y \to X$.
Let $F : \CC \to \CC'$ be a functor preserving pullbacks and such that $Ff$ is \kl{exponentiable}.
  We say that $F$ \emph{\intro{preserves pushforwards} along $f$} if the canonical natural transformation below right, as the mate~\cite[§2.2]{revieweltstwocats} of the isomorphism below left, is also an isomorphism.
\[
\begin{tikzcd}[ampersand replacement=\&]
	{\CC/Y} \& {\CC/X} \\
	{\CC'/FY} \& {\CC'/FX}
	\arrow["{f^*}"', to=1-1, from=1-2]
	\arrow["{F}"', from=1-1, to=2-1]
	\arrow["{F}", from=1-2, to=2-2]
	\arrow["\iso", between={0.4}{0.7}, Rightarrow, from=2-2, to=1-1]
	\arrow["{(Ff)^*}", to=2-1, from=2-2]
\end{tikzcd}
\begin{tikzcd}[ampersand replacement=\&]
	{\CC/Y} \& {\CC/X} \\
	{\CC'/FY} \& {\CC'/FX}
	\arrow["{f_*}", from=1-1, to=1-2]
	\arrow["{F}"', from=1-1, to=2-1]
	\arrow["{F}", from=1-2, to=2-2]
	\arrow["\iso", between={0.4}{0.7}, Rightarrow, from=1-2, to=2-1]
	\arrow["{(Ff)_*}"', from=2-1, to=2-2]
\end{tikzcd}
\]
\end{definition}

\begin{lemma}[{\cite[Proposition~2.12]{UEMURA2022106991}}]
  \label{lem:preservation_pushforward}\ In the context of
  \cref{def:preserves-pushforwards}, the functor $F : \CC \to \CC'$ \kl{preserves
    pushforwards} along $p$ if and only if it commutes with the associated right adjoints
  $P_p$ and $P_{Fp}$ (up to isomorphism), i.e. the canonical natural transformation from
  $F ∘ P_f$ to $P_{Ff} ∘ F$ is an isomorphism.
\end{lemma}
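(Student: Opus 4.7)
The plan is to express $P_f$ as the composite $f_* \circ R_Y$ and then, for the harder direction, exploit comonadicity to reduce arbitrary objects of $\CC/Y$ to objects in the image of $R_Y$. Since right adjoints compose in reverse, $P_f = f_* \circ R_Y$, where $R_Y : \CC \to \CC/Y$ is the right adjoint of $\dom$; explicitly, $R_Y(A) = (\pi_1 : Y \times A \to Y)$. Because $F$ preserves finite products, the canonical comparison $F \circ R_Y \to R_{FY} \circ F$ is an isomorphism, and analogously $P_{Ff} = (Ff)_* \circ R_{FY}$.

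For the forward direction, I will whisker the assumed iso $F \circ f_* \cong (Ff)_* \circ F$ on the right with $R_Y$ and compose with $F \circ R_Y \cong R_{FY} \circ F$, giving
\[ F \circ P_f \;=\; F \circ f_* \circ R_Y \;\cong\; (Ff)_* \circ F \circ R_Y \;\cong\; (Ff)_* \circ R_{FY} \circ F \;=\; P_{Ff} \circ F. \]
A short check using the mate calculus confirms that this composite agrees with the canonical comparison appearing in the statement.

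For the reverse direction, running the same chain backward shows that the canonical natural transformation $\alpha : F \circ f_* \to (Ff)_* \circ F$ is invertible on objects of the form $R_Y(A)$. The main obstacle is extending this to an arbitrary $h \in \CC/Y$. To this end, I invoke the comonadicity of the adjunction $\dom \dashv R_Y$: the category $\CC/Y$ is equivalent to the category of coalgebras for the comonad $Y \times -$ on $\CC$. Consequently, every $h : W \to Y$ is canonically the equalizer in $\CC/Y$ of a parallel pair $R_Y(W) \rightrightarrows R_Y(Y \times W)$, where one map is the comultiplication at $W$ and the other is induced by the coaction $(h, \text{id}_W) : W \to Y \times W$.

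Since $f_*$ and $(Ff)_*$ preserve equalizers as right adjoints, and $F$ preserves finite limits, applying the two composites to this equalizer presentation produces equalizer diagrams in $\CC'/FX$ whose parallel pairs are canonically isomorphic, by the hypothesis applied at $A = W$ and $A = Y \times W$. The induced isomorphism on equalizers is precisely $\alpha_h$, completing the argument.
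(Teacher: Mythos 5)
The paper does not actually prove this lemma: it is imported verbatim from Uemura (Proposition~2.12), so there is no in-paper argument to compare yours against, and I can only assess the proof on its own terms. Your overall structure is sound and essentially the standard one: the decomposition $P_f = f_* \circ R_Y$ (right adjoints composing contravariantly across $\dom\circ f^*$), the identification of the canonical comparison $F\circ P_f \to P_{Ff}\circ F$ with the pasting of the two mates, and, for the converse, the reduction of invertibility of $\alpha_h$ to invertibility at cofree objects via the split fork $h = \mathrm{eq}(R_Y W \rightrightarrows R_Y(Y\times W))$ coming from comonadicity of $\dom \dashv R_Y$. One technical point you should make explicit: this fork is split only \emph{after} applying $\dom$, not in $\CC/Y$ itself; consequently $F$ preserves its image under $(Ff)_*\circ F$ for free (split equalizers are absolute), but for the composite $F\circ f_*$ the splitting is destroyed by $f_*$, and you genuinely need $F$ to preserve (non-split) equalizers there.

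That last point exposes the one real discrepancy with the statement as given. The context of \cref{def:preserves-pushforwards} only asks that $F$ preserve \emph{pullbacks}, whereas you invoke preservation of finite products (to get $F\circ R_Y \cong R_{FY}\circ F$) and of equalizers. This is not a cosmetic strengthening: the lemma is false for a merely pullback-preserving $F$. Take $\CC=\CC'=\Set$, $F = 2\times(-)$ (which preserves pullbacks but not the terminal object) and $f=\mathrm{id}_1$; then $F$ preserves pushforwards along $f$ trivially, yet $F P_f(A)$ is $2\times A$ over $2$ while $P_{Ff}F(A)$ is $2\times 2\times A$ over $2$, so the canonical comparison is not invertible. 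So you are proving the statement that the paper actually uses (every functor to which the lemma is applied preserves finite limits), but you should say so explicitly rather than silently upgrading the hypothesis. Finally, for the converse there is a lighter route that needs only pullback preservation: by \cref{prop:exp-mor-P-def} and the adjunction, $f_*h$ is the pullback of $P_f h\colon P_f W\to P_f Y$ along the section $X\to P_f Y$ classifying $\mathrm{id}_Y$, so $F f_* h\cong (Ff)_* F h$ follows by applying $F$ to a single pullback square; your comonadic resolution is correct but does more work than necessary for that half.
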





 \subsection{Generalised Algebraic Theories}
 \label{sec:generalised-algebraic-theories}
 In this section, we define GATs and introduce
 the universal properties of the category of finite GATs that we use throughout the paper.
 \begin{definition}
  The category $\AP\intro*\FinGat$ of \emph{finite generalised algebraic theories} is the category 
  of contexts and substitutions of the type theory generated by 
  a universe $\GatSet$ of types, extensional equality types, 
  and dependent products over types in $\GatSet$.
 \end{definition}
Equality types provide, for each pair of terms $(t,u)$ of the same type $A$, a new type $t = u$ with the usual expected rules.
Dependent product over types in $\GatSet$ means that if $Γ⊢A : U$ and $B$ is a type in the context $Γ,x:A$, we can form the product type $Π_{x:A} B$ in context $Γ$ and we also have lambda-abstraction, application, as well as the usual $\beta$ and $\eta$ rules. 
A substitution from $Γ$ to $x_1 ∶ A_1, \ldots, x_n ∶ A_n$ is a list of terms $t_1, \ldots, t_n$ such that $Γ ⊢ t_i : A_i[x_j ↦ t_j ]_{1 ≤ j < i}$.

 We will not detail further this definition, as our main technical tool is the following bi-initial characterisation of GATs.

 \begin{theorem}[{\cite[Theorem~4.1]{UEMURA2022106991}}]
  \label{thm:uemura}
  \AP Let $\intro*\pGAT∶ \intro*\YGAT \to \intro*\XGAT$ be the projection morphism $(A:\GatSet, a:A) \to (A:\GatSet)$ in $\FinGat$.
  We sometimes drop the index $\mathcal{G}$ when no confusion arises.

  The category $\FinGat$, equipped with $\pGAT$, is bi-initial in the 2-category
  $\AP\intro*\CartExp$ defined as follows:
  \begin{itemize}
    \item an object is a $\CartExp$-category $ℂ$, that is, a \kl{cartesian} category $ℂ$ 
    equipped with an \kl{exponentiable} morphism $p ∶ Y → X$ in $ℂ$;
    \item a morphism, which we call a $\CartExp$-functor, from $(ℂ, p)$ to $(ℂ', p')$ is a finite-limit-preserving functor $F∶ ℂ \to 𝔻$ equipped with an isomorphism 
    $F(p) ≅ p'$, \kl{preserving pushforwards} along $p$ in the sense of  \cref{def:preserves-pushforwards};
    \item a 2-cell between two morphisms is a natural transformation between the underlying functors.
  \end{itemize}
  
\end{theorem}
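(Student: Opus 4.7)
The plan is to establish the bi-initial characterisation by constructing, for every object $(\CC, p' : Y \to X)$ of $\CartExp$, a $\CartExp$-functor $F : \FinGat \to \CC$ via the standard categorical semantics of dependent type theory, and then arguing uniqueness up to unique invertible 2-cell by induction on the syntax generating $\FinGat$.

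For existence, I would interpret the generating type theory of $\FinGat$ into $\CC$. A context $\Gamma$ is sent to an object $F\Gamma$; a type $\Gamma \vdash A$ to a morphism into $F\Gamma$; context extension $\Gamma.A$ is the domain of that morphism. The sort $\GatSet$ is interpreted as $X$, so a type $\Gamma \vdash A : \GatSet$ becomes a morphism $a : F\Gamma \to X$, and its decoding $\El\,a$ is the pullback $a^*p' : a^*Y \to F\Gamma$. The universal element $\pGAT$ is consequently sent (up to isomorphism) to $p'$. Extensional equality types are interpreted as equalisers, which exist since $\CC$ has finite limits. Dependent products $\Pi_{x:A} B$ with $A : \GatSet$ are interpreted via the pushforward along $a^*p'$, which exists because exponentiable morphisms are stable under pullback in a cartesian category. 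The assignment extends to substitutions in the standard way, and one verifies that the interpretation respects the judgemental equalities, producing a finite-limit-preserving functor $F$ with $F(\pGAT) \cong p'$. Preservation of pushforwards along $\pGAT$ is then built in: such pushforwards correspond to $\Pi$-types over $\GatSet$-sorted variables, which $F$ sends to pushforwards in $\CC$, and \cref{lem:preservation_pushforward} repackages this into the mate isomorphism required by \cref{def:preserves-pushforwards}.

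For uniqueness, I would show by induction on syntactic structure that any other $\CartExp$-functor $G : \FinGat \to \CC$ is canonically isomorphic to $F$. Since $G$ sends $\pGAT$ to something isomorphic to $p'$, preserves finite limits, and preserves pushforwards along $\pGAT$, each type former's interpretation under $G$ must agree with $F$'s up to a forced canonical isomorphism, and these isos assemble naturally into a unique invertible 2-cell $F \Rightarrow G$; its uniqueness follows from the canonicity of each component.

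The main obstacle I anticipate is the classical coherence problem: the syntactic type theory enforces strictly associative substitution, while the categorical interpretation is only pseudo-functorial. This is precisely what the bi-initial (rather than strict) formulation sidesteps, by allowing the interpretation to commute with substitution only up to canonical isomorphism; the compatibility between these isomorphisms and $\Pi$-types is exactly the Beck--Chevalley-style mate condition encoded in \cref{def:preserves-pushforwards}. A secondary technical point, arising at the level of the induction on contexts, is ensuring that all the chosen pullbacks and pushforwards are coherent across substitutions, but this too is absorbed by the 2-categorical formulation.
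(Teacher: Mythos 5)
The paper does not prove this statement: it is imported verbatim as Uemura's Theorem~4.1, and the surrounding text explicitly treats it as the black-box starting point of the whole development (``our main technical tool is the following bi-initial characterisation''). So there is no in-paper argument to compare yours against; what you have written is an attempted reconstruction of the proof of the cited result.

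As such a reconstruction, your overall strategy --- interpret the generating type theory into $(\CC,p')$, sending $\GatSet$ to $X$, $\El$ to pullback of $p'$, equality types to equalisers and $\Pi$ over $\GatSet$-sorted domains to pushforwards, then argue uniqueness on generators --- has the right shape. But two steps are genuinely underdeveloped. First, the coherence problem is not ``absorbed by the 2-categorical formulation'': you cannot run an induction on raw syntax directly into a structure where substitution is only pseudo-functorial, because the induction hypotheses at the judgemental equalities require strict equalities of interpretations. The standard fix is to first establish strict initiality of the syntax among \emph{split} structures (categories with families, or categories with representable maps, carrying the relevant type formers) and then to show that every finitely complete category with a chosen exponentiable morphism supports such a split structure (e.g.\ via a local-universes or adjoint-splitting construction); bi-initiality in $\CartExp$ is then derived from that. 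Your sketch skips this strictification entirely, and it is where most of the work lives. Second, bi-initiality requires not merely that any two $\CartExp$-functors $F,G$ out of $\FinGat$ are isomorphic, but that there is a \emph{unique} 2-cell $F\Rightarrow G$, where 2-cells are arbitrary natural transformations. Establishing that such a transformation is determined by its components at $\XGAT$ and $\YGAT$ --- because every context is generated from these under finite limits and the pushforward construction $P$ --- is a substantive lemma, of exactly the kind the present paper has to isolate and prove separately as \cref{thm:universal-property-natural-transformation}; ``canonicity of each component'' does not by itself deliver it.
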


Bi-initiality of $\FinGat$ as stated by \cref{thm:uemura} means that given any object $(ℂ, p)$ in $\CartExp$, there exists a $\CartExp$-functor $\FinGat → ℂ$, and moreover, any pair of such functors are naturally isomorphic, for a unique isomorphism.
\begin{definition}
  We call any $\CartExp$-functor $\FinGat → (ℂ,p)$ a \emph{bi-initial ($\CartExp$-)functor}, and sometimes even calls it \emph{the} bi-initial $\CartExp$ functor to $ℂ$ although it is only unique up to isomorphism.
\end{definition}


We explain later in this section what the pushforward along the exponential morphism $\pGAT$ is. For now, 
let us note that we can always strictify a bi-initial functor.
\begin{proposition}
  \label{prop:strictification}
   We say that a $\CartExp$-functor $F ∶(ℂ,Y\xrightarrow{p}X) → (ℂ',Y'\xrightarrow{p'}X')$ is \AP\intro{strict} if the isomorphism $F(p) ≅ p'$ is an identity.
   Given any $\CartExp$-category $ℂ$ with exponential arrow $p ∶ Y → X$ such that $X ≠ Y$, 
   any $\CartExp$ morphism from $ℂ$ is isomorphic to a strict one.
   In particular, there exists a strict morphism 
   from $\FinGat$ to any $\CartExp$-category.
\end{proposition}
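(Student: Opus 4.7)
The plan is to produce a strictification by transporting $F$ along the given isomorphism on the two objects $X$ and $Y$, using the hypothesis $X \neq Y$ only to ensure that this redefinition is unambiguous at the level of objects.

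First, I would unpack the data of the $\CartExp$-functor $F:(ℂ,p)\to(ℂ',p')$: the isomorphism $F(p)\cong p'$ in the arrow category of $ℂ'$ consists of two isomorphisms $\phi_X : FX \to X'$ and $\phi_Y : FY \to Y'$ satisfying $\phi_X \circ F(p) = p' \circ \phi_Y$. I would then define the object part of a new functor $F' : ℂ \to ℂ'$ by setting $F'(X) := X'$, $F'(Y) := Y'$, and $F'(Z) := F(Z)$ for every other object $Z$. This is unambiguous precisely because $X \neq Y$ by hypothesis.

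Next, I would introduce a family of isomorphisms $\psi_Z : F(Z) \to F'(Z)$ defined by $\psi_X := \phi_X$, $\psi_Y := \phi_Y$, and $\psi_Z := \mathrm{id}_{F(Z)}$ elsewhere. On morphisms $f : Z \to W$, I would set
\[
F'(f) := \psi_W \circ F(f) \circ \psi_Z^{-1}.
\]
Functoriality of $F'$ follows from functoriality of $F$ by a routine telescoping of the $\psi$'s, and by construction the $\psi_Z$ assemble into a natural isomorphism $\psi : F \Rightarrow F'$. The strictness equation $F'(p) = p'$ is then immediate: $F'(p) = \psi_X \circ F(p) \circ \psi_Y^{-1} = \phi_X \circ F(p) \circ \phi_Y^{-1} = p'$, the last step using the commuting square defining $\phi$ as an isomorphism of arrows.

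It remains to check that $F'$ is a $\CartExp$-functor, i.e.\ that it preserves finite limits and pushforwards along $p$. Both properties are invariant under natural isomorphism: finite-limit preservation transports along $\psi$ directly, and the mate construction that defines \kl{preservation of pushforwards} in \cref{def:preserves-pushforwards} is similarly stable under replacing $F$ by a naturally isomorphic functor (the canonical comparison 2-cells for $F'$ are obtained from those of $F$ by whiskering with $\psi$ and $\psi^{-1}$). Thus $F'$ is a strict $\CartExp$-functor isomorphic to $F$. Finally, for the last clause, bi-initiality of $\FinGat$ (\cref{thm:uemura}) produces at least one $\CartExp$-functor from $\FinGat$ to any $(ℂ',p')$, and since $\YGAT = (A:\GatSet, a:A)$ and $\XGAT=(A:\GatSet)$ are manifestly distinct contexts, the hypothesis $X\neq Y$ holds in $\FinGat$, so applying the strictification above yields a strict morphism out of $\FinGat$.

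The only slightly delicate point, and the one I would treat most carefully in the full proof, is checking that the mate of the preservation-of-pushforwards isomorphism for $F$ transports to the corresponding mate for $F'$: this is not conceptually hard, but it requires spelling out the units/counits of $p^*\dashv p_*$ and $(F'p)^*\dashv (F'p)_*$ and verifying that $\psi$ conjugates them coherently. Everything else is a transport along $\psi$ that goes through essentially by definition.
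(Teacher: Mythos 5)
The paper states this proposition without proof, treating it as a routine transport-of-structure fact; your argument is exactly the standard one that fills that gap and is correct, including the correct identification of where the hypothesis $X \neq Y$ is used (to make the case-split on objects well-defined) and the correct flagging of the mate-compatibility check for \kl{preservation of pushforwards} as the only point needing care. Nothing to object to.
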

As a first example of application of bi-initiality, we recover the functor that computes the category of models 
of a given GAT. 
\begin{example}
  \label{exa:CAT}
  \AP Consider the category $\intro*\CAT$ of locally small categories, equipped with 
  the (\kl{exponentiable}) projection \AP$\intro*\PtdSet → 𝐒𝐞𝐭$ from pointed sets to sets. The pullback along 
  that functor maps a functor $F : 𝔻 → 𝐒𝐞𝐭$ to its category of elements $∫F$.
  The right adjoint maps a category $𝔻$ to the \emph{family fibration} $\FamCat{𝔻} →𝐒𝐞𝐭$, where $\intro*\FamCat{𝔻}$ is the category of (set-indexed) families of $𝔻$: an object is a pair of a set $A$ and a family $(d_a)_{a ∈ A}$ of objects of $𝔻$ indexed by $A$~\cite[Definition 1.2.1]{DBLP:books/daglib/0023251},
  which we can equivalently see as a functor from $A$ (seen as a discrete category) to $𝔻$.
  A morphism from $F ∶ A → 𝔻$ to $G ∶ B → 𝔻$ consists of a function $H ∶ A → B$ and a family of $𝔻$-morphisms $(h_a ∶ F a →  G Ha)_{a ∈ ob\, A}$.
\end{example}
\begin{notation}
  \label{not:FamSet}
  We sometimes denote $\FamCat{𝐒𝐞𝐭}$ by \AP $\intro*\FamS$: this is the category of
  families $(El(A))_{A ∈ U}$ of sets indexed by some set $U$.
\end{notation}
\begin{definition}
  \AP By \cref{prop:strictification}, we get a \kl{strict} $\CartExp$-functor from $\FinGat$ to $\CAT$
  which we call the \intro{model functor}. We denote the image of a theory $Γ$ (resp. GAT morphism $σ$) by $\intro*\model{Γ}$ (resp. $\model{ σ }$). 
  We say that a category $ℂ$ \intro{is specified} by a theory $Γ$
  if $ℂ$ is isomorphic to $\model{Γ}$.
\end{definition}

We are now in better position to understand what the right adjoint of the pullback along $\pGAT$ is, as formally described in 
\cite[Proposition 3.34]{UEMURA2022106991}. The guiding intuition is that ${ P Γ }$ \kl{specifies} $\FamCat{\model{Γ}}$, since the \kl{model functor} \kl{preserves pushforwards} along $\pGAT$. 
We provide a few examples in \cref{tab:pushforward-gats}.
\begin{table}
  
  \caption{Examples of GATs and their images by $P$\label{tab:pushforward-gats}}
  \[
  \begin{array}{c|c}
    Γ & P Γ \text{ specifying families of $\model{Γ}$} \\
    \hline
    \GVar{B}:\GatSet & \GVar{A}:\GatSet, \GVar{B}:A\ra\GatSet \\
    \hline
    \GVar{B} : \GatSet, \GVar{b} : B & 
    \begin{array}{c}
    \GVar{A} : \GatSet,\\  
    \GVar{B} : A → \GatSet, \GVar{b} : ∏ a : A. B\, a
    \end{array}
      \\
    \hline
      \text{(Transitive graphs)} & \GVar{A}:\GatSet  \\
      \GVar{V} :\GatSet, & \GVar{V} : A → \GatSet \\
      \GVar{E}: V\ra V\ra\GatSet, & \GVar{E} : ∏ a : A. V\, a → V\, a → \GatSet \\
      \GVar{T} : ∏ v₁, v₂ ,v₃ : V, & \GVar{T} : ∏ a:A. ∏ v₁,v₂ ,v₃ : V\, a, \\
       E\, v₁\, v₂ → E\, v₂\, v₃ → E\, v₁\, v₃ &  E\, a\, v₁\, v₂ → E\, a\, v₂\, v₃ → E\, a\, v₁\, v₃
  \end{array}
  \]
\end{table}



We end this section with a useful result for constructing or showing 
equality of natural transformations between functors from $\FinGat$, used throughout the paper.

\begin{theorem}
  \label{thm:universal-property-natural-transformation}
  Let $\CC$ be a $\CartExp$-category with chosen exponentiable morphism $p_\CC : Y_\CC \to X_\CC$. Let $F: \FinGat \to \CC$ be a functor preserving pullbacks along\footnote{In practice, we will always apply this result with  functors $F$ preserving any pullback.} $\pGAT$,
  and $G : \FinGat \to \CC$ be a $\CartExp$-functor.
  Given a pullback square in $\CC$ as follows:
  \[\begin{tikzcd}[ampersand replacement=\&]
	{F\YGAT} \& {G \YGAT} \\
	{F\XGAT} \& {G \XGAT},
	\arrow["x",from=1-1, to=1-2]
	\arrow["F\pGAT"',from=1-1, to=2-1]
	\arrow["\lrcorner"{anchor=center, pos=0.125}, draw=none, from=1-1, to=2-2]
	\arrow["G \pGAT",from=1-2, to=2-2]
	\arrow["y"',from=2-1, to=2-2]
  \end{tikzcd}\]
  there exists a unique natural transformation $\alpha_{(x,y)}$ between $F $ and $G$:
  \[
  \begin{tikzpicture}
\draw[white,-,curve={ratio=0.2}, line width=0.20833333333333334em] (15.859375em,-9.208333333333334em) .. controls (19.006944444444443em,-10.32638888888889em) and (22.105902777777775em,-10.20486111111111em) .. (25.15625em,-8.843750000000002em);
\draw[black,->, curve={ratio=0.2}, ] (15.859375em,-9.208333333333334em) .. controls (19.006944444444443em,-10.32638888888889em) and (22.105902777777775em,-10.20486111111111em) .. (25.15625em,-8.843750000000002em);
\draw[white,-,curve={ratio=-0.2}, line width=0.20833333333333334em] (15.859375em,-7.979166666666666em) .. controls (19.006944444444443em,-6.861111111111111em) and (22.105902777777775em,-6.982638888888889em) .. (25.15625em,-8.34375em);
\draw[black,->, curve={ratio=-0.2}, ] (15.859375em,-7.979166666666666em) .. controls (19.006944444444443em,-6.861111111111111em) and (22.105902777777775em,-6.982638888888889em) .. (25.15625em,-8.34375em);
\draw[white,identity,line width=0.20833333333333334em] (20.544270833333332em,-7.752604166666667em) .. controls (20.544270833333332em,-8.313368055555557em) and (20.544270833333332em,-8.874131944444445em) .. (20.544270833333332em,-9.434895833333334em);
\draw[black,->, cell=0.05, ] (20.544270833333332em,-7.752604166666667em) .. controls (20.544270833333332em,-8.313368055555557em) and (20.544270833333332em,-8.874131944444445em) .. (20.544270833333332em,-9.434895833333334em);
\node at (14.322916666666668em,-8.59375em) {$\FinGat$} ;
\node at (25.78125em,-8.59375em) {$ℂ$} ;
\node[scale=0.7] at (20.565282485291903em,-10.491526291610175em) {$G$} ;
\node[scale=0.7] at (20.564291369633484em,-6.721247157679614em) {$F$} ;
\node[scale=0.7] at (21.54022432139988em,-8.59375em) {$ α_{(x,y)}$} ;
\end{tikzpicture}
  \]
  such that the given pullback is the naturality square for $\alpha_{(x,y)}$ at $\pGAT$.
\end{theorem}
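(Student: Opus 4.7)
The plan is to reduce this to the bi-initiality of $\FinGat$ given by \cref{thm:uemura}, applied to an auxiliary $\CartExp$-category that encodes the pullback square together with both $F$ and $G$. Specifically, I would form the arrow category $\CC^{[1]}$ (whose finite limits are computed pointwise) and regard the given pullback square as a morphism $(x,y) \colon F\pGAT \to G\pGAT$ in $\CC^{[1]}$. The first technical step is to verify that $(x,y)$ is \kl{exponentiable} in $\CC^{[1]}$: this is where the pullback hypothesis on the square enters essentially, and I would check it via \cref{prop:exp-mor-P-def}, assembling the required right adjoint out of the pushforwards of $F\pGAT$ and $G\pGAT$ along the square.

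With $(\CC^{[1]}, (x,y))$ now a $\CartExp$-category, bi-initiality produces a $\CartExp$-functor $H \colon \FinGat \to \CC^{[1]}$ with $H(\pGAT) \cong (x,y)$; using \cref{prop:strictification} I would arrange this isomorphism to be an identity. Unpacking, $H$ is exactly a natural transformation $\alpha \colon d_0 H \Rightarrow d_1 H$ between two functors $\FinGat \to \CC$, whose component at $\pGAT$ is the given square. Since $d_1 \colon (\CC^{[1]}, (x,y)) \to (\CC, G\pGAT)$ is itself a $\CartExp$-functor, $d_1 H$ is a $\CartExp$-functor sending $\pGAT$ to $G\pGAT$, and bi-initiality identifies it with $G$ by a unique natural isomorphism. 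Transporting $\alpha$ along this identification, together with an analogous identification of $d_0 H$ with $F$, produces the desired $\alpha_{(x,y)} \colon F \Rightarrow G$.

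For uniqueness, I would observe that any natural transformation $\beta \colon F \Rightarrow G$ with $\beta_\YGAT = x$ and $\beta_\XGAT = y$ itself assembles into a functor $H_\beta \colon \FinGat \to \CC^{[1]}$ with $H_\beta(\pGAT) = (x,y)$. Provided $H_\beta$ is a $\CartExp$-functor --- which follows from the $\CartExp$-structure of $G$ on the codomain side, the pullback-preservation hypothesis on $F$ at $\pGAT$ on the domain side, and the fact that the naturality square at $\pGAT$ is itself a pullback --- the uniqueness clause of \cref{thm:uemura} forces $H_\beta$ to coincide with $H$, and hence $\beta = \alpha_{(x,y)}$.

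The main obstacle I anticipate is the asymmetry between $F$ and $G$: $G$ is a full $\CartExp$-functor, so bi-initiality handles it directly, whereas $F$ is only assumed to preserve pullbacks along $\pGAT$, so identifying $d_0 H$ with $F$ is subtler. I expect to resolve this either by restricting to a subcategory of $\CC^{[1]}$ whose objects at the domain are pre-constrained to be $F$-images, thereby allowing a strictification $d_0 H = F$ on the nose, or by arguing that each context in $\FinGat$ is generated from $\XGAT, \YGAT, \pGAT$ by finite limits, so that a natural transformation into $G$ is determined by its values at the generators together with the naturality square at $\pGAT$ already provided by the hypothesis. Checking that the restricted arrow category retains enough pushforwards to remain $\CartExp$ will be the delicate point in the first approach.
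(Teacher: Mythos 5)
Your overall strategy --- encode the given square as the chosen exponentiable morphism of an auxiliary gluing category and invoke bi-initiality of $\FinGat$ --- is exactly the paper's, and your ``first approach'' to the asymmetry obstacle (pre-constraining the domain side to be an $F$-image) is precisely what the paper does: it works in the comma category $F/\CC$, whose objects are triples $(\Gamma \in \FinGat,\ C \in \CC,\ F\Gamma \to C)$, equips it with a $\CartExp$-structure whose exponentiable morphism is the given pullback square, and obtains $\alpha_{(x,y)}$ from the bi-initial $\CartExp$-functor into $F/\CC$ together with the universal property of the comma category.

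As written, though, the route through the plain arrow category $\CC^{[1]}$ has a genuine gap beyond the one you flag, and it has the same source: $F$ is only assumed to preserve pullbacks along $\pGAT$, not finite limits. Finite limits in $\CC^{[1]}$ are computed pointwise, so the functor $H_\beta \colon \Gamma \mapsto (\beta_\Gamma \colon F\Gamma \to G\Gamma)$ preserves finite limits only if $F$ does; hence $H_\beta$ is in general \emph{not} a $\CartExp$-functor into $\CC^{[1]}$, and the uniqueness clause of \cref{thm:uemura} cannot be applied to it --- so your uniqueness argument does not go through in $\CC^{[1]}$. The same defect blocks the existence side (as you anticipate: $d_0$ is not a $\CartExp$-functor, and $F\pGAT$ need not even be exponentiable in $\CC$). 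Both problems are cured at once by the comma category: there, finite limits are created by the projections to $\FinGat \times \CC$, so only $G$ (not $F$) must preserve them, and the $\FinGat$-component is carried along explicitly, which is what lets one strictify the first projection of $H$ to the identity. A further point to correct: the pushforward in the auxiliary category is not assembled from ``the pushforwards of $F\pGAT$ and $G\pGAT$'' --- there is no pushforward along $F\pGAT$ --- but from the pushforward $P$ in $\FinGat$ on the domain component (this is exactly where the hypotheses that $F$ preserves pullbacks along $\pGAT$ and that the given square is a pullback are used) and from $P_{p_\CC}$ on the codomain component. Your fallback ``second approach'' (a generators-and-finite-limits density argument) is too sketchy to assess and is not the paper's route.
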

\begin{proof}
 A pullback square as above induces a $\CartExp$-structure on the comma category $F/\CC$, see \cref{sec:proof-comma-exp-structure}.
  The natural transformation $\alpha_{(x,y)}$ is then given 
  by the initial $\CartExp$-functor to $F/\CC$, exploiting 
  the universal property of the comma category~\cite[Proposition 1.6.3]{BorceuxI}.
\end{proof}

 \section{Syntactic Translation}
  \label{sec:syntactic-translation}
  In this section, we define the two-sortification functor $\TW$ as a (bi-initial) $\CartExp$-functor mapping a GAT to its translation
  equipped with its projection to the GAT of families.

  The construction is given in \cref{sec:two-sortification-initiality}.
  In \cref{sec:two-sortification-family-gats},
  we show that the translated GAT is indeed a \emph{family GAT}, i.e. a GAT whose only sorts are (essentially) $\GVar{U}:\GatSet$ and $\GVar{El}:\U\ra\GatSet$.
  Finally, in \cref{sec:gat-morphism-two-sortification-original}, for each theory $Γ$, we construct a GAT morphism $\dom(\TW Γ) → Γ$, 
  which we call the \emph{coreflector morphism} of $Γ$: we will show later that its image 
  by the model functor is the right adjoint of the coreflection between $\model{Γ}$ and $\model{\TW Γ}$.
  \subsection{Two-sortification by Bi-Initiality}
  \label{sec:two-sortification-initiality}
  The goal of this section is to define the two-sortification functor.
  \begin{notation}
  \label{not:FamGat}
  Overloading \cref{not:FamSet},
  we denote the theory $P \XGAT ≅ (\U:\GatSet, \El:\U\ra\GatSet)$ of families by \AP $\intro*\FamG$.
\end{notation}
  We define two-sortification as a $\CartExp$-functor from $\FinGat$ to
  the slice category $\FinGat/\FamG$, whose objects are GATs with a morphism to $\FamG$,
  and morphisms are morphisms between the underlying GATs compatible with the morphisms to $\FamG$. To that end, 
  by \cref{thm:uemura}, we merely need to equip $\FinGat/\FamG$ with a suitable $\CartExp$-structure.
  We start with a general fact regarding limits in slice categories.
\begin{proposition}
  $\FinGat/\FamG$ is a \kl{cartesian} category: the identity morphism
  on $\FamG$ is terminal in $\FinGat/\FamG$, and pullbacks are computed as in $\FinGat$.
\end{proposition}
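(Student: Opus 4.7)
The plan is to verify the two claimed facts, namely that $\mathrm{id}_{\FamG}$ is terminal in the slice and that pullbacks in the slice are inherited from $\FinGat$, and then to observe that these together entail that $\FinGat/\FamG$ has all finite limits. Both are instances of the general principle that a slice category $\CC/X$ inherits all connected limits (and a terminal object, provided by $\mathrm{id}_X$) from $\CC$ whenever $\CC$ has them; we only need to spell this out in our specific case.

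For terminality, I would argue directly: an object of $\FinGat/\FamG$ is a morphism $\sigma : \Gamma \to \FamG$, and a morphism in the slice from $\sigma$ to $\mathrm{id}_{\FamG}$ is a GAT-morphism $f : \Gamma \to \FamG$ with $\mathrm{id}_{\FamG} \circ f = \sigma$. The triangle forces $f = \sigma$, so there is exactly one such morphism, establishing that $\mathrm{id}_{\FamG}$ is terminal. (Recall from \cref{sec:generalised-algebraic-theories} that $\FinGat$ is the category of contexts and substitutions of a type theory and therefore has finite limits, so the ambient category is indeed cartesian.)

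For pullbacks, given a cospan $\sigma_1 \to \sigma_3 \leftarrow \sigma_2$ in $\FinGat/\FamG$, unfold it as a cospan $\Gamma_1 \xrightarrow{g_1} \Gamma_3 \xleftarrow{g_2} \Gamma_2$ in $\FinGat$ whose legs are compatible with the maps to $\FamG$. Let $P$ be the pullback in $\FinGat$, with projections $\pi_i : P \to \Gamma_i$; since $\sigma_1 \circ g_1 = \sigma_2 \circ g_2$, the composite $\sigma_i \circ \pi_i$ is independent of $i$ and defines a canonical morphism $\tau : P \to \FamG$, turning $P$ into an object of the slice over $\FamG$ and each $\pi_i$ into a slice morphism. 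A routine check shows that this square satisfies the universal property of a pullback in $\FinGat/\FamG$: given a competing cone in the slice, its universal mediating map in $\FinGat$ is automatically compatible with the projections to $\FamG$ (it must be, since post-composing with either $\sigma_i \circ \pi_i$ gives the required map to $\FamG$).

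The only mildly delicate point is bookkeeping in the pullback step, checking that the mediating map built in $\FinGat$ lifts uniquely to the slice; this is straightforward because the terminality of $\mathrm{id}_{\FamG}$ forces the triangle over $\FamG$ to commute automatically once one of the two triangles over the $\Gamma_i$'s does. Together with the terminal object identified above, this gives pullbacks and a terminal object in $\FinGat/\FamG$, hence all finite limits, so $\FinGat/\FamG$ is \kl{cartesian}.
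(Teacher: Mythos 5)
Your proof is correct and matches the paper's reasoning exactly: the paper states this proposition without proof, treating it as an instance of the general fact that a slice category $\CC/X$ has $\mathrm{id}_X$ as a terminal object and inherits pullbacks (hence all finite limits) from $\CC$, which is precisely what you spell out. One trivial notational slip: the compatibility condition on the cospan legs should read $\sigma_3 \circ g_i = \sigma_i$ (so that $\sigma_1 \circ \pi_1 = \sigma_3 \circ g_1 \circ \pi_1 = \sigma_3 \circ g_2 \circ \pi_2 = \sigma_2 \circ \pi_2$), rather than ``$\sigma_1 \circ g_1 = \sigma_2 \circ g_2$'', which does not typecheck; the argument is otherwise unaffected.
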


 We now need to choose an \kl{exponentiable} morphism in the slice category $\FinGat / \FamG$.
 The choice is dictated by what we expect the two-sortification of 
  $\pGAT∶ (\GVar{A}:\GatSet, \GVar{a}:A) \to (\GVar{A}:\GatSet)$ to be:
  the morphism $(\GVar{U}:\GatSet, \GVar{El}:\U\ra\GatSet, \GVar{A} : \U, \GVar{a} : \El\,A ) → (\GVar{U}:\GatSet,\GVar{El}:\U\ra \GatSet,\GVar{A}:\U)$.
  Let us introduce some notation to simplify the reading of such GATs.
  \begin{notation}
We denote a generalised algebraic theory $(\GVar{U}:\GatSet, \GVar{El}:\U\ra\GatSet, Γ)$ over $(\GVar{U}:\GatSet, \GVar{El}:\U\ra\GatSet)$ as
\AP $\intro*\GatFam{Γ}$, leaving the projection to $\FamG$ implicit.
For example, we sometimes denote $\FamG$ by $\GatFam{()}$ and the theory $(\GVar{U}:\GatSet, \GVar{El}:\U\ra\GatSet, \GVar{A} : \U )$ by $\GatFam{(\GVar{A} : \U)}$.
\end{notation}
To complete the $\CartExp$-structure on $\FinGat/\FamG$, we need to prove that the
morphism $\GatFam{(\GVar{A}:\U, \GVar{a}: \El\, A)} → \GatFam{(\GVar{A}:\U)}$ is
\kl{exponentiable}. For that matter, it will be useful to re-construct it using the
available operations in a $\CartExp$-category (pullbacks and pushforwards along the
\kl{exponentiable} morphism), since then we can rely on standard results about
\kl{exponentiable} morphisms.
  \begin{proposition}
    \label{prop:p2-construction}
    Let $ℂ$ be $\CartExp$-category with \kl{exponentiable} morphism $p ∶ Y → X$.
    We define $\pp∶\YY →\XX $ by the below right pullback, where
    \begin{itemize}
      \item $\XX$ is defined as the below left pullback;
      \item $P$ is the right adjoint to the pullback functor $p^*$ along $p$;
      \item $ \epsilon ∶ \XX → X$ is the counit of the adjunction $p^* ⊣ P$.
    \end{itemize}
    \begin{equation}
      \label{eq:p2-diagram}
    \begin{tikzpicture}
\draw[white,-,line width=0.20833333333333334em] (10.416666666666668em,-2.9427083333333335em) .. controls (10.416666666666666em,-3.75em) and (10.416666666666666em,-4.557291666666667em) .. (10.416666666666668em,-5.364583333333334em);
\draw[black,->, ] (10.416666666666668em,-2.9427083333333335em) .. controls (10.416666666666666em,-3.75em) and (10.416666666666666em,-4.557291666666667em) .. (10.416666666666668em,-5.364583333333334em);
\draw[white,-,line width=0.20833333333333334em] (11.145833333333334em,-2.0833333333333335em) .. controls (12.065972222222221em,-2.0833333333333335em) and (12.98611111111111em,-2.0833333333333335em) .. (13.90625em,-2.0833333333333335em);
\draw[black,->, ] (11.145833333333334em,-2.0833333333333335em) .. controls (12.065972222222221em,-2.0833333333333335em) and (12.98611111111111em,-2.0833333333333335em) .. (13.90625em,-2.0833333333333335em);
\draw[white,-,line width=0.20833333333333334em] (14.583333333333334em,-2.96875em) .. controls (14.583333333333334em,-3.767361111111111em) and (14.583333333333334em,-4.565972222222222em) .. (14.583333333333334em,-5.364583333333334em);
\draw[black,->, ] (14.583333333333334em,-2.96875em) .. controls (14.583333333333334em,-3.767361111111111em) and (14.583333333333334em,-4.565972222222222em) .. (14.583333333333334em,-5.364583333333334em);
\draw[white,-,line width=0.20833333333333334em] (11.458333333333334em,-6.25em) .. controls (12.256944444444445em,-6.25em) and (13.055555555555555em,-6.25em) .. (13.854166666666668em,-6.25em);
\draw[black,->, ] (11.458333333333334em,-6.25em) .. controls (12.256944444444445em,-6.25em) and (13.055555555555555em,-6.25em) .. (13.854166666666668em,-6.25em);
\draw[white,-,line width=0.20833333333333334em] (18.75em,-2.9427083333333335em) .. controls (18.75em,-3.7586805555555554em) and (18.75em,-4.574652777777778em) .. (18.75em,-5.390625em);
\draw[black,->, dashed, ] (18.75em,-2.9427083333333335em) .. controls (18.75em,-3.7586805555555554em) and (18.75em,-4.574652777777778em) .. (18.75em,-5.390625em);
\draw[white,-,line width=0.20833333333333334em] (19.479166666666668em,-6.25em) .. controls (20.381944444444443em,-6.25em) and (21.28472222222222em,-6.25em) .. (22.1875em,-6.25em);
\draw[black,->, ] (19.479166666666668em,-6.25em) .. controls (20.381944444444443em,-6.25em) and (21.28472222222222em,-6.25em) .. (22.1875em,-6.25em);
\draw[white,-,line width=0.20833333333333334em] (22.916666666666668em,-2.96875em) .. controls (22.916666666666668em,-3.767361111111111em) and (22.916666666666668em,-4.565972222222222em) .. (22.916666666666668em,-5.364583333333334em);
\draw[black,->, ] (22.916666666666668em,-2.96875em) .. controls (22.916666666666668em,-3.767361111111111em) and (22.916666666666668em,-4.565972222222222em) .. (22.916666666666668em,-5.364583333333334em);
\draw[white,-,line width=0.20833333333333334em] (19.427083333333336em,-2.0833333333333335em) .. controls (20.364583333333332em,-2.0833333333333335em) and (21.302083333333332em,-2.0833333333333335em) .. (22.239583333333336em,-2.0833333333333335em);
\draw[black,->, ] (19.427083333333336em,-2.0833333333333335em) .. controls (20.364583333333332em,-2.0833333333333335em) and (21.302083333333332em,-2.0833333333333335em) .. (22.239583333333336em,-2.0833333333333335em);
\draw[white] (10.677083333333334em,-3.463541666666667em) -- (11.666666666666668em,-3.463541666666667em);
\draw[white] (11.666666666666668em,-3.463541666666667em) -- (11.666666666666668em,-2.34375em);
\draw[black] (10.677083333333334em,-3.463541666666667em) -- (11.666666666666668em,-3.463541666666667em);
\draw[black] (11.666666666666668em,-3.463541666666667em) -- (11.666666666666668em,-2.34375em);
\draw[white] (19.010416666666668em,-3.463541666666667em) -- (19.947916666666668em,-3.463541666666667em);
\draw[white] (19.947916666666668em,-3.463541666666667em) -- (19.947916666666668em,-2.34375em);
\draw[black] (19.010416666666668em,-3.463541666666667em) -- (19.947916666666668em,-3.463541666666667em);
\draw[black] (19.947916666666668em,-3.463541666666667em) -- (19.947916666666668em,-2.34375em);
\node at (10.416666666666668em,-2.0833333333333335em) {$\XX$} ;
\node at (10.416666666666668em,-6.25em) {$PX$} ;
\node at (14.583333333333334em,-2.0833333333333335em) {$Y$} ;
\node at (14.583333333333334em,-6.25em) {$X$} ;
\node at (18.75em,-2.0833333333333335em) {$\YY$} ;
\node at (18.75em,-6.25em) {$\XX$} ;
\node at (22.916666666666668em,-6.25em) {$X$} ;
\node at (22.916666666666668em,-2.0833333333333335em) {$Y$} ;
\node[scale=0.7] at (14.859137276810568em,-4.166666666666667em) {$p$} ;
\node[scale=0.7] at (18.474135405809314em,-4.166666666666667em) {$\pp$} ;
\node[scale=0.7] at (20.833333333333336em,-6.779293245676616em) {$\epsilon $} ;
\node[scale=0.7] at (23.192470610143904em,-4.166666666666667em) {$p$} ;
\end{tikzpicture}
    \end{equation}
Taking $(ℂ,p) = (\FinGat, \pGAT)$, we recover:
\begin{itemize}
  \item $\GatFam{(\GVar{A}:\U)} → \GatFam{()}$ as the morphism $\XX → PX$ on the left;
  \item  $\GatFam{(\GVar{A}:\U, \GVar{a}: \El\, A)} → \GatFam{(\GVar{A}:\U)}$ as $\pp$.
\end{itemize}
  \end{proposition}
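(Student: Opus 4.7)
\emph{Plan.} The general construction requires nothing beyond $\CC$ having finite limits: both pullbacks in the diagram exist, hence $\pp\colon\YY \to \XX$ is well-defined unambiguously. The remaining content is therefore the syntactic identification of $\XX$, $\YY$, and $\pp$ in the specific case $(\CC, p) = (\FinGat, \pGAT)$.

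The first row of \cref{tab:pushforward-gats} identifies $P\XGAT$ with $\FamG = \GatFam{()}$, whose structure map $\FamG \to \XGAT$ is the substitution $A \mapsto \U$. Computing the pullback $\XX$ of $\pGAT\colon(A,a)\to(A)$ along this structure map amounts to gluing the contexts $(A,a)$ and $(\U,\El)$ under the identification $A = \U$, yielding $(\U, \El, \GVar{A}\colon\U) = \GatFam{(\GVar{A}\colon\U)}$ with left projection the forgetful substitution dropping $\GVar{A}$, matching the claimed morphism $\GatFam{(\GVar{A}\colon\U)} \to \GatFam{()}$. The counit $\epsilon\colon\XX\to\XGAT$, characterised by the universal property of $P$, is a separate morphism (not the common composite of the first pullback square): it is the substitution $A \mapsto \El\,\GVar{A}$. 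A transparent verification proceeds through the bi-initial $\CartExp$-functor to $\CAT$, where $\epsilon$ unfolds to the evaluation functor $(\U,\El,u)\mapsto\El(u)$ from pointed families to sets, corresponding under the adjunction to the identity on $\FamS$.

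The second pullback is computed the same way: gluing $(A,a)$ with $\GatFam{(\GVar{A}\colon\U)}$ under $A = \El\,\GVar{A}$ yields $(\U,\El,\GVar{A}\colon\U,a\colon\El\,\GVar{A}) = \GatFam{(\GVar{A}\colon\U, a\colon\El\,\GVar{A})}$, with $\pp$ the forgetful projection dropping $a$. The main obstacle is that the paper accesses $\FinGat$ almost exclusively through Uemura's bi-initial characterisation rather than through the generating type theory, so the syntactic pullback calculations must be justified: either by unfolding $\FinGat$ as a category of contexts and computing pullbacks as context extensions with equations, or by verifying the universal property of each claimed pullback directly — for instance, a morphism from a theory $\Gamma$ into $\GatFam{(\GVar{A}\colon\U)}$ is precisely the data of a morphism $\Gamma \to \FamG$ together with a morphism $\Gamma \to \YGAT$ that agree when composed into $\XGAT$, as required.
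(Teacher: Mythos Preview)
Your proposal is correct and follows essentially the same approach as the paper: identify $P\XGAT$ as $\FamG$, compute $\XX$ as the syntactic pullback $(\U,\El,\GVar{A}{:}\U)$, observe that the counit $\epsilon$ picks out $\El\,\GVar{A}$, and then read off $\YY$ from the second pullback. The paper's proof is considerably terser---it simply asserts the pullback computation and the form of the counit without your additional remarks about verifying $\epsilon$ through the model functor or about justifying pullbacks of contexts via their universal property---but the underlying argument is the same.
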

  \begin{proof}
    $\XX$ is defined as the pullback of $Y=(\GVar{A}:\GatSet, \GVar{a}:A)$ and 
    $PX = (\GVar{U}:\GatSet, \GVar{El}:\U\ra\GatSet)$ over $(\GVar{A}:\GatSet)$, so it is $(\GVar{U}:\GatSet,\GVar{El}:\U\ra \GatSet,\GVar{A}:\U)$.
    The claimed result follows from the counit $\XX → (\GVar{A}:\GatSet)$ selecting the 
    term $\El\, A$ in the context $\XX$.
  \end{proof}
  With this characterisation, it is straightforward to check that the morphism $\pp : \YY \to \XX$ from \cref{prop:p2-construction} is \kl{exponentiable} in $ℂ/PX$.
  \begin{proposition}
    \label{prop:p2-exponentiable}
    In the setting of \cref{prop:p2-construction}, the morphism $\pp ∶ \YY → \XX$ is \kl{exponentiable} in $ℂ/ PX$, as a morphism 
    between $\XX → PX$ defined in \cref{eq:p2-diagram} and 
    $\YY \xrightarrow{\pp} \XX → PX$.
  \end{proposition}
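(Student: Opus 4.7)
The plan is to reduce the claim to two well-known facts about exponentiable morphisms: stability under pullback, and invariance under passage to a common-base slice. Both ingredients apply directly to the construction in \cref{prop:p2-construction}.

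First, I would observe that the right-hand square in \cref{eq:p2-diagram} exhibits $\pp ∶ \YY → \XX$ as the pullback of $p ∶ Y → X$ along $\epsilon ∶ \XX → X$. Since $p$ is \kl{exponentiable} in $ℂ$ by assumption, and exponentiable morphisms in a \kl{cartesian category} are stable under pullback (a standard consequence of the pasting lemma for pullbacks, giving a natural isomorphism $(\pp)^* \iso \epsilon^* ∘ p^* ∘ (\epsilon')_!$ after suitable identification, where $\epsilon' ∶ \YY → Y$ is the pullback of $\epsilon$), the morphism $\pp$ is \kl{exponentiable} in $ℂ$.

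Second, I would invoke the fact that if $g ∶ B → A$ is a morphism in $ℂ$ sitting over a common object $Z$ (i.e.\ a morphism in $ℂ/Z$), then $g$ is \kl{exponentiable} in $ℂ$ if and only if it is \kl{exponentiable} in $ℂ/Z$. This is immediate from the canonical isomorphisms of slice categories $(ℂ/Z)/(A → Z) \iso ℂ/A$ and $(ℂ/Z)/(B → Z) \iso ℂ/B$, under which the pullback functor $g^*$ computed in $ℂ/Z$ is identified with the pullback functor $g^*$ computed in $ℂ$; hence one right adjoint exists iff the other does. Applying this with $Z = PX$, $A = (\XX → PX)$, $B = (\YY \xrightarrow{\pp} \XX → PX)$, and $g = \pp$ yields the desired exponentiability in $ℂ/PX$.

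There is no real obstacle here; the only point deserving attention is checking that $\YY → PX$ in the statement is indeed the composite $\YY \xrightarrow{\pp} \XX → PX$ so that $\pp$ is genuinely a morphism in $ℂ/PX$, which is tautological from \cref{eq:p2-diagram}. If a more explicit description is wanted, the pushforward $\pp_*$ in $ℂ/PX$ can be recovered from $p_*$ in $ℂ$ by the formula $\pp_* = \epsilon^* ∘ p_* ∘ (\epsilon')_*$ after the slice identifications above, but this is not needed for the bare exponentiability claim.
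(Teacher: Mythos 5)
Your proof is correct and follows essentially the same route as the paper's own (two-line) proof sketch: exponentiability is stable under pullback, and an exponentiable morphism remains exponentiable in any slice $ℂ/Z$ over a common base, applied here with $Z = PX$. One minor quibble: the parenthetical formula $(\pp)^* \iso \epsilon^* \circ p^* \circ (\epsilon')_!$ does not typecheck as written (the domains and codomains of the three functors do not compose), but this is inessential since pullback-stability of exponentiable morphisms is a standard fact that the paper likewise invokes without proof.
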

  \begin{proof}[Proof sketch]
    \kl{Exponentiable} morphisms are stable under pullbacks, and if a morphism $f:Y → X$ in a category $ℂ$ with finite limits is \kl{exponentiable}, then 
    for any morphism $X → Z$, the morphism $f$ is also \kl{exponentiable} in the slice category $ℂ/Z$.
  \end{proof}
From this we get two-sortification by initiality.
\begin{definition}
  \label{def:two-sortification}
  By \cref{prop:strictification} and \cref{prop:p2-exponentiable}, there exists a \kl{strict} initial morphism from $\FinGat$ to $\FinGat/\FamG$ mapping $\pGAT$ to $\GatFam{(\GVar{A}:\U, \GVar{a}: \El\, A)} → \GatFam{(\GVar{A}:\U)}$. We call it the \emph{two-sortification functor} $\intro*\TW$. We often conflate $\TW Γ$ with its domain, seen as a GAT over $\FamG$.
\end{definition}


  \subsection{Soundness}
  \label{sec:two-sortification-family-gats}
  In this section, we show that the image of any GAT by the \kl{two-sortification functor} is (up to isomorphism) a family GAT.
  \begin{definition}
    A \AP\intro{family GAT} is a finite GAT 
   starting with $\GVar{U}:\GatSet$ and $\GVar{El}:\U\ra\GatSet$,
  which does not involve any other sorts.
   Let $\intro*\FamGat$ denotes the full subcategory of $\FinGat/\FamG$ spanned by family GATs.
  \end{definition}
  The category of family GATs inherits the $\CartExp$-structure of $\FinGat/\FamG$ given by \cref{prop:p2-exponentiable}
  in the following sense.
  \begin{proposition}
    The category $\FamGat$ of family GATs has finite limits, includes 
    the morphism $\GatFam{(\GVar{A}:\U, \GVar{a}: \El\, A)} → \GatFam{(\GVar{A}:\U)}$ which is also exponentiable in $\FamGat$. Moreover, the embedding $\FamGat ↪ \FinGat/\FamG$ preserves finite limits as well as pushforwards along this exponentiable morphism.
  \end{proposition}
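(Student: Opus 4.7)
The plan is to show that $\FamGat$, as a full subcategory of $\FinGat/\FamG$, is closed under the relevant data of $\FinGat/\FamG$: finite limits, the morphism $\pp$, and pushforward along $\pp$. Full faithfulness will then at once upgrade this closure into the preservation statement and, by the same token, into exponentiability of $\pp$ in $\FamGat$. Closure under finite limits is straightforward: the terminal $\FamG$ is itself a family GAT, and, reading a family GAT as the extension of $\FamG = (\GVar{U}:\GatSet, \GVar{El}:\U\to\GatSet)$ by only term- and equation-level declarations, a pullback in $\FinGat/\FamG$ of a cospan of family GATs merely glues such declarations, introducing no new sort declaration. The morphism $\pp$ is visibly between family GATs.

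The heart of the argument lies in closure under $\pp_*$. Via the standard isomorphism $(\FinGat/\FamG)/\YY \cong \FinGat/\YY$ (and similarly over $\XX$), the pushforward in $\FinGat/\FamG$ along $\pp$ coincides with the pushforward in $\FinGat$ along $\pp$. I would then unfold Uemura's syntactic description of the pushforward along $\pGAT$, whose restriction to the relevant slice produces $\pp_{*}$: it sends an extension of $\YY$ by a declaration $x : T$ (respectively an equation) to the extension of $\XX$ by $\tilde{x} : \Pi_{\GVar{a}:\El\,\GVar{A}}\,T$ (respectively its universal quantification over $\GVar{a}$). For a family GAT input, every such $T$ is built from $\U$ and $\El$, never from $\GatSet$, and so is every $\Pi_{\GVar{a}:\El\,\GVar{A}}\,T$; no new sort declaration is introduced, so the pushforward is again a family GAT. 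The main obstacle is pinning down the syntactic form of the pushforward precisely enough that the ``no new sort is created'' conclusion is airtight; should this prove delicate, I would fall back on \cref{lem:preservation_pushforward} and argue directly about the right adjoint $P_{\pp}$, checking on generators that its value on a family GAT is a family GAT.

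Finally, full faithfulness of the inclusion $\FamGat \hookrightarrow \FinGat/\FamG$, together with the closure just established, ensures that the finite limits and pushforward of $\FinGat/\FamG$ doubly serve as the corresponding universal constructions in $\FamGat$: every test cone or adjunction test object in $\FamGat$ is one in $\FinGat/\FamG$, and the unique factorising morphism, being a morphism between family GATs, automatically lies in $\FamGat$. Hence $\pp$ is exponentiable in $\FamGat$ and the embedding preserves finite limits and the pushforward along $\pp$.
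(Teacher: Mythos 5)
The paper states this proposition without proof (it is not treated in the appendix either), so there is no official argument to measure your proposal against. Your overall strategy is the right one and presumably the intended one: show that the full subcategory $\FamGat$ contains (up to isomorphism) the finite limits and the pushforward computed in $\FinGat/\FamG$, and then transfer the universal properties along the fully faithful embedding — your last paragraph carries that transfer out correctly.

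Three points in the closure argument deserve more care. First, the phrase ``the pushforward along $\pGAT$, whose restriction to the relevant slice produces $\pp_*$'' is not accurate: $\pp$ is the \emph{pullback} of $\pGAT$ along $\epsilon : \XX \to \XGAT$, not $\pGAT$ itself, so $\pp_*$ is not a restriction of $(\pGAT)_*$. What you need — and what your displayed formula in fact gives — is the direct syntactic description of $\pp_*$, sending $\YY,\Delta \to \YY$ to $\XX,\,\Pi_{\GVar{a}:\El\,\GVar{A}}\Delta \to \XX$; the reason this $\Pi$-type exists is precisely that $\El\,\GVar{A}$ is a term of type $\GatSet$ in context $\XX$, which is the ``dependent products over types in $\GatSet$'' clause in the definition of $\FinGat$ and is the crux of why no new sort is created. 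You also implicitly use that every object of $\FamGat/\YY$ is isomorphic over $\YY$ to a context extension $\YY,\Delta$ with $\Delta$ sort-free; this needs the graph factorisation provided by extensional equality types and should be stated. Second, closure under pullbacks is less innocent than ``gluing declarations'': the syntactic pullback of a cospan of family GATs duplicates the $\U,\El$ declarations and imposes equations at type $\GatSet$ between the two copies, and only after contracting these by extensionality is the result a family GAT, and then only up to isomorphism — which is all that is required, but is worth saying. Third, your fallback of ``checking on generators'' that $P_{\pp}$ preserves family GATs is the one step I would not lean on: a right adjoint is not determined by its values on generators, so the syntactic description of the pushforward is the route to make airtight.
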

  \begin{corollary}
    The image of any GAT by 
    the \kl{two-sortification functor} is isomorphic 
    to a family GAT.
  \end{corollary}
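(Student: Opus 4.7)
The plan is to exploit the bi-initiality of $\FinGat$ (\cref{thm:uemura}) together with the fact, asserted by the preceding proposition, that $\FamGat$ itself is a $\CartExp$-category whose embedding $\iota : \FamGat \hookrightarrow \FinGat/\FamG$ is a $\CartExp$-functor (sending the distinguished exponentiable morphism on the nose to the one chosen on $\FinGat/\FamG$).

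First, I would invoke \cref{thm:uemura} (and \cref{prop:strictification} to get strictness) to produce a \kl{strict} $\CartExp$-functor $\TW' : \FinGat \to \FamGat$ mapping $\pGAT$ to $\GatFam{(\GVar{A}:\U, \GVar{a}: \El\, A)} \to \GatFam{(\GVar{A}:\U)}$; this uses precisely the $\CartExp$-structure of $\FamGat$ provided by the proposition. Post-composing with the inclusion $\iota$, which preserves finite limits and pushforwards along the chosen exponentiable morphism, yields a $\CartExp$-functor $\iota \circ \TW' : \FinGat \to \FinGat/\FamG$ sending $\pGAT$ to the same exponentiable morphism as $\TW$ does.

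Then, by the bi-initiality of $\FinGat$ in $\CartExp$, any two $\CartExp$-functors out of $\FinGat$ into a given $\CartExp$-category are naturally isomorphic by a unique isomorphism. Applied to $\TW$ and $\iota \circ \TW'$, this yields a natural isomorphism $\TW \cong \iota \circ \TW'$. In particular, for every finite GAT $\Gamma$, the object $\TW\,\Gamma$ is isomorphic, in $\FinGat/\FamG$, to $\iota(\TW'\,\Gamma)$, which by construction is a \kl{family GAT}. The only step that deserves some care is verifying that the preservation hypotheses on $\iota$ supplied by the proposition are exactly what is needed to make $\iota \circ \TW'$ a $\CartExp$-functor in the sense required for bi-initiality; this is routine once one unfolds \cref{def:preserves-pushforwards} and \cref{lem:preservation_pushforward}.
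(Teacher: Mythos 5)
Your proposal is correct and matches the paper's own argument exactly: both define the bi-initial $\CartExp$-functor into $\FamGat$, compose with the embedding $\FamGat \hookrightarrow \FinGat/\FamG$ (which preserves the $\CartExp$-structure by the preceding proposition), and conclude $\TW \cong \iota \circ \TW'$ by uniqueness of bi-initial $\CartExp$-functors. The extra detail you give about strictness and about checking the preservation hypotheses is harmless but not needed beyond what the paper states.
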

  \begin{proof}
    By uniqueness of the bi-initial $\CartExp$-functor, 
    the \kl{two-sortification functor} is isomorphic to 
    the composition of the bi-initial $\CartExp$-functor from $\FinGat$ to $\FamGat$
    with the embedding $\FamGat ↪ \FinGat/\FamG$. This composition 
    maps any GAT to a \kl{family GAT} by definition.
  \end{proof}

  Finally, note that if we define two-sortification
  as the \kl{strict} bi-initial $\CartExp$-functor from $\FinGat$ to $\FamGat$,
  then the two-sortification of any GAT is a \kl{family GAT} on the nose, rather than up to isomorphism.
  Adopting this definition does not change anything in the rest of this paper since we only rely on the fact that two-sortification is a (\kl{strict}) bi-initial $\CartExp$-functor
  to $\FinGat/\FamG$.
  
  \subsection{The Coreflector Morphism of a GAT}
  \label{sec:gat-morphism-two-sortification-original}
  In this section, we define, for each theory $Γ$, a morphism from its two-sortification $\TW Γ$ to $Γ$. We call it 
  the \emph{coreflector morphism} of $Γ$. The naming is motivated by the fact that, in \cref{sec:semantic-translation}, we show that the image of this morphism by the \kl{model functor} is the right adjoint of the strict coreflection between the categories of models of $Γ$ and $\TW Γ$.

  \begin{definition}
    \label{def:coreflector-morphism}
    By~\cref{thm:universal-property-natural-transformation}, 
     the (right) pullback square defining $\pp$ in Equation~\eqref{eq:p2-diagram} 
     uniquely extends to a natural transformation 
     as below:
    \[\begin{tikzcd}[ampersand replacement=\&]
    \& \FinGat/\FamG \\
    \FinGat \&\& \FinGat,
    \arrow["\dom", from=1-2, to=2-3]
    \arrow["{T}", from=2-1, to=1-2]
    \arrow[""{name=0, anchor=center, inner sep=0}, equals, from=2-1, to=2-3]
    \arrow[ between={0.4}{0.7}, Rightarrow, from=1-2, to=0]
    \end{tikzcd}\]
    The component of this natural transformation at a theory $Γ$ is a GAT morphism \AP $\intro*\coref{Γ} ∶ \TW Γ → Γ$ which we call 
    the \intro{coreflector morphism} of $Γ$.
  \end{definition}
  \begin{example}
    \label{ex:coreflector-simple-gats}
    By definition,
     the bottom and top horizontal morphisms in the right pullback 
    square \eqref{eq:p2-diagram} for $ℂ = \FinGat$
    are the \kl{coreflector morphisms} 
    of $(\GVar{A}:\GatSet)$ and $(\GVar{A}:\GatSet, \GVar{a}:A)$ respectively.
    More explicitly, the \kl{coreflector morphism} of $(\GVar{A}:\GatSet)$ is the substitution $\GatFam{(\GVar{A}:\U)} → (\GVar{A}:\GatSet)$
    induced by the term $\GatFam{(\GVar{A}:\U)} ⊢ \El\, A ∶ \GatSet$.
    Its image by the \kl{model functor}  
    maps $(U,El,A)$ to the set $El(A)$. As claimed in 
    the introduction, it has a coreflective left adjoint mapping a set $X$ to the model defined by $U:=\{*\}$, $El\, * := X$, and $A := *$. 
    If we apply this left adjoint to a set $X$ and then the right adjoint, we indeed get back $X$.

    Similarly, the image by the \kl{model functor} of the coreflector morphism of $(\GVar{A}:\GatSet, \GVar{a}:A)$ is a functor from $\model{\GatFam{(\GVar{A}:\U, \GVar{a}: \El\, A)}}$ to $\model{(\GVar{A}:\GatSet, \GVar{a}:A)}$
    mapping a model $(U,El,A,a)$ to the pointed set $(El(A), a)$.
    Again, it has a coreflective left adjoint mapping a pointed set $(X,x)$ to the model defined by $U:=\{*\}$, $El\, * := X$, $A := *$, and $a := x$.
  \end{example}

 \section{Semantic Translation}
 \label{sec:semantic-translation}
 Consider 
 the image $\model{\TW Γ} → \model{Γ}$ of the \kl{coreflector morphism} of a theory $Γ$ by the \kl{model functor}.
 The goal of this section is to construct a left adjoint such that the unit of the adjunction is an identity: this is the claimed strict coreflection.
   In \cref{sec:explicit-description-models-gat}, we provide an explicit description of the models 
 of $\TW Γ$ in terms of the models of $Γ$, from which we deduce 
 a strict coreflection between $\model{\TW Γ}$ and $\model{Γ}$.
 \Cref{sec:connection-coreflector-morphism} shows that the 
 right adjoint is indeed the image of the \kl{coreflector morphism} of $Γ$ by the \kl{model functor}.
  
We start with a preliminary section to define 
 the \emph{family functor} of a theory $Γ$, a key ingredient
in the explicit description of the models of $\TW Γ$.
  \subsection{The Family Functor of a GAT}
  \label{sec:family-functor-gat}
  Intuitively, the family functor of a theory $Γ$  
  is the composition of the coreflective left adjoint $\model{Γ} → \model{\TW Γ}$
  with the projection $\model{\TW  Γ} → \model{\FamG} ≅ \FamCat{𝐒𝐞𝐭}$.
\begin{example}
  \label{ex:family-functor-simple-gats}
  In \cref{ex:coreflector-simple-gats}, we described the coreflective left adjoints for the theories $(\GVar{A}:\GatSet)$ and $(\GVar{A}:\GatSet, \GVar{a}:A)$. It follows that the family functor of  $(\GVar{A}:\GatSet)$  maps a set $X$ to the family $(U,El)$ defined by $U = \{*\}$ and $El\, * = X$.
  Precomposing it with the forgetful functor from pointed sets to sets yields
  the family functor of $(\GVar{A}:\GatSet, \GVar{a}:A)$.
\end{example}
    Although this definition of the family functor via the coreflection is helpful to gain intuition, we cannot rely on it in the general case, since we have not yet defined the coreflection for an arbitrary theory $Γ$.  
    Instead, we once again exploit bi-initiality of $\FinGat$. More specifically, we define a suitable 
  $\CartExp$-category whose objects are functors to $\FamCat{𝐒𝐞𝐭}$, so that
  the bi-initial $\CartExp$-functor to that category maps  
  a theory $Γ$ to its category of models $\model{Γ}$ equipped
  with its family functor.
  The main subtelty lies in choosing the right notion of morphisms.
  It is tempting to consider commuting triangles: 
  a morphism between $ℂ \xrightarrow{F} \FamS$ and $ℂ' \xrightarrow{F'} \FamS$
  would be a functor $G:ℂ → ℂ'$ such that $F = F' \circ G$, so that 
  we get the slice category $\CAT/\FamS$.
  Unfortunately, with this notion of morphism, the bi-initial $\CartExp$-functor would not meet our expectations.
  In particular, because it preserves the terminal objects, it would map 
  the empty theory to $\FamS \xrightarrow{\mathrm{id}}\FamS$, instead of  
  the family functor $ \model{()} = 1 → \FamS$ of the empty theory.
  But by suitably relaxing the notion of morphism as follows, we recover the right terminal objects, and more generally the expected (finite) limits.
  \begin{proposition}
    \label{prop:colax-slice-limits}
      Let $\AP\intro*\colaxSlice$ denote the \emph{colax slice category}
      of locally small categories over $\FamS$: objects are functors to $\FamS$ and a morphism between $ℂ \xrightarrow{F} \FamS$ and $ℂ' \xrightarrow{F'} \FamS$
  is a functor $G:ℂ → ℂ'$ and a natural transformation between $F' \circ G$ and $F $. Composition is defined in the obvious way.

  Then, $\colaxSlice$ has limits and the functor $\colaxSlice → \CAT$ 
  preserves them.
  \end{proposition}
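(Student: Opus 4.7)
The plan is to build small limits in $\colaxSlice$ by taking the limit of the underlying categories in $\CAT$ and a pointwise colimit in $\Fam$. Given a small diagram $D: \JJ \to \colaxSlice$ with $D(j) = (F_j: \CC_j \to \Fam)$ and transition morphisms $D(j \to j') = (G_{j,j'}, \alpha_{j,j'})$ where $\alpha_{j,j'}: F_{j'} \circ G_{j,j'} \Rightarrow F_j$, I would first form the limit $\CC = \lim_j \CC_j$ in $\CAT$, with projections $P_j: \CC \to \CC_j$ satisfying $P_{j'} = G_{j,j'} \circ P_j$.

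Next, I would define $F: \CC \to \Fam$ pointwise. For each $c \in \CC$, the maps $\alpha_{j,j'}(P_j(c)): F_{j'}(P_{j'}(c)) \to F_j(P_j(c))$ assemble into a diagram $\Phi_c: \JJ^{\op} \to \Fam$ sending $j \mapsto F_j(P_j(c))$ (functoriality of $\Phi_c$ coming from functoriality of $D$), and I set $F(c) := \mathrm{colim}\, \Phi_c$, which exists because $\Fam \cong \Set^{\to}$ is cocomplete. Any morphism $c \to c'$ in $\CC$ induces a morphism of diagrams $\Phi_c \to \Phi_{c'}$, making $F$ functorial. The colimit injections yield natural transformations $\lambda_j: F_j \circ P_j \Rightarrow F$, and the cocone identity $\lambda_j \circ (\alpha_{j,j'} * P_j) = \lambda_{j'}$ is precisely the condition $(G_{j,j'}, \alpha_{j,j'}) \circ (P_j, \lambda_j) = (P_{j'}, \lambda_{j'})$, so the $\pi_j := (P_j, \lambda_j): F \to F_j$ form a cone over $D$.

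For universality, given any cone $((K_j, \beta_j))_j$ from an apex $(H: \EE \to \Fam)$, the $K_j$ factor uniquely as $K_j = P_j \circ K$ through a single $K: \EE \to \CC$ by the limit in $\CAT$. For each $e \in \EE$, the cone condition on $((K_j, \beta_j))$ tells us that the maps $\beta_j(e): F_j(K_j(e)) \to H(e)$ form a cocone under $\Phi_{K(e)}$, and hence factor uniquely through $F(K(e))$ to give $\beta(e): F(K(e)) \to H(e)$. Naturality of $\beta$ in $e$ follows from naturality of each $\beta_j$ together with the universal property of the colimits applied to morphisms of cocone diagrams, yielding the required unique factorisation $(K, \beta): H \to F$.

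Preservation by the projection $\colaxSlice \to \CAT$ is immediate by construction, since the underlying apex of the limit cone is $\CC = \lim_j \CC_j$ with projections $P_j$. The main obstacle is not the existence of the pointwise colimits but the naturality bookkeeping: verifying that $F$ is functorial in $\CC$ and that $\beta$ is natural in $\EE$, both of which require applying each pointwise colimit's universal property to whole morphisms of cocone diagrams rather than to isolated cocones; past this, everything is forced.
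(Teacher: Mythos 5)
Your construction is correct. The paper itself proves this in one line, by observing that the projection $\colaxSlice \to \CAT$ is a bifibration and invoking a general result of Gray (Corollary 4.3 of \emph{Fibred and cofibred categories}) on limits in total categories of fibrations. Your argument is the explicit unfolding of exactly that mechanism: the fibre of $\colaxSlice \to \CAT$ over $\CC$ is $[\CC,\FamS]^{\op}$, reindexing is precomposition, and the fibrational recipe says to take the limit $\CC = \lim_j \CC_j$ of the bases, reindex each $F_j$ to $F_j \circ P_j$, and form the fibrewise limit of the resulting diagram --- which, since the fibre is an opposite functor category, is precisely your pointwise colimit $F(c) = \mathrm{colim}_{j \in \JJ^{\op}} F_j(P_j(c))$ with transition maps the components of the $\alpha_{j,j'}$. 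Your cone, cocone and uniqueness verifications all check out (the cocone identity $\lambda_j \circ (\alpha_{j,j'} * P_j) = \lambda_{j'}$ is indeed the cone condition in $\colaxSlice$, and naturality of the mediating $\beta$ does follow from the universal property applied to morphisms of cocone diagrams, as you say), and preservation by the projection is immediate from the construction. What the citation buys the paper is brevity; what your hands-on version buys is a visible formula for the limit, which is in fact the description the paper later uses in the appendix when computing pullbacks in $\colaxSliceCart$ (limit of the underlying categories, then a pushout of the whiskered natural transformations). The only points worth making explicit are that $\JJ$ is small (so the pointwise colimits exist in the cocomplete category $\FamS \simeq \Set^\to$) and a one-line uniqueness argument for $(K,\beta)$: $K$ is forced by the limit in $\CAT$ and $\beta(e)$ is forced by precomposition with the colimit injections.
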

  \begin{proof}
    This is a direct application of \cite[Corollary 4.3]{Gray1966FibredAC}, noting that 
    $\colaxSlice → \CAT$ is a bifibration.
  \end{proof}
  It remains to choose an exponentiable morphism in $\colaxSlice$.
  For that matter, 
we expect the bi-initial $\CartExp$-functor $\FinGat → \colaxSlice$ to map 
$\pGAT∶ (\GVar{A}:\GatSet, \GVar{a}:A) \to (\GVar{A}:\GatSet)$
to the forgetful functor $\model{\pGAT}$ from pointed sets to sets, with a suitable natural transformation involving the family functors.
Based on \cref{ex:family-functor-simple-gats}, the identity natural transformation is an obvious candidate.
\begin{theorem}
  \label{thm:colax-exponentiable}
  The forgetful functor from pointed sets to sets, equipped with the identity natural transformation between their family functors from \cref{ex:family-functor-simple-gats}, is exponentiable in $\colaxSlice$, and the functor $\colaxSlice → \CAT$ 
  \kl{preserves pushforwards} along it.

  Therefore, combining with \cref{prop:colax-slice-limits}, $\colaxSlice$ is a $\CartExp$-category and the functor $\colaxSlice → \CAT$ is a $\CartExp$-functor.
\end{theorem}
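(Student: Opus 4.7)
The plan is to construct the right adjoint $P_f$ to $\dom \circ f^*$ explicitly (invoking \cref{prop:exp-mor-P-def}), and to verify that its image under the projection $\colaxSlice \to \CAT$ coincides with $P_U = \FamCat{-}$ from \cref{exa:CAT}, from which both exponentiability of $f$ in $\colaxSlice$ and preservation of pushforwards will follow.

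First, I would compute $\dom \circ f^*$ on an object $(A, F) \xrightarrow{(G, \gamma)} (\Set, F_{set})$ of the slice $\colaxSlice/(\Set, F_{set})$, using the description of limits in $\colaxSlice$ afforded by \cref{prop:colax-slice-limits}. The underlying category of the pullback is the category of elements $\int G$ (following \cref{exa:CAT}), and the associated functor $\int G \to \FamS$ arises as a pushout in the fibre over $\int G$; since $f = (U, \mathrm{id})$ carries the identity natural transformation, one leg of this pushout is an identity and the pushout degenerates. We obtain $\dom(f^*(A \to (\Set, F_{set}))) = (\int G, F \circ \pi_A)$, where $\pi_A : \int G \to A$ is the projection.

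Second, for $(\DD, H) \in \colaxSlice$, I would set $P_f(\DD, H) = (\FamCat{\DD}, \tilde H)$, equipped with the family fibration $\pi : \FamCat{\DD} \to \Set$ as its morphism to $(\Set, F_{set})$, where
\[ \tilde H\bigl((d_x)_{x \in X}\bigr) = F_{set}(X) + \textstyle\sum_{x \in X} H(d_x) \]
denotes the coproduct in $\FamS$, and the structure natural transformation $\sigma : F_{set} \circ \pi \Rightarrow \tilde H$ is the first coproduct injection. To verify the adjunction, I would match data directly: a morphism $(\int G, F \pi_A) \to (\DD, H)$ in $\colaxSlice$ is a pair $(K, \alpha)$ with $K : \int G \to \DD$ and $\alpha : HK \Rightarrow F \pi_A$. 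By the $\CAT$-adjunction $U^* \dashv \FamCat{-}$, $K$ corresponds to a functor $L : A \to \FamCat{\DD}$ with $\pi L = G$; by the universal property of the coproduct in $\FamS$, the data of $\alpha$ together with $\gamma$ corresponds bijectively to a natural transformation $\beta : \tilde H L \Rightarrow F$ satisfying $\beta \cdot (\sigma L) = \gamma$. This is exactly the compatibility required for $(L, \beta)$ to define a morphism in $\colaxSlice/(\Set, F_{set})$.

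Preservation of pushforwards by $\colaxSlice \to \CAT$ will then be immediate: by construction, the underlying object of $P_f(\DD, H)$ in $\CAT/\Set$ is $\FamCat{\DD} \to \Set = P_U(\DD)$, so the canonical mate from \cref{def:preserves-pushforwards} is an identity. The main obstacle will be carefully unpacking the colimit description of limits in $\colaxSlice$ underlying \cref{prop:colax-slice-limits} (to justify the degeneration yielding $F \circ \pi_A$) and confirming naturality of the bijection $\alpha \leftrightarrow \beta$ in both $(A \to (\Set, F_{set}))$ and $(\DD, H)$, which ultimately reduces to the bi-functoriality of the coproduct in $\FamS$.
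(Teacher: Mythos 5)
Your proposal is correct and follows essentially the same route as the paper's own proof: both construct the right adjoint $P_f$ explicitly via \cref{prop:exp-mor-P-def}, send $(\DD,H)$ to $\FamCat{\DD}$ equipped with the coproduct family $F_{set}(X)+\sum_x H(d_x)$ and the left coproduct injection as structure map, establish the hom-set bijection by peeling off the $\gamma$-component with the universal property of the coproduct, and observe that preservation of pushforwards is immediate because the underlying object of $P_f(\DD,H)$ is exactly $P_U(\DD)$ in $\CAT$. The only cosmetic difference is that you make explicit the degeneration of the fibrewise pushout when computing $f^*$ (the paper states the composite directly), which is a fine way to justify that step.
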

\begin{proof}
  A detailed proof is given in \cref{sec:proof-semantic-translation}.
\end{proof}
\begin{corollary}
  \label{cor:family-functor-gat}
  There is a \kl{strict} bi-initial $\CartExp$-functor from $\FinGat$ to $\colaxSlice$ that maps a theory $Γ$ to its category of models $\model{Γ}$ equipped with 
  a functor to $\FamS$, which we call the \AP\intro{family functor} of $Γ$.
\end{corollary}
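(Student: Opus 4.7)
The plan is to combine Uemura's bi-initiality (\cref{thm:uemura}) with the $\CartExp$-structure on $\colaxSlice$ established by \cref{thm:colax-exponentiable}, and then strictify. First, applying \cref{thm:uemura} with the target $(\CC, p)$ taken to be $\colaxSlice$ equipped with the exponentiable morphism from \cref{thm:colax-exponentiable} produces a bi-initial $\CartExp$-functor $F : \FinGat \to \colaxSlice$, unique up to isomorphism.

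Next, I would invoke \cref{prop:strictification} to replace $F$ by a strict representative. The required hypothesis on the source, namely $\XGAT \neq \YGAT$, is evidently satisfied, since $\YGAT$ strictly extends $\XGAT$ with an additional term $\GVar{a} : A$. The resulting strict $F$ sends $\pGAT$ exactly to the distinguished exponentiable morphism of $\colaxSlice$, that is, the forgetful functor $\PtdSet \to \Set$ equipped with the identity natural transformation between their family functors. For every theory $\Gamma$, the object $F(\Gamma) \in \colaxSlice$ is then by construction a functor from some locally small category to $\FamS$.

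The remaining task is to identify the underlying category of $F(\Gamma)$ with $\model{\Gamma}$ on the nose, so that the second component of $F(\Gamma)$ can legitimately be called the \emph{family functor} of $\Gamma$. Here I would use the second half of \cref{thm:colax-exponentiable}: the projection $Q : \colaxSlice \to \CAT$ is itself a strict $\CartExp$-functor. Consequently $Q \circ F : \FinGat \to \CAT$ is a strict bi-initial $\CartExp$-functor to $\CAT$, which is precisely the data specifying the model functor via \cref{exa:CAT} and \cref{prop:strictification}. Since the model functor is only fixed up to a choice of such a strict representative, we simply take it to be $Q \circ F$.

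The main subtlety — more bookkeeping than mathematics — is this last reconciliation step: strictification is non-canonical, so to achieve an on-the-nose equality between $Q \circ F$ and the model functor one must commit to a specific choice of the latter once and for all. Once this convention is fixed, the argument is mechanical and all of the genuine technical work has been delegated to \cref{thm:colax-exponentiable}.
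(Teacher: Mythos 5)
Your overall strategy coincides with the paper's: obtain the bi-initial $\CartExp$-functor $F : \FinGat \to \colaxSlice$ from \cref{thm:uemura} applied to the $\CartExp$-structure of \cref{thm:colax-exponentiable}, strictify, and then reconcile $Q \circ F$ with the model functor. The divergence, and the genuine defect, is in that last reconciliation step. You propose to \emph{redefine} the model functor to be $Q \circ F$, noting correctly that the model functor is only determined up to a choice of strict representative. But in the paper's setting this move is not available: the model functor was fixed once and for all earlier (as the chosen strict $\CartExp$-functor to $\CAT$) and the notation $\model{\Gamma}$ has already been used throughout, including in the very statement you are proving (``maps a theory $\Gamma$ to its category of models $\model{\Gamma}$''). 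Silently swapping it for $Q\circ F$ would require re-checking every prior use, and in particular nothing guarantees that two independently chosen strict bi-initial functors to $\CAT$ agree on the nose --- they are only uniquely \emph{isomorphic}.

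The paper resolves this in the opposite direction: by uniqueness of bi-initial $\CartExp$-functors, $Q \circ F$ is isomorphic to the already-fixed model functor, and since the projection $\colaxSlice \to \CAT$ is an \emph{isofibration}, this isomorphism can be lifted to replace $F$ by an isomorphic $\CartExp$-functor $F'$ with $Q \circ F'$ equal to the model functor on the nose. This isofibration argument is the missing ingredient in your write-up: it lets you adjust $F$ within its isomorphism class rather than tampering with the model functor. Everything else in your proposal --- the application of \cref{thm:uemura}, the verification of the $\XGAT \neq \YGAT$ hypothesis for \cref{prop:strictification}, and the delegation of the technical content to \cref{thm:colax-exponentiable} --- matches the paper.
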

\begin{proof}
  By uniqueness of the initial morphism, the composition $\FinGat → \colaxSlice → \CAT$ is isomorphic to the \kl{model functor}.
  Moreover, because the projection $\colaxSlice → \CAT$ is an isofibration, 
  we can refine the $\CartExp$-functor $\FinGat → \colaxSlice$ so that 
  it coincides on the nose with the \kl{model functor} when composed with the projection to $\CAT$.
\end{proof}

\subsection{Description of the Models of the Reduced GAT}
\label{sec:explicit-description-models-gat}
In this section, we give an explicit description of the models of $\TW Γ$ in terms of the models of $Γ$ and its \kl{family functor}, defined in \cref{sec:family-functor-gat}.
More specifically, this section is devoted to the proof of the following statement.
\begin{theorem}
  \label{thm:characterisation-models-two-sortification}
  Given a theory $Γ$ with \kl{family functor} $F_Γ ∶ \model{Γ} → \FamS$, the category of models of its two-sortification is isomorphic to the full subcategory
  \AP $\intro*\laxSliceCart[F_Γ]$ of the comma category $F_Γ/\FamS$ spanned by \emph{cartesian morphisms}.
   Explicitly,
  \begin{itemize}
    \item an object is a pair of a model $M$ of $Γ$ and a morphism of families $f ∶ F_Γ(M) → (U,El)$ in $\FamS$ that is \AP\intro{cartesian} 
    in the sense that, for every $u ∈ U'$, 
    the function $f_u ∶ El'(u) → El(f_{U}( u))$ is the identity function\footnote{
      In other words, $f$ is in the splitting cleavage of the fibration $\FamS \to \Set$ mapping $(U,El)$ to $U$.
    }, where $El' ∶ U' → 𝐒𝐞𝐭$ is the family $F_Γ(M)$;
    \item a morphism from $F_Γ(M) → (U,El)$ to $F_Γ(M') → (U',El')$ consist 
     of a morphism of models $M → M'$ and a morphism of families $(U,El) → (U',El')$ in $\FamS$ making the obvious square commute.
  \end{itemize}
  This isomorphism is compatible with the 
  projections of $\model{\TW Γ}$ and $\laxSliceCart[F_Γ]$ to $\FamS$, where 
  the latter maps a cartesian morphism $F_Γ(M) → (U,El)$ to $(U,El)$, 
  in the sense that the following triangle commutes up to isomorphism and 
  moreover, this isomorphism is natural in $Γ$.
  \[
  \begin{tikzcd}
    \model{\TW Γ} \arrow[rr, "\cong"] \arrow[dr] & & \laxSliceCart[F_Γ] \arrow[dl] \\
    & \FamS &
  \end{tikzcd}
  \]
\end{theorem}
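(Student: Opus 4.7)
The plan is to invoke bi-initiality of $\FinGat$ in $\CartExp$, with target the 2-category $\colaxSlice$ from \cref{thm:colax-exponentiable}. I would construct two $\CartExp$-functors $\FinGat \to \colaxSlice$, one sending $\Gamma$ to $(\model{\TW \Gamma}, \pi_{\Gamma})$ and the other to $(\laxSliceCart[F_\Gamma], \pi_{\Gamma})$ (each equipped with its projection to $\FamS$), then conclude by the uniqueness-up-to-unique-isomorphism clause of bi-initiality.

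For the first functor, I would first construct a $\CartExp$-functor $H \colon \FinGat/\FamG \to \colaxSlice$ sending $(\sigma \colon \GatFam{\Delta} \to \FamG) \in \FinGat/\FamG$ to $\model{\sigma} \colon \model{\Delta} \to \FamS$. Preservation of finite limits follows from how limits are computed in $\FinGat/\FamG$ and in $\colaxSlice$ (\cref{prop:colax-slice-limits}), together with the model functor being a $\CartExp$-functor; preservation of pushforwards along the chosen exponentiable morphism $\GatFam{(A : \U,\, a : \El\,A)} \to \GatFam{(A : \U)}$ reduces similarly to the corresponding property of the model functor. Composing $H$ with $\TW$ yields the first $\CartExp$-functor, and by construction it sends $\pGAT$ to the forgetful $\PtdSet \to \Set$.

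For the second functor $G$, the key observation is that $\laxSliceCart[F_\Gamma]$ admits a pullback description in $\CAT$: it is the pullback of $F_\Gamma \colon \model{\Gamma} \to \FamS$ along the source projection $s \colon \FamS^{\to}_{\mathrm{cart}} \to \FamS$ from the category of cartesian arrows of $\FamS$, with the projection to $\FamS$ given by the target projection. This construction is functorial in the functor into $\FamS$ and thus promotes to a functor $\colaxSlice \to \colaxSlice$; composing with the family functor (\cref{cor:family-functor-gat}) yields $G$. Preservation of finite limits follows because pullbacks commute with finite limits. For pushforwards, I would compute $G(\pGAT)$ using the descriptions of $F_{(A : \GatSet)}$ and $F_{(A : \GatSet,\, a : A)}$ from \cref{ex:family-functor-simple-gats}, obtaining the forgetful $\PtdSet \to \Set$ on the nose, and then verify preservation by an explicit identification of the pushforward in $\laxSliceCart[F_\Gamma]$.

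By bi-initiality (\cref{thm:uemura}), the two $\CartExp$-functors $\FinGat \to \colaxSlice$ are uniquely isomorphic, yielding the claimed isomorphism $\model{\TW \Gamma} \cong \laxSliceCart[F_\Gamma]$ compatible with the projections to $\FamS$ and natural in $\Gamma$. The main obstacle is verifying that $G$ is a $\CartExp$-functor: one must handle the lax naturality inherent to $\colaxSlice$ (the transformations $F_{\Gamma'} \circ \model{\sigma} \Rightarrow F_\Gamma$) so that the pullback description interacts correctly with non-cartesian morphisms in $\FamS$, and identify the pushforward in the cartesian comma category explicitly to verify its preservation, which is the technical heart of the argument.
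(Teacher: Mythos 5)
Your overall architecture — build two $\CartExp$-functors out of $\FinGat$ and conclude by bi-initiality — is exactly the paper's, but your choice of target breaks the argument. You aim at $(\colaxSlice, \PtdSet \to \Set)$ from \cref{thm:colax-exponentiable}, whose chosen exponentiable morphism is the forgetful functor $\PtdSet \to \Set$. Neither of your two functors sends $\pGAT$ to anything isomorphic to that: $\TW\pGAT$ is $\GatFam{(\GVar{A}:\U,\,\GVar{a}:\El\,A)} \to \GatFam{(\GVar{A}:\U)}$, whose image under the model functor is the projection $\YY \to \XX$ between the category of families with a distinguished index and element and the category of families with a distinguished index; likewise $\laxSliceCart[F_{\YGAT}] \to \laxSliceCart[F_{\XGAT}]$ is (isomorphic to) $\pp : \YY \to \XX$, not $\PtdSet \to \Set$. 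So your claim that both functors hit "the forgetful $\PtdSet \to \Set$ on the nose" is a miscomputation ($\laxSliceCart[F_{(\GVar{A}:\GatSet)}]$ is $\XX$, not $\Set$), and without $F(\pGAT) \cong p'$ neither functor is a $\CartExp$-functor into your target, so bi-initiality cannot be invoked there. The correct target is the slice $\CAT/\FamS$ equipped with the exponentiable morphism $\pp$ supplied by \cref{prop:p2-exponentiable}; the paper's first functor is then obtained by composing $\TW$ with the image of the model functor under the endofunctor $\MFunctor{-}$ of \cref{prop:cartexp-functor-p2} (\cref{cor:models-two-sortification-cartexp-functor}).

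A second genuine gap is your claim that $F \mapsto \laxSliceCart[F]$ "promotes to a functor $\colaxSlice \to \colaxSlice$". It does not: for a colax morphism $\alpha : F' \circ G \Rightarrow F$, transporting a cartesian arrow $F(M) \to (U,El)$ along $\alpha_M$ need not yield a cartesian arrow, so the induced map fails to land in $\laxSliceCart[F']$. You flag this as "the main obstacle" but leave it unresolved; the paper resolves it by restricting to the wide subcategory $\colaxSliceCart$ of pointwise cartesian morphisms, showing it inherits the $\CartExp$-structure, that the bi-initial functor to $\colaxSliceCart$ still computes (up to natural isomorphism) the family functor, and only then defining the $\CartExp$-functor $\colaxSliceCart \to \CAT/\FamS$ sending $F$ to $\laxSliceCart[F]$ (\cref{prop:semantic-twosort-functor}, \cref{prop:initial-functor-colax-slice-cart}, \cref{cor:models-two-sortification-lax-slice-cart}). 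With the target corrected to $(\CAT/\FamS, \pp)$ and the domain of the second construction restricted to $\colaxSliceCart$, your plan becomes the paper's proof.
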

We already illustrated an instance of this theorem for the case of the GAT of transitive graphs
in \cref{ex:transitive-graphs-coreflection,ex:transitive-graphs-characterisation-reduced}.
As an immediate consequence, we get a coreflection between 
the categories of models of a GAT and its two-sortification.
\begin{corollary}
  \label{cor:coreflection-models-two-sortification}
  Given a theory $Γ$ with \kl{family functor} $F_Γ ∶ \model{Γ} → \FamS$,
  there is a strict coreflection between $\model{Γ}$ and $\model{\TW Γ}$:
  through the isomorphism $\model{\TW Γ} ≅ \laxSliceCart[F_Γ]$, the right adjoint maps a cartesian morphism
  $F_Γ(M) → (U,El)$ to $M$, and the left adjoint maps a model $M$ to the identity morphism $F_Γ(M) → F_Γ(M)$.
  It follows that the family functor $F_Γ$ is, up to isomorphism, 
  the composite of the left adjoint and the projection $\model{\TW Γ} → \FamS$.
\end{corollary}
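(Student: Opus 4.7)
The plan is to transport the entire question along the isomorphism \(\model{\TW \Gamma} \cong \laxSliceCart[F_\Gamma]\) established in \cref{thm:characterisation-models-two-sortification} and then exhibit the adjunction directly on the more concrete comma-style category. On this side, the candidate functors are the obvious ones: the right adjoint \(R\) projects a \kl{cartesian} morphism \(F_\Gamma(M) \to (U, El)\) to its source \(M\), and the left adjoint \(L\) sends a model \(M\) to the object \(\mathrm{id} \colon F_\Gamma(M) \to F_\Gamma(M)\) (which is \kl{cartesian} since identities are always cartesian in the splitting cleavage of \(\FamS \to \Set\)). Since both \(R\) and \(L\) are defined on morphisms in the evident way, \(R \circ L\) equals the identity functor on \(\model{\Gamma}\) on the nose, which is what gives the \emph{strict} aspect of the coreflection.

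The core of the argument is then the hom-set bijection. A morphism in \(\laxSliceCart[F_\Gamma]\) from \(L(M) = (\mathrm{id} \colon F_\Gamma(M) \to F_\Gamma(M))\) to an object \((M', f \colon F_\Gamma(M') \to (U, El))\) consists of a pair \((g, \alpha)\) with \(g \colon M \to M'\) in \(\model{\Gamma}\) and \(\alpha \colon F_\Gamma(M) \to (U, El)\) in \(\FamS\) making the square \(f \circ F_\Gamma(g) = \alpha \circ \mathrm{id}\) commute. This equation forces \(\alpha = f \circ F_\Gamma(g)\), so the family component is entirely determined by the model component, yielding the bijection with \(\Hom_{\model{\Gamma}}(M, M') = \Hom_{\model{\Gamma}}(M, R(M', f))\). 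Naturality in both variables is immediate from the definitions, and taking the image of \(\mathrm{id}_M\) under this bijection gives the unit of the adjunction at \(M\), which by construction is the identity morphism on \(L(M)\) paired with \(\mathrm{id}_{F_\Gamma(M)}\)---hence an identity in \(\laxSliceCart[F_\Gamma]\). No step here is really the main obstacle: all technical content was already absorbed in the proof of \cref{thm:characterisation-models-two-sortification}; one only needs to verify that cartesianness of the structure morphisms is preserved by the constructions, which is automatic since identities are cartesian.

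For the final claim that \(F_\Gamma\) factors, up to isomorphism, as the left adjoint followed by the projection \(\model{\TW \Gamma} \to \FamS\), I would simply compute the composite: under the isomorphism with \(\laxSliceCart[F_\Gamma]\), the projection sends \((M, f \colon F_\Gamma(M) \to (U, El))\) to the codomain \((U, El)\); precomposing with \(L\) sends \(M\) to the codomain of \(\mathrm{id}_{F_\Gamma(M)}\), namely \(F_\Gamma(M)\) itself. The ``up to isomorphism'' qualifier is inherited from the fact that the triangle in \cref{thm:characterisation-models-two-sortification} commutes only up to (natural) isomorphism; if we pick the identification of \(\model{\TW \Gamma}\) with \(\laxSliceCart[F_\Gamma]\) strictly, this factorisation holds on the nose.
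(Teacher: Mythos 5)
Your proposal is correct and matches the paper's intent: the paper states this corollary as an immediate consequence of \cref{thm:characterisation-models-two-sortification} without spelling out a proof, and your argument (transport along the isomorphism, exhibit $L\dashv R$ via the forced family component $\alpha = f\circ F_\Gamma(g)$, observe $R\circ L = \mathrm{id}$ on the nose) is exactly the intended elaboration. The only nitpick is a slight misstatement when you describe the unit at $M$ as "the identity morphism on $L(M)$ paired with $\mathrm{id}_{F_\Gamma(M)}$" --- the unit lives in $\Hom_{\model{\Gamma}}(M, RLM)$ and is the transpose of $\mathrm{id}_{L(M)}$, namely $\mathrm{id}_M$ itself --- but your conclusion that it is an identity is right.
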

This coreflection is in fact determined by the \kl{coreflector morphism} of $Γ$ defined in \cref{sec:gat-morphism-two-sortification-original}: 
the right adjoint is indeed recovered as the image of this morphism by the \kl{model functor}. We defer the proof of this fact to \cref{sec:connection-coreflector-morphism}.

In the rest of this subsection, we tackle the proof of \cref{thm:characterisation-models-two-sortification}.
The core argument consists 
in upgrading the below mappings 
into suitable $\CartExp$-functors from $\FinGat$ to
$\CAT/\FamS$, seen as a $\CartExp$-category by \cref{prop:p2-exponentiable}.
\[
Γ ↦ (\model{\TW Γ} → \FamS)
\qquad Γ ↦ (\laxSliceCart[F_Γ] → \FamS)
\]
Then, by bi-initiality, those two functors are isomorphic, yielding the desired natural isomorphism between $\model{\TW Γ}$ and $\laxSliceCart[F_Γ]$.

Let us focus on extending the first mapping. The key insight is that 
$Γ ↦ (\model{\TW Γ} → \FamS)$ can be recovered as the composition 
of $\TW$ with a suitable $\CartExp$-functor from $\FinGat/\FamG$ to $\CAT/\FamS$, thanks to the following proposition.
\begin{proposition}
  \label{prop:cartexp-functor-p2}
  The mapping $(ℂ, Y \xrightarrow{p} X) ↦ (ℂ/PX, \YY \xrightarrow{\pp} \XX)$
  described in \cref{prop:p2-exponentiable}
  extends to an endofunctor \AP$\intro*\MFunctor{-}$ on $\CartExp$.
  Given a $\CartExp$-functor $F ∶(ℂ, p) \xrightarrow{F} (ℂ', p')$,
  the functor $\MFunctor{F} ∶ ℂ / P_pX → ℂ'/P_{p'}X'$
  maps $Z → P_pX$ to $FZ → FP_pX ≅ P_{p'}X'$.
\end{proposition}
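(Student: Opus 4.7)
The plan is to construct $\MFunctor{-}$ in three stages: the action on objects, on 1-cells (the main content), and on 2-cells together with strict functoriality. The object assignment $(\CC,p) \mapsto (\CC/P_pX, \pp)$ is already a $\CartExp$-category by \cref{prop:p2-construction,prop:p2-exponentiable}, so the real work lies in the 1-cell action. Given a $\CartExp$-functor $F : (\CC,p) \to (\CC',p')$, I define $\MFunctor{F} : \CC/P_pX \to \CC'/P_{p'}X'$ by sending $z : Z \to P_pX$ to the composite $FZ \xrightarrow{Fz} F(P_pX) \xrightarrow{\sim} P_{p'}X'$, where the second arrow is the canonical isomorphism provided by \cref{lem:preservation_pushforward} (instantiated at $X$ and combined with the chosen iso $FX \cong X'$). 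Morphisms in $\CC/P_pX$ are sent by applying $F$ to the underlying morphism, with commutativity over $P_{p'}X'$ following from the naturality of the pushforward-preservation iso.

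Next I check that $\MFunctor{F}$ is itself a $\CartExp$-functor. Preservation of finite limits is immediate because limits in slices are computed in the base and $F$ preserves them. To obtain the required iso $\MFunctor{F}(\pp) \cong \pp'$, I apply $F$ to the defining pullback square for $\pp$ in \cref{eq:p2-diagram}: since $F$ preserves this pullback, sends $p$ to $p'$, and sends $P_pX$ to $P_{p'}X'$ coherently with the counits, the image under $F$ of the defining pullback for $\pp$ agrees up to iso with the defining pullback for $\pp'$.

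The main obstacle is showing that $\MFunctor{F}$ \kl{preserves pushforwards} along $\pp$. The pushforward along $\pp$ in $\CC/P_pX$ decomposes via two standard reductions: pushforwards in a slice reduce to pushforwards in the base, and the pushforward along a pullback of an \kl{exponentiable} morphism is computed from the pushforward along the original morphism via a Beck--Chevalley square (this is exactly the stability argument used in the proof sketch of \cref{prop:p2-exponentiable}). Both reductions are natural in the underlying $\CartExp$-functor; hence, from the hypothesis that $F$ preserves pullbacks and \kl{preserves pushforwards} along $p$, the desired natural isomorphism $\MFunctor{F} \circ \pp_{*} \cong \pp'_{*} \circ \MFunctor{F}$ follows by pasting the corresponding Beck--Chevalley and preservation squares.

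Finally, for 2-cells, a natural transformation $\alpha : F \Rightarrow G$ between $\CartExp$-functors whiskers through the construction to give $\MFunctor{\alpha}$ whose component at $z$ is $\alpha_Z$; compatibility with the maps to $P_{p'}X'$ follows from naturality of $\alpha$ together with that of the pushforward-preservation iso. Strict functoriality of $\MFunctor{-}$ on composition and identities then reduces to composing the comparison isomorphisms $FP_pX \cong P_{p'}X'$, which is routine mate calculus. The hardest step is the Beck--Chevalley pasting above; every other step is essentially bookkeeping on pullbacks and mates.
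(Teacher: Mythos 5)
The paper states \cref{prop:cartexp-functor-p2} without proof (it is not treated in the appendix either), so there is no authorial argument to compare against; I therefore assess your proposal on its own terms. Your treatment of the object and 1-cell levels is correct and is surely the intended argument: the value of $\MFunctor{F}$ on a 1-cell is forced by the comparison isomorphism $FP_pX \cong P_{p'}X'$ coming from preservation of pushforwards (\cref{lem:preservation_pushforward}); the isomorphism $\MFunctor{F}(\pp)\cong\pp'$ follows by applying $F$ to the defining diagram \eqref{eq:p2-diagram}, using that $F$ preserves the relevant pullbacks, the exponentiable morphism, and the counit (the last because the comparison is a mate); and preservation of pushforwards along $\pp$ reduces, exactly as in the proof sketch of \cref{prop:p2-exponentiable}, to the Beck--Chevalley isomorphism for the pullback square defining $\pp$ together with the slice identification $(\CC/P_pX)/\XX \simeq \CC/\XX$. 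Strict functoriality on 1-cells via pasting of mates is also fine.

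The step that does not work as written is the action on 2-cells. By \cref{thm:uemura}, a 2-cell of $\CartExp$ is an \emph{arbitrary} natural transformation between the underlying functors, with no compatibility required with the chosen isomorphisms $Fp\cong p'$ and $Gp\cong p'$. The triangle condition for your proposed component $\alpha_Z$ at $z : Z \to P_pX$ unwinds, via naturality of $\alpha$, to the identity $\phi_F = \phi_G \circ \alpha_{P_pX}$, where $\phi_F,\phi_G$ are the comparison isomorphisms; this is not automatic. Taking $F = G = \mathrm{id}$ with identity structure isomorphisms and $\alpha$ a non-identity natural automorphism of the identity functor already violates it, so $\alpha_Z$ is then not a morphism over $P_{p'}X'$ and $\MFunctor{\alpha}$ is undefined. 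Hence $\MFunctor{-}$ is only a functor on the underlying 1-category of $\CartExp$ (or a 2-functor after restricting to 2-cells compatible with the structure isomorphisms). This does not affect the paper, which uses only the action on objects and 1-cells in \cref{cor:models-two-sortification-cartexp-functor}, but you should either restrict your claim accordingly or drop the 2-cell paragraph; the appeal to ``naturality of the pushforward-preservation iso'' does not rescue it, since that naturality itself presupposes the missing compatibility.
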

\begin{corollary}
  \label{cor:models-two-sortification-cartexp-functor}
  The mapping $Γ ↦ (\model{\TW Γ} → \FamS)$ extends to a $\CartExp$-functor from $\FinGat$ to $\CAT/\FamS$ as the composition of the two-sortification functor $\TW$ and the image of the \kl{model functor} by 
  $\MFunctor{-}$.
\end{corollary}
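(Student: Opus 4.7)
The plan is to apply the endofunctor $\MFunctor{-}$ of \cref{prop:cartexp-functor-p2} to the \kl{model functor} $\model{-} : (\FinGat, \pGAT) \to (\CAT, \PtdSet \to \Set)$, which is a $\CartExp$-functor by \cref{exa:CAT}. Because $\MFunctor{-}$ is an endofunctor on $\CartExp$, the image $\MFunctor{\model{-}}$ is automatically a $\CartExp$-functor, with domain $\MFunctor{(\FinGat, \pGAT)} = (\FinGat/P\XGAT, \pp)$ and codomain $\MFunctor{(\CAT, \PtdSet \to \Set)} = (\CAT/P\Set, \pp)$. Under the canonical identifications $P\XGAT \cong \FamG$ (\cref{not:FamGat}) and $P\Set \cong \FamS$ (from \cref{exa:CAT} together with \cref{not:FamSet}), this reads as a $\CartExp$-functor $\FinGat/\FamG \to \CAT/\FamS$.

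Next, I would precompose with the two-sortification functor $\TW : \FinGat \to \FinGat/\FamG$ of \cref{def:two-sortification}, which is itself a $\CartExp$-functor by construction. Since the composition of $\CartExp$-functors is again a $\CartExp$-functor, the composite $\MFunctor{\model{-}} \circ \TW : \FinGat \to \CAT/\FamS$ is a $\CartExp$-functor, as required.

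To confirm that this composite really extends the mapping in the statement, I would unfold the explicit description in \cref{prop:cartexp-functor-p2}: $\MFunctor{\model{-}}$ sends an object $(\Delta \to \FamG)$ of $\FinGat/\FamG$ to $(\model{\Delta} \to \model{\FamG} \cong \FamS)$, and dually on morphisms. Since $\TW$ sends a theory $\Gamma$ to the object $\TW \Gamma \to \FamG$, the composite sends $\Gamma$ to precisely $(\model{\TW \Gamma} \to \FamS)$.

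There is essentially no real obstacle here, only bookkeeping: the only delicate point is the structural isomorphism $\model{\FamG} \cong \FamS$ identifying the two codomains, which is supplied precisely by the fact that the \kl{model functor} \kl{preserves pushforwards} along $\pGAT$, i.e.\ is a $\CartExp$-functor. All the genuine content has already been packaged into \cref{prop:cartexp-functor-p2}, and the present corollary is a direct application of that functoriality to the two $\CartExp$-functors in hand.
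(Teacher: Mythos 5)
Your proposal is correct and matches the paper's (implicit) argument exactly: the corollary is stated as the composite of $\TW$ with $\MFunctor{\model{-}}$, and all the content lives in \cref{prop:cartexp-functor-p2} together with the identifications $P\XGAT \cong \FamG$ and $P_{\CAT}(\mathbf{Set}) \cong \FamS$, just as you say. The unfolding check at the end is a nice sanity verification but adds nothing beyond what the paper takes as immediate.
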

Similarly, we may hope to 
decompose $Γ ↦ (\laxSliceCart[F_Γ] → \FamS)$
as the composition of the bi-initial $\CartExp$-functor to $\colaxSlice$, mapping 
$Γ$ to $F_Γ ∶ \model{Γ} → \FamS  $, and a suitable $\CartExp$-functor from $\colaxSlice$ to $\CAT/\FamS$, mapping 
 $F ∶ ℂ → \FamS$ to $\laxSliceCart[F] → \FamS$.
However, there is an issue when defining the action of the latter desired functor on morphisms. Indeed, the image of a morphism $α : F' ∘ G → F$ between $F ∶ ℂ → \FamS$ and $F' ∶ ℂ' → \FamS$ in $\colaxSlice$ must be  
a functor $\laxSliceCart[F] → \laxSliceCart[F']$ over $\FamS$. An obvious attempt is to define it as mapping a cartesian morphism $F(M) → (U,El)$ to the composite $F'(G(M)) \xrightarrow{α_M} F(M) → (U,El)$.
Unfortunately, this composite is not necessarily cartesian, so this mapping does not, in general, land in $\laxSliceCart[F']$. However, it does if $α_M$ is cartesian, suggesting the following definition.
\begin{definition}
  Let \AP$\intro*\colaxSliceCart$ denote the wide subcategory of $\colaxSlice$
  with the same objects, but restricting to \emph{(pointwise) cartesian morphisms}, i.e. to morphisms $α : F' ∘ G → F$ that for every object $c$ of $ℂ$, the component $α_c$ is a \kl{cartesian} morphism in $\FamS$.
\end{definition}
The subcategory $\colaxSliceCart$ actually inherits the $\CartExp$-structure of $\colaxSlice$.
\begin{proposition}
  \label{prop:cartesian-colax-cartexp}
  The category $\colaxSliceCart$ is a sub-$\CartExp$-category of $\colaxSlice$, in the sense that it has finite limits, the exponentiable morphism from \cref{thm:colax-exponentiable} is in $\colaxSliceCart$ and is again exponentiable, and finally, the functor $\colaxSliceCart → \colaxSlice$ preserves finite limits and pushforwards along this exponentiable morphism.
\end{proposition}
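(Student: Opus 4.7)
The plan is to verify each clause of the statement in turn, leveraging the explicit characterisation of \kl{cartesian} morphisms in $\FamS$ arising from the splitting cleavage of $\FamS \to \Set$: a morphism $f : (U, El) \to (U', El')$ is cartesian precisely when it is given by a function $f_U : U \to U'$ satisfying $El = El' \circ f_U$. Since such morphisms contain identities and are closed under composition, $\colaxSliceCart$ is a well-defined wide subcategory of $\colaxSlice$, and it also follows that cartesian morphisms are stable under the fibrewise pullbacks used to build limits in $\FamS$.

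For finite limits, I would build on \cref{prop:colax-slice-limits}: since $\colaxSlice \to \CAT$ is a bifibration, the limit of a diagram is obtained by taking the limit of the underlying categories and assembling the structural natural transformations componentwise in $\FamS$. For a diagram landing in $\colaxSliceCart$, every component of every structural natural transformation is a composite of cartesian morphisms of $\FamS$, hence itself cartesian; so the limit already lies in $\colaxSliceCart$. The universal property then restricts painlessly: any competing cone with cartesian components induces a mediating morphism in $\colaxSlice$ whose components are forced to be cartesian by the componentwise factorisation in $\FamS$. The inclusion $\colaxSliceCart \to \colaxSlice$ therefore preserves finite limits.

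The exponentiable morphism of \cref{thm:colax-exponentiable} is the forgetful functor $\PtdSet \to \Set$ equipped with the identity natural transformation between the relevant family functors; identities are cartesian, so this datum lies in $\colaxSliceCart$. To establish its exponentiability there, I would inspect the explicit constructions of the pullback and pushforward functors used in the proof of \cref{thm:colax-exponentiable}, and check that both preserve cartesianness of morphisms and that the unit and counit of the adjunction are pointwise cartesian. The main obstacle I expect is the pushforward step: being defined indirectly as a right adjoint, its action on a colax-slice morphism must be unfolded through the reconstruction of \cref{prop:exp-mor-P-def}, after which the cartesianness of the induced natural transformation has to be tracked through the resulting formula. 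Once this is verified, the pullback and pushforward of $\colaxSlice$ restrict on the nose to $\colaxSliceCart$, which yields both exponentiability of the chosen morphism there and preservation of pushforwards by the inclusion, while preservation of finite limits has already been treated.
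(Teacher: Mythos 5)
Your exponentiability argument follows essentially the same route as the paper: the appendix proof simply says that the explicit adjunction constructed for \cref{thm:colax-exponentiable} "can directly be extended" to $\colaxSliceCart$, and your checklist (the pullback and pushforward constructions preserve cartesianness, the unit and counit are pointwise cartesian, hence the adjunction restricts) is the right way to make that precise. That half of the proposal is fine.

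The finite-limits part contains a genuine gap. A morphism of $\colaxSlice$ from $(ℂ,F)$ to $(ℂ',F')$ carries a natural transformation $F'\circ G \to F$ pointing \emph{into} the source, so the limit of a diagram is computed by taking the limit $L$ of the underlying categories and then taking a \emph{colimit} — for a pullback, a pushout — in $[L,\FamS]$ of the whiskered structural transformations; the legs of the limit cone are the pushout coprojections. They are not "composites of cartesian morphisms", and the relevant stability statement is not stability of cartesian morphisms under "fibrewise pullbacks" (that has the variance backwards) but rather the fact, which is the crux of the paper's proof, that a pushout in $\FamS$ along pointwise cartesian transformations can be computed so that its coprojections are again cartesian. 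Your proposal never isolates this lemma, and the sentence "every component of every structural natural transformation is a composite of cartesian morphisms, hence itself cartesian" does not describe the actual limit cone. Relatedly, your claim that the mediating morphism of a competing cartesian cone is automatically cartesian needs an argument: cartesian morphisms are only left-cancellable, so one must additionally use that the pushout coprojections are jointly surjective on the index sets to propagate cartesianness from the composites to the mediating transformation. With the pushout lemma supplied and that cancellation argument made explicit, the rest of your plan goes through and agrees with the paper's.
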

Now we can define the desired $\CartExp$-functor from $\colaxSliceCart$ to $\CAT/\FamS$.
\begin{proposition}
  \label{prop:semantic-twosort-functor}
  The mapping $ (ℂ \xrightarrow{F} \FamS) ↦ (\laxSliceCart[F] → \FamS)$ extends to a $\CartExp$-functor from $\colaxSliceCart$ to $\CAT/\FamS$.
\end{proposition}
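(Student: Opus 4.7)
The plan is to construct the asserted functor, call it $S$, explicitly and then verify each required property. On objects, $S$ sends $F \colon ℂ \to \FamS$ to the projection $\laxSliceCart[F] \to \FamS$ extracting the codomain of a cartesian morphism. On morphisms, a morphism $(G, α) \colon F \to F'$ in $\colaxSliceCart$, where $α \colon F' \circ G \to F$ is pointwise cartesian, is sent to the functor that maps $(M,\, f \colon F(M) \to (U, El))$ to $(G(M),\, f \circ α_M \colon F'(G(M)) \to (U, El))$. The crucial observation is that $f \circ α_M$ is again cartesian: both $f$ and $α_M$ are cartesian by hypothesis, and cartesian morphisms in $\FamS$ are closed under composition, since they form the splitting cleavage of the fibration $\FamS \to \Set$. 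On morphisms in $\laxSliceCart[F]$, $S(G, α)$ acts by $G$ on the model part and by the identity on the family part; the required commuting square follows from naturality of $α$. Functoriality of $S$ and strict commutativity of the triangle over $\FamS$ are then routine.

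For preservation of finite limits, I would use \cref{prop:colax-slice-limits,prop:cartesian-colax-cartexp} to compute limits in $\colaxSliceCart$ on the underlying categories. An object in the limit of the $\laxSliceCart[F_i] \to \FamS$ computed in $\CAT/\FamS$ is equivalently a compatible cone of cartesian morphisms $F_i(M_i) \to (U, El)$, which is exactly an object of $\laxSliceCart[\lim F_i]$ once we observe that cartesian morphisms in $\FamS$ are stable under the pullbacks computing its limits. A similar pointwise argument on morphisms yields the required isomorphism of categories over $\FamS$.

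The main obstacle is preservation of pushforwards along the exponentiable morphism. By \cref{lem:preservation_pushforward}, it suffices to show that $S$ commutes up to natural isomorphism with the right adjoints $P$ associated with the exponentiable morphisms in each category. In $\colaxSliceCart$, the exponentiable morphism is $\PtdSet \to \Set$ with identity 2-cell (\cref{thm:colax-exponentiable}); its $P$ sends $F \colon ℂ \to \FamS$ to the family-forming functor $\FamCat{ℂ} \to \FamS$ induced by $F$. In $\CAT/\FamS$, the exponentiable morphism comes from \cref{prop:p2-construction,prop:p2-exponentiable} applied to $\CAT$ equipped with $\PtdSet \to \Set$, and its $P$ admits a concrete description in terms of families of cartesian morphisms out of $F$. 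I would construct the comparison by sending an $A$-indexed family $(F(M_a) \to (U_a, El_a))_{a \in A}$ of cartesian morphisms to the single cartesian morphism from the glued family $\{F(M_a)\}_{a}$ to $\{(U_a, El_a)\}_{a}$ in $\FamS$, and verify that this is a natural isomorphism by a careful diagram chase exploiting that the cartesian morphisms of $\FamS$ form a cleavage. This step is precisely where restricting from $\colaxSlice$ to $\colaxSliceCart$ is essential: without pointwise cartesianness of $α$, the pushforward comparison would fail, since a composite of a cartesian morphism with a non-cartesian one need not be cartesian.
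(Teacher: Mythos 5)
Your explicit description of the functor is the correct one: it is exactly the assignment the paper intends, and your observation that pointwise cartesianness of $α$ is what makes $f ∘ α_M$ land back in $\laxSliceCart[F']$ is precisely why the paper restricts from $\colaxSlice$ to $\colaxSliceCart$. Where you part ways with the paper is in how the $\CartExp$-functor properties are verified. The paper's key idea, which is missing from your proposal, is to recognise this functor as the \emph{right adjoint} of the evident embedding $\CAT/\FamS ↪ \colaxSliceCart$ (a strictly commuting triangle is a colax morphism whose 2-cell is an identity, hence cartesian; the hom-set bijection is immediate once one unwinds what a functor into $\laxSliceCart[F]$ over $\FamS$ is). Preservation of finite limits is then automatic, and preservation of pushforwards follows from a general mates argument: for an adjunction $L ⊣ R$ between $\CartExp$-categories with $R p_ℂ ≅ p_𝔻$, $L$ preserving pullbacks, and the counit's naturality square at $p_ℂ$ a pullback, the mate $P_ℂ R → R P_𝔻$ is invertible because the other mate, built from the left adjoints of the vertical functors, visibly is.

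Your hand-verification of the last two properties has genuine gaps. For limits, the limit in $\colaxSliceCart$ is the limit of underlying categories equipped with the \emph{pushout} of the whiskered structure 2-cells as its functor to $\FamS$ (see the proof of \cref{prop:cartesian-colax-cartexp}); the identification you need is between compatible cones of cartesian morphisms and cartesian morphisms out of that pushout, so ``stability of cartesian morphisms under the pullbacks computing limits in $\FamS$'' is not the relevant statement. More seriously, your description of $P$ on $\colaxSliceCart$ is wrong: by the proof of \cref{thm:colax-exponentiable}, $P$ sends $F ∶ ℂ → \FamS$ to $\FamCat{ℂ}$ equipped with the functor taking $(X, f ∶ X → ℂ)$ to the family $X + \sum_x F_1 f(x)$ over $1 + \sum_x F_0 f(x)$; the extra summand $X → 1$ (which is what creates the new sort $\U$ under two-sortification) is absent from ``the family-forming functor induced by $F$''. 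On the other side, $P_{\pp}$ in $\CAT/\FamS$ is the rather involved construction of \cref{prop:p2-construction} together with its explicit description at the start of the appendix, not an obvious gluing. As stated, the two sides of your comparison do not even have matching index sets, so the promised diagram chase cannot be carried out on the data you describe. Either compute both pushforward functors precisely, or switch to the adjoint characterisation, which avoids the computation entirely.
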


\begin{proposition}
  \label{prop:initial-functor-colax-slice-cart}
  There is a \kl{strict} bi-initial $\CartExp$-functor from $\FinGat$ to $\colaxSliceCart$ that maps a theory $Γ$ to a functor $\AP \intro*\FamFunctor{Γ} ∶ \model{Γ} → \FamS  $ which is isomorphic to the \kl{family functor} of $Γ$
  defined in \cref{cor:family-functor-gat}. Moreover, this isomorphism is natural in $Γ$.
\end{proposition}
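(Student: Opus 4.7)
The plan is to obtain the desired bi-initial $\CartExp$-functor by a direct application of \cref{thm:uemura} to the $\CartExp$-category $\colaxSliceCart$ established in \cref{prop:cartesian-colax-cartexp}, then strictify via \cref{prop:strictification}, and finally compare with the bi-initial functor into $\colaxSlice$ underlying \cref{cor:family-functor-gat}.

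Concretely, I would first invoke \cref{thm:uemura} on $\colaxSliceCart$ to obtain a $\CartExp$-functor $\FinGat \to \colaxSliceCart$. \Cref{prop:strictification} then applies, because the chosen exponentiable morphism in $\colaxSliceCart$ is the forgetful functor $\PtdSet \to \Set$, equipped with the identity 2-cell provided by \cref{thm:colax-exponentiable}, whose domain and codomain are distinct objects. The outcome is a strict bi-initial $\CartExp$-functor $\FamFunctor{-} \colon \FinGat \to \colaxSliceCart$, producing for each theory $\Gamma$ a functor $\FamFunctor{\Gamma} \colon \model{\Gamma} \to \FamS$.

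To identify $\FamFunctor{\Gamma}$ with $F_\Gamma$, I would post-compose $\FamFunctor{-}$ with the inclusion $\colaxSliceCart \hookrightarrow \colaxSlice$. This inclusion is a $\CartExp$-functor by \cref{prop:cartesian-colax-cartexp}, so the composite is a $\CartExp$-functor $\FinGat \to \colaxSlice$. By the uniqueness part of \cref{thm:uemura}, this composite is naturally isomorphic, through a unique isomorphism, to the bi-initial $\CartExp$-functor $\FinGat \to \colaxSlice$ of \cref{cor:family-functor-gat}. The component of this 2-cell at $\Gamma$ supplies the desired isomorphism $\FamFunctor{\Gamma} \cong F_\Gamma$, and naturality in $\Gamma$ is automatic from the fact that this is itself a natural transformation between functors out of $\FinGat$.

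I do not anticipate a serious obstacle beyond bookkeeping: the substantive content has already been packaged into \cref{thm:colax-exponentiable} and \cref{prop:cartesian-colax-cartexp}, which supply the $\CartExp$-structure on $\colaxSliceCart$ and ensure that the inclusion into $\colaxSlice$ is a $\CartExp$-functor. The only point deserving a line of verification is that the isomorphism obtained in the last step respects the underlying projection to $\FamS$, which is immediate because the inclusion $\colaxSliceCart \hookrightarrow \colaxSlice$ fixes objects and hence preserves this projection on the nose.
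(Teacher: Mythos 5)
Your proposal is correct and follows essentially the same route as the paper: the paper's proof is a one-liner observing that the composite $\FinGat \to \colaxSliceCart \to \colaxSlice$ is a $\CartExp$-functor and hence, by uniqueness of the bi-initial $\CartExp$-functor, isomorphic to the one from \cref{cor:family-functor-gat}. Your additional remarks on strictification and compatibility with the projection to $\FamS$ are harmless elaborations of the same argument.
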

\begin{proof}
By uniqueness (up to isomorphism) of the initial $\CartExp$-functor,
the composition $\FinGat → \colaxSliceCart → \colaxSlice$ is isomorphic to the bi-initial $\CartExp$-functor from $\FinGat$ to $\colaxSlice$.
\end{proof}
The following corollary concludes the argument for the proof of \cref{thm:characterisation-models-two-sortification}, as bi-initiality implies that
the $\CartExp$-functor considered here is isomorphic to the one considered in \cref{cor:models-two-sortification-cartexp-functor}.
  \begin{corollary}
    \label{cor:models-two-sortification-lax-slice-cart}
  There is a $\CartExp$-functor from $\FinGat$ to the slice category $\CAT/\FamS$
  that maps a theory $Γ$ to $\laxSliceCart[\FamFunctor{Γ}] → \FamS$,
  with $\FamFunctor{Γ} ∶ \model{Γ} → \FamS$ being isomorphic to the \kl{family functor} of $Γ$,
  and this isomorphism is natural in $Γ$.
  This $\CartExp$-functor is the composition of the bi-initial $\CartExp$-functor to $\colaxSliceCart$ defined in \cref{prop:initial-functor-colax-slice-cart}, mapping 
$Γ$ to $\FamFunctor{Γ} ∶ \model{Γ} → \FamS  $, and the $\CartExp$-functor from $\colaxSliceCart$ to $\CAT/\FamS$, mapping
 $F ∶ ℂ → \FamS$ to $\laxSliceCart[F] → \FamS$.
\end{corollary}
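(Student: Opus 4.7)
The plan is to verify that the composition of the two $\CartExp$-functors from \cref{prop:initial-functor-colax-slice-cart} and \cref{prop:semantic-twosort-functor} is itself a $\CartExp$-functor with the claimed action on objects. First, I would invoke the general fact that $\CartExp$-functors compose: preservation of finite limits composes trivially; the mediating isomorphism on the distinguished exponentiable morphism is obtained by pasting the two given ones; and preservation of pushforwards composes because the canonical natural transformation of \cref{def:preserves-pushforwards} for a composite factors as a pasting of each component's canonical natural transformation, hence is an isomorphism whenever each factor preserves pushforwards (this is standard properties of mates under composition of adjoints).

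Second, I would unfold the action of the composite on a theory $\Gamma$. By \cref{prop:initial-functor-colax-slice-cart}, the first functor sends $\Gamma$ to the object $\FamFunctor{\Gamma} \colon \model{\Gamma} \to \FamS$ of $\colaxSliceCart$; by \cref{prop:semantic-twosort-functor} the second functor sends this to $\laxSliceCart[\FamFunctor{\Gamma}] \to \FamS$ in $\CAT/\FamS$, which is exactly the stated action on objects. The isomorphism between $\FamFunctor{\Gamma}$ and the \kl{family functor} of $\Gamma$, and the naturality of this isomorphism in $\Gamma$, is already provided by \cref{prop:initial-functor-colax-slice-cart}; the image of this natural isomorphism under the $\CartExp$-functor of \cref{prop:semantic-twosort-functor} remains a natural isomorphism, which gives the required naturality at the level of $\CAT/\FamS$.

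The main obstacle, if any, is purely formal: carefully checking that the class of $\CartExp$-functors is closed under composition, especially that the mate construction interacts well with composition so that preservation of pushforwards is inherited. No new categorical content beyond what has already been established is needed; the corollary is essentially a bookkeeping consequence of \cref{prop:initial-functor-colax-slice-cart} and \cref{prop:semantic-twosort-functor}.
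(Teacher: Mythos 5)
Your proposal is correct and matches the paper's (implicit) argument: the corollary is obtained precisely by composing the bi-initial $\CartExp$-functor of \cref{prop:initial-functor-colax-slice-cart} with the $\CartExp$-functor of \cref{prop:semantic-twosort-functor}, the naturality claim being inherited directly from the former. The only detail you elaborate that the paper leaves tacit is closure of $\CartExp$-functors under composition, which is already guaranteed by $\CartExp$ being a 2-category (\cref{thm:uemura}).
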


\subsection{Recovering the Right Adjoint from the Coreflector Morphism}
\label{sec:connection-coreflector-morphism}
In this section, we show that the right adjoint of the strict coreflection between $\model{\TW Γ}$ and $\model{Γ}$ constructed 
in \cref{cor:coreflection-models-two-sortification} is (up to isomorphism) the image of the \kl{coreflector morphism} of $Γ$ by the \kl{model functor}.

The proof consists in exploiting \cref{thm:universal-property-natural-transformation} 
to conclude that the two below natural transformations from $\model{\TW -}$ to $\model{-}$ are identical,
where $α_Γ$ is the image of the \kl{coreflector morphism} $\TW Γ → Γ$ by the \kl{model functor},
and $β_Γ$ is, up to isomorphism, the right adjoint of the coreflection
constructed in \cref{cor:coreflection-models-two-sortification}.
\[
\begin{tikzpicture}
\draw[white,-,line width=0.20833333333333334em] (10.677083333333334em,-8.59375em) .. controls (13.29861111111111em,-8.59375em) and (15.920138888888888em,-8.59375em) .. (18.541666666666668em,-8.59375em);
\draw[black,->, ] (10.677083333333334em,-8.59375em) .. controls (13.29861111111111em,-8.59375em) and (15.920138888888888em,-8.59375em) .. (18.541666666666668em,-8.59375em);
\draw[white,-,curve={ratio=-0.2}, line width=0.20833333333333334em] (17.526041666666668em,-5.550575461716435em) .. controls (18.456472110582794em,-6.01273632784185em) and (19.099201504950006em,-6.73198895171415em) .. (19.4542298497683em,-7.708333333333334em);
\draw[black,->, curve={ratio=-0.2}, ] (17.526041666666668em,-5.550575461716435em) .. controls (18.456472110582794em,-6.01273632784185em) and (19.099201504950006em,-6.73198895171415em) .. (19.4542298497683em,-7.708333333333334em);
\draw[white,-,curve={ratio=-0.2}, line width=0.20833333333333334em] (9.257888123516475em,-7.708333333333334em) .. controls (9.779297361144728em,-6.607849669025451em) and (10.608265208861459em,-5.838953143804262em) .. (11.744791666666668em,-5.401643757669764em);
\draw[black,->, curve={ratio=-0.2}, ] (9.257888123516475em,-7.708333333333334em) .. controls (9.779297361144728em,-6.607849669025451em) and (10.608265208861459em,-5.838953143804262em) .. (11.744791666666668em,-5.401643757669764em);
\draw[white,identity,curve={ratio=0.4}, line width=0.20833333333333334em] (13.916740860348416em,-5.8020830154418945em) .. controls (13.40517946358307em,-6.634222960699159em) and (13.499173707144328em,-7.391167511107416em) .. (14.198723591032193em,-8.072916666666668em);
\draw[black,->, cell=0.05, curve={ratio=0.4}, ] (13.916740860348416em,-5.8020830154418945em) .. controls (13.40517946358307em,-6.634222960699159em) and (13.499173707144328em,-7.391167511107416em) .. (14.198723591032193em,-8.072916666666668em);
\draw[white,identity,curve={ratio=-0.3}, line width=0.20833333333333334em] (15.15817465987538em,-5.8020830154418945em) .. controls (15.535254147572578em,-6.605279911378805em) and (15.458166905024822em,-7.362224461787061em) .. (14.926912932232115em,-8.072916666666668em);
\draw[black,->, cell=0.05, curve={ratio=-0.3}, ] (15.15817465987538em,-5.8020830154418945em) .. controls (15.535254147572578em,-6.605279911378805em) and (15.458166905024822em,-7.362224461787061em) .. (14.926912932232115em,-8.072916666666668em);
\node at (8.59375em,-8.59375em) {$\FinGat$} ;
\node at (20.052083333333336em,-8.59375em) {$\CAT$} ;
\node at (14.635416666666668em,-4.9166663487752285em) {$\CAT/\Fam$} ;
\node[scale=0.7] at (14.609375em,-9.105546729354133em) {$\model{-}$} ;
\node[scale=0.7] at (19.46008876821906em,-5.762697308302711em) {$\dom$} ;
\node[scale=0.7] at (9.479508567907615em,-5.453324942871095em) {$\model{-} ∘ \TW$} ;
\node[scale=0.7] at (13.173536840241498em,-7.047295567003813em) {$\alpha $} ;
\node[scale=0.7] at (15.803927684828935em,-7.0150391858772405em) {$\beta $} ;
\end{tikzpicture}
  \]

\begin{proposition}
  The families of morphisms $(α_Γ)_Γ$ and $(β_Γ)_Γ$ are natural.
  Moreover, their naturality squares for $\pGAT$ 
  are both the right pullback
  of Equation~\eqref{eq:p2-diagram}, for $ℂ = \CAT$.

\end{proposition}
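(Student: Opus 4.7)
The plan is to establish naturality and identify the naturality square at $\pGAT$ separately for $\alpha$ and $\beta$; in both cases the target will be the image under the \kl{model functor} of the right pullback of Equation~\eqref{eq:p2-diagram} in $\FinGat$, which coincides with the right pullback in $\CAT$ because $\model{-}$ is a \kl{strict} $\CartExp$-functor \kl{preserving pushforwards} along $\pGAT$.

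For $\alpha$, we would recall that by \cref{def:coreflector-morphism} the family $\coref{-}$ is the natural transformation $\dom \circ \TW \Rightarrow \id_{\FinGat}$ produced by \cref{thm:universal-property-natural-transformation} applied to the right pullback of Equation~\eqref{eq:p2-diagram} for $\CC = \FinGat$; by construction, its naturality square at $\pGAT$ is that very pullback. Whiskering with $\model{-}$ immediately makes $\alpha = \model{\coref{-}}$ natural. Since $\model{-}$ preserves pullbacks, maps $\pGAT$ on the nose to $p_\CAT \colon \PtdSet \to \Set$, and preserves pushforwards along $\pGAT$, the image of $\XX_{\FinGat}, \YY_{\FinGat}, \pp_{\FinGat}$ agrees, modulo the canonical comparison isomorphisms, with $\XX_\CAT, \YY_\CAT, \pp_\CAT$, so the image pullback is the right pullback of Equation~\eqref{eq:p2-diagram} for $\CC = \CAT$.

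For $\beta$, the strategy is to use the isomorphism $\model{\TW \Gamma} \cong \laxSliceCart[\FamFunctor{\Gamma}]$ of \cref{thm:characterisation-models-two-sortification} to identify the right adjoint of \cref{cor:coreflection-models-two-sortification} with the comma projection $\laxSliceCart[\FamFunctor{\Gamma}] \to \model{\Gamma}$. This projection is 2-natural in $\FamFunctor{\Gamma}$ viewed as an object of $\colaxSliceCart$, so whiskering with the bi-initial $\CartExp$-functor of \cref{prop:initial-functor-colax-slice-cart} makes $\beta$ natural. To compute its naturality square at $\pGAT$, one would unfold $\FamFunctor{(\GVar{A}:\GatSet)}$ and $\FamFunctor{(\GVar{A}:\GatSet,\GVar{a}:A)}$ via \cref{ex:family-functor-simple-gats}: both send a (pointed) set $X$ to the family with single index $*$ and fibre $X$, so a \kl{cartesian} morphism out of these into $(U, El)$ amounts to picking a point $u \in U$ with $El(u) = X$. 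This yields canonical equivalences $\laxSliceCart[\FamFunctor{(\GVar{A}:\GatSet)}] \cong \XX_\CAT$ and $\laxSliceCart[\FamFunctor{(\GVar{A}:\GatSet,\GVar{a}:A)}] \cong \YY_\CAT$, under which $\beta_{(\GVar{A}:\GatSet)}$ and $\beta_{(\GVar{A}:\GatSet,\GVar{a}:A)}$ become the horizontal arrows $\XX_\CAT \to \Set$ and $\YY_\CAT \to \PtdSet$ of the right pullback in Equation~\eqref{eq:p2-diagram} for $\CC = \CAT$.

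The hard part will be the bookkeeping on the $\beta$ side: rigorously matching the bi-initially-produced $\beta$ with the comma projection naturally in $\Gamma$, and checking that the ad-hoc equivalences at $(\GVar{A}:\GatSet)$ and $(\GVar{A}:\GatSet,\GVar{a}:A)$ are compatible with the ambient pullback structure. The cleanest fix is to repackage the comma projection $\laxSliceCart[F] \to \CC$ as a natural transformation between two $\CartExp$-functors from $\colaxSliceCart$, so that whiskering with the bi-initial functor out of $\FinGat$ forces the required commutations by uniqueness of bi-initial morphisms.
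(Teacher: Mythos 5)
The paper states this proposition without proof, and your argument is the natural unfolding of its definitions, correctly carried out: $\alpha$ is the whiskering of the coreflector natural transformation of \cref{def:coreflector-morphism} with the \kl{model functor}, so its naturality square at $\pGAT$ is the image of the defining pullback, while $\beta$ is the comma projection transported along the natural isomorphism of \cref{thm:characterisation-models-two-sortification}, and your identifications of $\laxSliceCart[\FamFunctor{\XGAT}]$ and $\laxSliceCart[\FamFunctor{\YGAT}]$ with $\XX_\CAT$ and $\YY_\CAT$, under which $\beta_{\XGAT}$ and $\beta_{\YGAT}$ become the counit $\epsilon$ and the top arrow of the right pullback in Equation~\eqref{eq:p2-diagram}, agree with \cref{ex:coreflector-simple-gats}. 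The only caveat is that both identifications hold a priori only up to the canonical comparison isomorphisms coming from preservation of pushforwards, which is the same level of strictness at which the paper itself operates (cf.\ the ``up to isomorphism'' in its description of $\beta$), so this is not a gap.
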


We finally get the equality of $α$ and $β$ by a direct application of \cref{thm:universal-property-natural-transformation}.
\begin{corollary}
  The natural transformations $α$ and $β$ are equal. 
  In particular, the image of the coreflector morphism 
  by the \kl{model functor} coincide (up to isomorphism) with the 
  right adjoint of the coreflection of 
  \cref{cor:coreflection-models-two-sortification}.
\end{corollary}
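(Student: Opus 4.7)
The plan is to deduce $\alpha = \beta$ by a direct invocation of \cref{thm:universal-property-natural-transformation}, using the preceding proposition which precisely arranges that the hypotheses of that theorem are met. Concretely, I would set $F := \dom \circ \model{-} \circ \TW : \FinGat \to \CAT$ and $G := \model{-} : \FinGat \to \CAT$, so that $\alpha$ and $\beta$ are both natural transformations $F \Rightarrow G$.

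The first thing to verify is that $F$ preserves pullbacks along $\pGAT$ and that $G$ is a $\CartExp$-functor, as required by the theorem. For $G$, this holds by construction (see \cref{exa:CAT} and \cref{prop:strictification}). For $F$, observe that $\TW$ is a $\CartExp$-functor (\cref{def:two-sortification}), the composite $\model{-} \circ \TW : \FinGat \to \CAT/\FamS$ produced in \cref{cor:models-two-sortification-cartexp-functor} is also a $\CartExp$-functor, and the forgetful $\dom : \CAT/\FamS \to \CAT$ preserves pullbacks (in fact creates them from the base). Hence $F$ preserves all pullbacks, in particular those along $\pGAT$.

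Next, the preceding proposition identifies the naturality square at $\pGAT$ of both $\alpha$ and $\beta$ with the right pullback square of Equation~\eqref{eq:p2-diagram} for $\CC = \CAT$. Since \cref{thm:universal-property-natural-transformation} asserts that, given such a pullback square, there is a \emph{unique} natural transformation $F \Rightarrow G$ whose naturality square at $\pGAT$ is the given pullback, we conclude $\alpha = \beta$ at once. The ``in particular'' clause of the corollary is then immediate by unfolding definitions: $\alpha_\Gamma$ is by definition the image of the \kl{coreflector morphism} $\coref{\Gamma}$ under the \kl{model functor}, while $\beta_\Gamma$ is, up to the isomorphism of \cref{thm:characterisation-models-two-sortification}, the right adjoint of the coreflection of \cref{cor:coreflection-models-two-sortification}.

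There is essentially no obstacle, since all the work has been done upstream: the soundness check for $F$ preserving pullbacks is routine, and the core content—matching the naturality squares to the canonical pullback square of \eqref{eq:p2-diagram}—is exactly the statement of the preceding proposition. The only minor care needed is to keep track of the ``up to isomorphism'' slack in the definition of $\beta$ (coming from \cref{cor:coreflection-models-two-sortification} via \cref{thm:characterisation-models-two-sortification}), which is why the conclusion of the corollary is phrased as an equality of natural transformations for $\alpha$ and $\beta$, but only an isomorphism between $\model{\coref{\Gamma}}$ and the right adjoint of the coreflection.
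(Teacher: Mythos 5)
Your proposal is correct and follows exactly the paper's argument: the paper likewise deduces $\alpha=\beta$ by a direct application of \cref{thm:universal-property-natural-transformation}, using the preceding proposition to identify both naturality squares at $\pGAT$ with the right pullback of Equation~\eqref{eq:p2-diagram} for $\CC=\CAT$. The extra checks you spell out (that $\dom\circ\model{-}\circ\TW$ preserves pullbacks and that $\model{-}$ is a $\CartExp$-functor) are the routine hypotheses the paper leaves implicit.
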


\section{Initial Models of GATs}
  \label{sec:initial-models-gats}
  In this section, we provide a direct proof 
  that any theory has an initial model, exploiting
  the bi-initiality property of $\FinGat$.

  The first step consists in constructing 
a suitable model of a theory $Ω$.
As in \cite{DBLP:journals/pacmpl/KaposiKA19}, we exploit
the Yoneda lemma~\cite[§III.2]{MacLane:cwm}: 
an object of $\model{Ω}$ is equivalently given by a natural 
transformation from the representable functor $\hom(Ω, -)$ to the 
model functor. The point is that we can use \cref{thm:universal-property-natural-transformation}
to construct such a natural transformation.
Before, we introduce some convenient type-theoretic notations to denote morphisms to $\XGAT$ and $\YGAT$ in $\FinGat$.

  In the rest of this section, we assume given a theory $Ω$, 
  the goal being to construct an initial model of it.
\begin{notation}
  \label{not:mor-terms-set}
  Given $A ∶ Ω → \XGAT$, we denote the set of morphisms $t ∶Ω → \YGAT$ such that 
  $\pGAT ∘ t = A$ by $\AP\intro*\TmSet{A}$.
  The notation is motivated by the fact that 
  a morphism $Ω → \XGAT$ corresponds to 
  a term $Ω ⊢ A ∶ \GatSet$, and a morphism $Ω → \YGAT$ above $\XGAT$ 
  corresponds to a term $Ω ⊢ t ∶ A$.

  Given a model $ω$ of $Ω$ and element $t ∈ \TmSet{A}$, the pointed set $\model{t}(ω)$ is
necessarily $\model{A}(ω)$, with 
a distinguished element denoted by
$\AP\intro*\modelTerm{t}{ω} $. 
\end{notation}
As a direct consequence of the functoriality of 
 $\model{t}$, we get naturality of $\modelTerm{t}{ω}$ in $ω$:
\begin{lemma}
  \label{lem:model-term-natural}
  
Given any morphism of models $f ∶ ω → ω'$, we have
$\model{A}(f)(\modelTerm{t}{ω}) = \modelTerm{t}{ω'}$.
\end{lemma}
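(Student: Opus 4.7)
The plan is to reduce the claim to functoriality of $\model{t} : \model{\Omega} \to \PtdSet$ together with the fact that morphisms of pointed sets preserve the distinguished element.

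First, I would recall that the \kl{model functor} is a \kl{strict} $\CartExp$-functor, so $\model{\pGAT}$ is literally the forgetful functor $\PtdSet \to \Set$. Applying it to $A : \Omega \to \XGAT$ and $t : \Omega \to \YGAT$ yields functors $\model{A} : \model{\Omega} \to \Set$ and $\model{t} : \model{\Omega} \to \PtdSet$. The hypothesis $\pGAT \circ t = A$ together with functoriality gives $\model{\pGAT} \circ \model{t} = \model{A}$, so the underlying set of $\model{t}(\omega)$ is exactly $\model{A}(\omega)$, and by the definition of $\modelTerm{t}{\omega}$ in \cref{not:mor-terms-set} the pointed set $\model{t}(\omega)$ is $(\model{A}(\omega), \modelTerm{t}{\omega})$, and similarly $\model{t}(\omega') = (\model{A}(\omega'), \modelTerm{t}{\omega'})$.

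Next, applying $\model{t}$ to the morphism $f : \omega \to \omega'$ yields a morphism $\model{t}(f)$ in $\PtdSet$ between these two pointed sets. Its underlying function is $\model{\pGAT}(\model{t}(f)) = \model{A}(f)$, and since morphisms in $\PtdSet$ preserve the basepoint on the nose we immediately get $\model{A}(f)(\modelTerm{t}{\omega}) = \modelTerm{t}{\omega'}$, which is precisely the claim.

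There is no substantive obstacle here: the lemma is essentially a repackaging of the fact that $\model{t}(f)$ is a morphism of pointed sets, which is built into the definitions once one observes that strictness of the \kl{model functor} lets us equate $\model{\pGAT}$ with the forgetful functor rather than merely an isomorphic one. Had the model functor been only isomorphic to a strict $\CartExp$-functor, one would need to transport the basepoints along the coherence isomorphism, but \cref{prop:strictification} rules this out.
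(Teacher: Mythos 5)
Your proof is correct and follows exactly the route the paper intends: the paper justifies this lemma in one line as "a direct consequence of the functoriality of $\model{t}$", and your argument simply unfolds that — $\model{t}(f)$ is a morphism in $\PtdSet$ with underlying function $\model{A}(f)$, hence preserves the distinguished element. Your added remark about strictness of the \kl{model functor} (so that $\model{\pGAT}$ is literally the forgetful functor) is a fair clarification of why no coherence isomorphism needs to be transported.
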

We now apply 
\cref{thm:universal-property-natural-transformation}
to construct our candidate initial model, using the above notations.
\begin{proposition}
  \label{prop:initial-model-gat-existence}
  There is a unique model $0_Ω$ of $Ω$
  such that, 
  \begin{itemize}
    \item for any $A ∶ Ω → \XGAT$, 
    the set $\model{A}(0_Ω)$ is $\TmSet{A}$;
    \item for any $t ∈ \TmSet{A}$, 
    the element $\modelTerm{t}{0_Ω}$ of $\model{A}(0_Ω)$ is $t$ itself. 
  \end{itemize}
\end{proposition}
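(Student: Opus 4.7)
The plan is to apply \cref{thm:universal-property-natural-transformation} with $F = \Hom(\Omega, -)$, viewed as a functor $\FinGat \to \CAT$ by composing with the embedding of $\Set$ as discrete categories, and $G = \model{-}$, which is a $\CartExp$-functor by construction. Since $\Hom(\Omega, -)$ preserves all small limits and the discrete embedding $\Set \hookrightarrow \CAT$ preserves pullbacks, $F$ preserves pullbacks along $\pGAT$. By a Yoneda-style argument, a natural transformation $\alpha : F \Rightarrow \model{-}$ is determined by its value $\alpha_\Omega(\mathrm{id}_\Omega)$: naturality forces $\alpha_Z(A : \Omega \to Z) = \model{A}(\alpha_\Omega(\mathrm{id}_\Omega))$. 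Hence a model $M$ of $\Omega$ satisfying the stated conditions is precisely the same data as a natural transformation whose naturality square at $\pGAT$ is a certain pullback, which we exhibit next.

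The candidate naturality square has bottom map $A \mapsto \TmSet{A}$ from $\Hom(\Omega, \XGAT)$ to $\Set = \model{\XGAT}$, top map $t \mapsto (\TmSet{\pGAT \circ t},\, t)$ from $\Hom(\Omega, \YGAT)$ to $\PtdSet = \model{\YGAT}$, left vertical $t \mapsto \pGAT \circ t$, and right vertical the forgetful $\PtdSet \to \Set$. Commutativity is immediate. To check it is a pullback in $\CAT$: an object of the pullback is a pair $(A, (S, s))$ with $\TmSet{A} = S$, which corresponds bijectively to an element of $\TmSet{A}$, hence to a morphism $t : \Omega \to \YGAT$ above $A$; and since the left leg comes from a discrete source, the only matching morphisms on the right side are identities, so morphisms in the pullback are forced to be identities. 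Thus the comparison $\Hom(\Omega, \YGAT) \to \text{pullback}$, $t \mapsto (\pGAT \circ t,\, (\TmSet{\pGAT \circ t}, t))$, is an isomorphism of categories.

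Then \cref{thm:universal-property-natural-transformation} delivers a unique natural transformation $\alpha : F \Rightarrow \model{-}$ realising this square as its naturality at $\pGAT$. Set $0_\Omega := \alpha_\Omega(\mathrm{id}_\Omega)$. By naturality, $\model{A}(0_\Omega) = \alpha_\XGAT(A) = \TmSet{A}$; and for $t \in \TmSet{A}$, the top map of the pullback forces $\alpha_\YGAT(t) = (\TmSet{A},\, t)$, whence $\modelTerm{t}{0_\Omega} = t$. For uniqueness, any second model $M$ satisfying the two conditions induces a natural transformation $\beta_Z(A) := \model{A}(M)$ whose naturality square at $\pGAT$ is, by the assumptions on $M$, exactly the candidate pullback above; the uniqueness part of \cref{thm:universal-property-natural-transformation} then yields $\beta = \alpha$, so $M = \beta_\Omega(\mathrm{id}_\Omega) = 0_\Omega$.

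The main subtlety lies in the pullback verification in $\CAT$: because $\PtdSet$ has nontrivial morphisms, one must check that the rigidity coming from the discrete source of the left leg collapses the pullback to a discrete category matching $\Hom(\Omega, \YGAT)$. Once this is in place, the rest is a direct application of the universal property provided by \cref{thm:universal-property-natural-transformation}.
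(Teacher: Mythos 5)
Your proposal is correct and follows essentially the same route as the paper: both apply \cref{thm:universal-property-natural-transformation} to the representable functor $\hom(\Omega,-)$ and the model functor, using the pullback square at $\pGAT$ determined by the two conditions, and then extract $0_\Omega$ via the Yoneda lemma by evaluating at $\mathrm{id}_\Omega$. Your write-up merely spells out details the paper leaves implicit (the preservation of pullbacks by the discrete-category embedding and the verification that the candidate square is a pullback in $\CAT$).
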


   
  \begin{proof}
Any model $ω$ induces a natural transformation from $\hom(Ω,-)$ to the \kl{model functor}
such that $\hom(Ω, Γ) → \model{Γ}$ maps a morphism $σ ∶ Ω → Γ$ to $\model{σ}(ω)$.
In this respect, the two conditions determines the following naturality square for $\pGAT$.
\begin{equation}
  \label{eq:naturality-square-initial-model}
\begin{tikzpicture}
\draw[white,-,line width=0.20833333333333334em] (11.71875em,-7.916666666666667em) .. controls (11.71875em,-8.44791677263048em) and (11.71875em,-8.979166878594292em) .. (11.71875em,-9.510416984558105em);
\draw[black,->, ] (11.71875em,-7.916666666666667em) .. controls (11.71875em,-8.44791677263048em) and (11.71875em,-8.979166878594292em) .. (11.71875em,-9.510416984558105em);
\draw[white,-,line width=0.20833333333333334em] (14.453125em,-7.03125em) .. controls (15.182291666666668em,-7.03125em) and (15.911458333333334em,-7.03125em) .. (16.640625em,-7.03125em);
\draw[black,->, ] (14.453125em,-7.03125em) .. controls (15.182291666666668em,-7.03125em) and (15.911458333333334em,-7.03125em) .. (16.640625em,-7.03125em);
\draw[white,-,line width=0.20833333333333334em] (18.645833333333336em,-7.916666666666667em) .. controls (18.645833333333336em,-8.44791677263048em) and (18.645833333333336em,-8.979166878594292em) .. (18.645833333333336em,-9.510416984558105em);
\draw[black,->, ] (18.645833333333336em,-7.916666666666667em) .. controls (18.645833333333336em,-8.44791677263048em) and (18.645833333333336em,-8.979166878594292em) .. (18.645833333333336em,-9.510416984558105em);
\draw[white,-,line width=0.20833333333333334em] (14.583333333333334em,-10.395833651224773em) .. controls (15.581597222222221em,-10.395833651224772em) and (16.57986111111111em,-10.395833651224772em) .. (17.578125em,-10.395833651224773em);
\draw[black,->, ] (14.583333333333334em,-10.395833651224773em) .. controls (15.581597222222221em,-10.395833651224772em) and (16.57986111111111em,-10.395833651224772em) .. (17.578125em,-10.395833651224773em);
\draw[white] (11.979166666666668em,-8.4375em) -- (14.973958333333334em,-8.4375em);
\draw[white] (14.973958333333334em,-8.4375em) -- (14.973958333333334em,-7.291666666666667em);
\draw[black] (11.979166666666668em,-8.4375em) -- (14.973958333333334em,-8.4375em);
\draw[black] (14.973958333333334em,-8.4375em) -- (14.973958333333334em,-7.291666666666667em);
\node at (11.71875em,-7.03125em) {$\hom(Ω,\YGAT)$} ;
\node at (11.71875em,-10.395833651224773em) {$\hom(Ω,\XGAT)$} ;
\node at (18.645833333333336em,-7.03125em) {$\PtdSet$} ;
\node at (18.645833333333336em,-10.395833651224773em) {$𝐒𝐞𝐭$} ;
\end{tikzpicture}
\end{equation}
    By \cref{thm:universal-property-natural-transformation}, 
    there is a unique natural transformation from 
    $\hom(Ω, -)$ to the model functor, and by the Yoneda lemma,
    it corresponds to a unique model $0_Ω$ of $Ω$, 
    recovered from the natural transformation by 
    evaluating it at $id_Ω ∈ \hom(Ω, Ω)$.
  \end{proof}

  Next, we assume given a model $ω$ of $Ω$.
  To prove that $0_Ω$ is initial, we need to construct a morphism from $0_Ω$ to $ω$ 
  and show that it is unique. Note that a morphism in $\model{Ω}$ is an object 
  of its category $\model{Ω}^→$ of arrows.
  This suggests using a similar strategy 
  to construct a natural transformation
  from $\hom(Ω, -)$ to $\model{-}^→$. To apply \cref{thm:universal-property-natural-transformation},
  we need $\FinGat \xrightarrow{\model{-}} \CAT \xrightarrow{(-)^→} \CAT$ to be a $\CartExp$-functor. 
  By definition, the \kl{model functor} is a $\CartExp$-functor, so we only need to ensure that $(-)^→$ is.
  But clearly, $(-)^→$ does not map the exponentiable functor
  $p_{\CAT} ∶\PtdSet → 𝐒𝐞𝐭$ to itself, so it cannot be a $\CartExp$-functor, unless 
  we consider the codomain $\CAT$ with a different $\CartExp$-structure, 
  choosing the exponentiable morphism to be precisely 
  the image of $p_{\CAT}$ by $(-)^→$.
  Unfortunately, $(-)^→ ∶ \CAT → \CAT$ is not 
  a $\CartExp$-functor between $(\CAT,p_{\CAT})$ and $(\CAT,p^→_{\CAT})$, because it does not 
  \kl{preserve pushforwards} along $\PtdSet → 𝐒𝐞𝐭$.
However, the following variant works.
  \begin{proposition}
    \label{prop:arrow-functor-cartexp}
    Let $\dom_- ∶ \CAT → \CAT^→$ denote the functor mapping a category $C$ to the codomain functor $\cod_C ∶ C^→ → C$.
    This induces a $\CartExp$-functor between
    the $\CartExp$-categories $(\CAT,p_{\CAT})$ and $(\CAT^→, \cod_-(p_{\CAT}))$.
  \end{proposition}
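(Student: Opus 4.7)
The plan is to verify the three properties required for $\dom_- \colon \CAT \to \CAT^\to$ (which sends a category $C$ to the object $\cod_C \colon C^\to \to C$ of $\CAT^\to$) to be a $\CartExp$-functor: preservation of finite limits, exponentiability of $\dom_-(p_\CAT)$ in $\CAT^\to$, and preservation of pushforwards along $p_\CAT$ in the sense of \cref{def:preserves-pushforwards}.

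For finite-limit preservation, I would argue that $\dom_-$ is a right adjoint, its left adjoint sending an arrow $f \colon A \to B$ to the collage (lax colimit) of $f$: the category on $A \sqcup B$ augmented by, for each $a \in A$, a universal morphism $a \to f(a)$. Alternatively, one can verify this pointwise, since limits in $\CAT^\to$ are computed at domain and codomain and both $(-)^\to \colon \CAT \to \CAT$ (as the cotensor with the walking arrow) and the identity of $\CAT$ preserve all limits.

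For the remaining two properties, I would exhibit the pushforward along $\dom_-(p_\CAT)$ explicitly and thereby obtain exponentiability and preservation simultaneously. Concretely, at an object of the form $\dom_-(D)$ for $D \in \CAT$, the claim is that the pushforward is $\dom_-(\Fam(D) \to \Set) = \dom_-(P_{p_\CAT}(D))$, viewed in $\CAT^\to / \dom_-(\Set)$ via the projection $\Fam(D) \to \Set$. Unpacking its universal property, a morphism from a generic $E$ into it in $\CAT^\to / \dom_-(\Set)$ decomposes into a level-$1$ functor $E_1 \to \Fam(D)$ over $\Set$ together with a compatible level-$0$ functor $E_0 \to \Fam(D)^\to$ over $\Set^\to$, the compatibility going through $\cod_{\Fam(D)}$. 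Applying the universal property of $\Fam(D)$ at level $1$ and that of the pushforward of $D^\to$ along $(p_\CAT)^\to$ at level $0$, this data corresponds bijectively to a morphism $E \times_{\dom_-(\Set)} \dom_-(\PtdSet) \to \dom_-(D)$ in $\CAT^\to$, which is precisely the universal property of the pushforward. The same construction applied to an arbitrary $f \colon A \to B$ in $\CAT^\to$ (replacing $\Fam(D)^\to$ by the subcategory of $P_{(p_\CAT)^\to}(A)$ cut out by the $f$-compatibility with the level-$1$ data) produces the pushforward at $f$, giving the full exponentiability of $\dom_-(p_\CAT)$.

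The main obstacle is the careful verification of the level-$0$ bijection. A priori, $\Fam(D)^\to$ over $\Set^\to$ carries additional ``codomain family'' data beyond what the pointwise pushforward $P_{(p_\CAT)^\to}(D^\to)$ records; however, the compatibility with the level-$1$ map through $\cod_{\Fam(D)}$ pins down exactly this extra data to match the level-$1$ family. Unpacking this compatibility carefully yields the required bijection, whence both exponentiability and preservation of pushforwards follow from the explicit construction.
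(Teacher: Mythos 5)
Your proposal is correct and takes essentially the same approach as the paper: both exhibit the pushforward along $\cod_-(p_{\CAT})$ explicitly --- $\FamCat{-}$ at the codomain level and, at the domain level, a fibre product recording both the family of domain objects and the family of codomain objects --- and then verify the universal property levelwise, the paper simply writing the formula for an arbitrary object $F : C \to D$ at once via the comma category $\Set/\FamCat{-}$ rather than first treating objects in the image of $\dom_-$. One wording fix: for a general $f : A \to B$ the domain level is not a \emph{subcategory} of $P_{(p_{\CAT})^\to}(A)$ but a fibre product with $\Set/\FamCat{B}$ over $\FamCat{B}$, since the codomain family is genuinely extra data --- exactly the point your final paragraph already makes in the special case $f = \cod_D$.
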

  \begin{proof}
    The functor $P_{\cod_-(p_{\CAT})} ∶ \CAT^→ → \CAT^→/\cod_-(p_{\CAT})$ maps a functor $F ∶ C → D$ to 
    the blue functor below left, equipped with the below right morphism to $𝐒𝐞𝐭^→ \xrightarrow{\cod} 𝐒𝐞𝐭$, where
    \begin{itemize}
      \item 
      $𝐒𝐞𝐭/\FamCat{C}$ denotes the comma category of $𝐒𝐞𝐭 \xrightarrow{id} 𝐒𝐞𝐭$ and 
  $\FamCat{C} → 𝐒𝐞𝐭$: objects are triples 
      $(X,(c_y)_{y ∈ Y}, X \xrightarrow{f} Y)$ of a set $X$, a family $(c_y)_{y ∈ Y}$ of objects  
    of $C$ and a function $f ∶ X → Y$;
    \item the functor $𝐒𝐞𝐭/\FamCat{C} → \FamCat{C}$ on the left 
    maps such an object to the family $(c_{f(x)})_{x ∈ Y}$.
    \end{itemize}
    \[
    \begin{tikzpicture}
\draw[white,-,line width=0.20833333333333334em] (10.416666666666668em,-2.96875em) .. controls (10.416666666666666em,-3.7586805555555554em) and (10.416666666666666em,-4.548611111111111em) .. (10.416666666666668em,-5.338541666666667em);
\draw[black,->, ] (10.416666666666668em,-2.96875em) .. controls (10.416666666666666em,-3.7586805555555554em) and (10.416666666666666em,-4.548611111111111em) .. (10.416666666666668em,-5.338541666666667em);
\draw[white,-,line width=0.20833333333333334em] (11.40625em,-2.0833333333333335em) .. controls (12.504822530864198em,-2.0833333333333335em) and (13.603395061728394em,-2.0833333333333335em) .. (14.701967592592593em,-2.0833333333333335em);
\draw[black,->, ] (11.40625em,-2.0833333333333335em) .. controls (12.504822530864198em,-2.0833333333333335em) and (13.603395061728394em,-2.0833333333333335em) .. (14.701967592592593em,-2.0833333333333335em);
\draw[white,-,line width=0.20833333333333334em] (16.75925925925926em,-2.96875em) .. controls (16.75925925925926em,-3.767361111111111em) and (16.75925925925926em,-4.565972222222222em) .. (16.75925925925926em,-5.364583333333334em);
\draw[black,->, ] (16.75925925925926em,-2.96875em) .. controls (16.75925925925926em,-3.767361111111111em) and (16.75925925925926em,-4.565972222222222em) .. (16.75925925925926em,-5.364583333333334em);
\draw[white,-,line width=0.20833333333333334em] (13.541666666666668em,-6.25em) .. controls (13.928433641975309em,-6.25em) and (14.315200617283951em,-6.25em) .. (14.701967592592593em,-6.25em);
\draw[black,->, ] (13.541666666666668em,-6.25em) .. controls (13.928433641975309em,-6.25em) and (14.315200617283951em,-6.25em) .. (14.701967592592593em,-6.25em);
\draw[white,-,line width=0.20833333333333334em] (10.416666666666668em,-7.161458333333334em) .. controls (10.416666666666666em,-7.951388888888888em) and (10.416666666666666em,-8.741319444444445em) .. (10.416666666666668em,-9.53125em);
\draw[black,->, ] (10.416666666666668em,-7.161458333333334em) .. controls (10.416666666666666em,-7.951388888888888em) and (10.416666666666666em,-8.741319444444445em) .. (10.416666666666668em,-9.53125em);
\draw[white,-,curve={ratio=0.7}, line width=0.20833333333333334em] (9.427083333333334em,-2.7901785714285716em) .. controls (6.197916666666666em,-4.920535714285713em) and (6.114583333333332em,-7.167559523809523em) .. (9.177083333333334em,-9.53125em);
\draw[blue,->, curve={ratio=0.7}, ] (9.427083333333334em,-2.7901785714285716em) .. controls (6.197916666666666em,-4.920535714285713em) and (6.114583333333332em,-7.167559523809523em) .. (9.177083333333334em,-9.53125em);
\draw[white,-,line width=0.20833333333333334em] (21.510416666666668em,-2.0833333333333335em) .. controls (21.90972222222222em,-2.0833333333333335em) and (22.30902777777778em,-2.0833333333333335em) .. (22.708333333333336em,-2.0833333333333335em);
\draw[black,->, ] (21.510416666666668em,-2.0833333333333335em) .. controls (21.90972222222222em,-2.0833333333333335em) and (22.30902777777778em,-2.0833333333333335em) .. (22.708333333333336em,-2.0833333333333335em);
\draw[white,-,line width=0.20833333333333334em] (28.958333333333336em,-2.0833333333333335em) .. controls (29.340277777777775em,-2.0833333333333335em) and (29.72222222222222em,-2.0833333333333335em) .. (30.104166666666668em,-2.0833333333333335em);
\draw[black,->, ] (28.958333333333336em,-2.0833333333333335em) .. controls (29.340277777777775em,-2.0833333333333335em) and (29.72222222222222em,-2.0833333333333335em) .. (30.104166666666668em,-2.0833333333333335em);
\draw[white,-,line width=0.20833333333333334em] (31.5625em,-2.96875em) .. controls (31.5625em,-5.15625em) and (31.5625em,-7.34375em) .. (31.5625em,-9.53125em);
\draw[black,->, ] (31.5625em,-2.96875em) .. controls (31.5625em,-5.15625em) and (31.5625em,-7.34375em) .. (31.5625em,-9.53125em);
\draw[white,-,line width=0.20833333333333334em] (22.578125em,-10.416666666666668em) .. controls (25.21701388888889em,-10.416666666666666em) and (27.855902777777775em,-10.416666666666666em) .. (30.494791666666668em,-10.416666666666668em);
\draw[black,->, ] (22.578125em,-10.416666666666668em) .. controls (25.21701388888889em,-10.416666666666666em) and (27.855902777777775em,-10.416666666666666em) .. (30.494791666666668em,-10.416666666666668em);
\draw[white,-,line width=0.20833333333333334em] (20.520833333333336em,-2.96875em) .. controls (20.520833333333332em,-5.15625em) and (20.520833333333332em,-7.34375em) .. (20.520833333333336em,-9.53125em);
\draw[blue,->, ] (20.520833333333336em,-2.96875em) .. controls (20.520833333333332em,-5.15625em) and (20.520833333333332em,-7.34375em) .. (20.520833333333336em,-9.53125em);
\draw[white,-,line width=0.20833333333333334em] (25.252278645833336em,-2.994791666666667em) .. controls (23.86328125em,-5.173611111111111em) and (22.474283854166664em,-7.352430555555555em) .. (21.085286458333332em,-9.53125em);
\draw[black,->, ] (25.252278645833336em,-2.994791666666667em) .. controls (23.86328125em,-5.173611111111111em) and (22.474283854166664em,-7.352430555555555em) .. (21.085286458333332em,-9.53125em);
\draw[white] (10.677083333333334em,-3.4895833333333335em) -- (11.927083333333334em,-3.4895833333333335em);
\draw[white] (11.927083333333334em,-3.4895833333333335em) -- (11.927083333333334em,-2.34375em);
\draw[black] (10.677083333333334em,-3.4895833333333335em) -- (11.927083333333334em,-3.4895833333333335em);
\draw[black] (11.927083333333334em,-3.4895833333333335em) -- (11.927083333333334em,-2.34375em);
\draw[white] (25.23271753189503em,-3.43397249898819em) -- (28.618134198561698em,-3.43397249898819em);
\draw[white] (28.618134198561698em,-3.43397249898819em) -- (29.33917777636418em,-2.302923749494095em);
\draw[black] (25.23271753189503em,-3.43397249898819em) -- (28.618134198561698em,-3.43397249898819em);
\draw[black] (28.618134198561698em,-3.43397249898819em) -- (29.33917777636418em,-2.302923749494095em);
\node at (10.416666666666668em,-2.0833333333333335em) {$PF$} ;
\node at (10.416666666666668em,-6.25em) {$𝐒𝐞𝐭/\FamCat{C}$} ;
\node at (16.75925925925926em,-2.0833333333333335em) {$\FamCat{C}$} ;
\node at (16.75925925925926em,-6.25em) {$\FamCat{C}$} ;
\node at (10.416666666666668em,-10.416666666666668em) {$\FamCat{C}$} ;
\node at (20.520833333333336em,-2.0833333333333335em) {$PF$} ;
\node at (20.520833333333336em,-10.416666666666668em) {$\FamCat{C}$} ;
\node at (25.833333333333336em,-2.0833333333333335em) {$𝐒𝐞𝐭/\FamCat{C}$} ;
\node at (31.5625em,-2.0833333333333335em) {$𝐒𝐞𝐭^→$} ;
\node at (31.5625em,-10.416666666666668em) {$𝐒𝐞𝐭$} ;
\node[scale=0.7] at (18.108663118472048em,-4.166666666666667em) {$\FamCat{F}$} ;
\node[scale=0.7] at (9.977210649650214em,-8.346354166666668em) {$\pi_2 $} ;
\node[scale=0.7] at (32.182586468400125em,-6.25em) {$\cod$} ;
\node[scale=0.7] at (23.694205242536025em,-6.597977798496925em) {$ π₂$} ;
\end{tikzpicture}
    \]
    It can be checked that this is indeed the pushforward of $F$ along $\cod_-(p_{\CAT})$,
    and moreover, 
    that $\cod_-$ \kl{preserves pushforwards} along $p_{\CAT}$.
  \end{proof}
Now we are able to construct a unique morphism from $0_Ω$ to $ω$ of $Ω$, exploiting 
  \cref{thm:universal-property-natural-transformation} to specify 
  a suitable natural transformation 
  between the functor $\FinGat → \CAT^→$ 
  mapping $Γ$ to the function $\model{-}(ω)∶\hom(Ω, Γ) → \model{Γ}$,
  and $\FinGat \xrightarrow{\model{-}} \CAT \xrightarrow{\cod_-} \CAT^→$.
  \begin{proposition}
    \label{prop:initial-model-gat-morphism-existence}
    There exists a unique morphism $i$ from $0_Ω$ to $ω$ in $\model{Ω}$
     such that, 
     for any $A ∶Ω → \XGAT$, 
    the function $\model{A}(i) ∶ \model{A}(0_Ω) → \model{A}(ω)$ 
    maps $t ∈ \TmSet{A}$ to $\modelTerm{t}{ω}$.
  \end{proposition}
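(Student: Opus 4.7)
The plan is to apply \cref{thm:universal-property-natural-transformation} in the spirit of \cref{prop:initial-model-gat-existence}, but targeting the arrow category $\CAT^{\to}$ so that a natural transformation encodes a morphism in $\model{-}$ rather than an object. I would define $F : \FinGat \to \CAT^{\to}$ by $F(\Gamma) := (\model{-}(\omega) : \hom(\Omega,\Gamma) \to \model{\Gamma})$ on objects and by the naturality squares of $\model{-}(\omega)$ on morphisms. Since $\hom(\Omega,-)$ preserves all limits, $\model{-}$ preserves pullbacks along $\pGAT$, and limits in $\CAT^{\to}$ are computed pointwise, $F$ preserves all pullbacks. As the codomain functor I would take $G := \cod_{-} \circ \model{-}$, which is a $\CartExp$-functor by \cref{prop:arrow-functor-cartexp} and the $\CartExp$-functoriality of $\model{-}$; explicitly, $G(\Gamma)$ is $\cod : \model{\Gamma}^{\to} \to \model{\Gamma}$.

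Next, I would specify the pullback square in $\CAT^{\to}$ that will be the naturality of the sought $\alpha : F \to G$ at $\pGAT$. A morphism in $\CAT^{\to}$ decomposes into a pair of functors between the domains and between the codomains of the arrows, so pullbacks in $\CAT^{\to}$ are computed componentwise. I take the codomain components at $\YGAT$ and $\XGAT$ to be the identities on $\PtdSet$ and $\Set$, making that part trivially a pullback. The domain component at $\XGAT$ sends $A \in \hom(\Omega,\XGAT)$ to the set arrow $\TmSet{A} \to \model{A}(\omega)$, $s \mapsto \modelTerm{s}{\omega}$, and the domain component at $\YGAT$ sends $t \in \hom(\Omega,\YGAT)$ with $A := \pGAT \circ t$ to the pointed arrow $(\TmSet{A}, t) \to (\model{A}(\omega), \modelTerm{t}{\omega})$ underlain by that same set arrow. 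The domain square is then a pullback because picking a compatible basepoint on the domain of $h_{\XGAT}(A)$ amounts exactly to picking $t \in \TmSet{A}$, which recovers $\hom(\Omega, \YGAT)$.

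Applying \cref{thm:universal-property-natural-transformation} produces a unique $\alpha = (h, k) : F \to G$ realising this pullback. The codomain part $k : \model{-} \Rightarrow \model{-}$ is itself a natural transformation between $\CartExp$-functors whose naturality square at $\pGAT$ is the identity square, so by a second invocation of uniqueness $k = \mathrm{id}$. I then set $i := h_\Omega(\mathrm{id}_\Omega) \in \model{\Omega}^{\to}$. Commutativity of the $\alpha_\Omega$ square in $\CAT^{\to}$ at $\mathrm{id}_\Omega$ gives $\cod(i) = k_\Omega(\omega) = \omega$. Naturality of $h$ at $A : \Omega \to \XGAT$ yields $\model{A}^{\to}(i) = h_{\XGAT}(A)$, so $\model{A}(\dom(i)) = \TmSet{A}$ and $\model{A}(i)$ is the prescribed map $s \mapsto \modelTerm{s}{\omega}$; naturality of $h$ at $t : \Omega \to \YGAT$ analogously yields $\modelTerm{t}{\dom(i)} = t$. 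Thus by the uniqueness clause of \cref{prop:initial-model-gat-existence}, $\dom(i) = 0_\Omega$. Uniqueness of $i$ follows from the uniqueness clause of \cref{thm:universal-property-natural-transformation}: any alternative $i'$ with the stated property defines, via $\sigma \mapsto \model{\sigma}^{\to}(i')$ together with $k' := \mathrm{id}$, a natural transformation $\alpha'$ realising the same naturality pullback at $\pGAT$, whence $\alpha' = \alpha$ and $i' = i$.

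I expect the main obstacle to be verifying that the chosen square really is a pullback in $\CAT^{\to}$, together with the parallel observation that $k$ is forced to be the identity; once these are in hand, the extraction of $i$ and the verification of its properties are essentially formal through Yoneda and componentwise naturality.
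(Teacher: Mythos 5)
Your proof is correct and follows essentially the same route as the paper: apply \cref{thm:universal-property-natural-transformation} to a natural transformation from $\Gamma \mapsto (\hom(\Omega,\Gamma) \to \model{\Gamma})$ to $\cod_- \circ \model{-}$ in $\CAT^{\to}$, extract $i$ by Yoneda at $\mathrm{id}_\Omega$, and identify its domain and codomain via \cref{prop:initial-model-gat-existence} and the commuting-square condition. Your elaboration of the pullback square at $\pGAT$ and the forcing of the codomain component $k$ to be the identity fills in details the paper leaves implicit, but the argument is the same.
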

  \begin{proof}
    The proof is similar to that of \cref{prop:initial-model-gat-existence},
    but we additionally need to ensure that the morphism induced by the 
    natural transformation $\hom(Ω, -) → \model{-}^→$ by the Yoneda lemma is indeed a morphism 
    between $0_Ω$ and $ω$.
    For the domain, this follows from \cref{prop:initial-model-gat-existence}.
    For the codomain, this is by definition of a morphism in $\CAT^→$.
  \end{proof}
  Uniqueness of the morphism comes from the following simple observation.
  \begin{lemma}
    Any morphism from $0_Ω$ to $ω$ satisfies the condition of 
    \cref{prop:initial-model-gat-morphism-existence}.
  \end{lemma}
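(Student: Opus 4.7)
The statement is essentially a one-line consequence of two previously-established facts: the naturality of $\modelTerm{t}{-}$ in the model argument (\cref{lem:model-term-natural}) and the characterisation of the initial model's action on terms given by \cref{prop:initial-model-gat-existence}. The plan is to chain these together directly.

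Concretely, I would fix an arbitrary morphism of models $i \colon 0_\Omega \to \omega$ and an arbitrary $A \colon \Omega \to \XGAT$ with $t \in \TmSet{A}$, and then compute
\[
\model{A}(i)(t) \;=\; \model{A}(i)\bigl(\modelTerm{t}{0_\Omega}\bigr) \;=\; \modelTerm{t}{\omega},
\]
where the first equality uses the defining property of $0_\Omega$ from \cref{prop:initial-model-gat-existence}, which identifies $\model{A}(0_\Omega)$ with $\TmSet{A}$ and the distinguished element $\modelTerm{t}{0_\Omega}$ with $t$ itself, and the second equality is an instance of \cref{lem:model-term-natural} applied to the morphism $i$. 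This is exactly the condition appearing in \cref{prop:initial-model-gat-morphism-existence}.

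There is no real obstacle here: the content of the lemma is that the condition characterising the morphism constructed in \cref{prop:initial-model-gat-morphism-existence} is automatically satisfied by \emph{every} morphism from $0_\Omega$ to $\omega$, simply because the distinguished element $\modelTerm{t}{-}$ travels along any model morphism by \cref{lem:model-term-natural}, and on $0_\Omega$ it coincides with $t$. Combined with the existence and uniqueness (of a morphism satisfying the condition) from \cref{prop:initial-model-gat-morphism-existence}, this lemma immediately yields initiality of $0_\Omega$.
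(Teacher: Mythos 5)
Your proposal is correct and follows exactly the paper's own argument: apply \cref{lem:model-term-natural} to the given morphism and then use the identification $\modelTerm{t}{0_\Omega} = t$ from \cref{prop:initial-model-gat-existence}. No differences worth noting.
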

  \begin{proof}
    Consider a morphism $f ∶ 0_Ω → ω$.
    By \cref{lem:model-term-natural}, we have
    $\model{A}(f)(\modelTerm{t}{0_Ω}) = \modelTerm{t}{ω}$.
    By \cref{prop:initial-model-gat-existence}, we have $\modelTerm{t}{0_Ω} = t$, so the conclusion follows.
  \end{proof}
  \begin{corollary}
    \label{cor:initial-model-gat}
    The model $0_Ω$ is initial in $\model{Ω}$: there exists a unique morphism from $0_Ω$ to any model $ω$ of $Ω$.
  \end{corollary}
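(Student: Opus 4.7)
The plan is to combine the three preceding results immediately preceding the corollary, as they have been set up precisely so that initiality drops out. The existence of a morphism $0_Ω \to ω$ is already provided by Proposition~\ref{prop:initial-model-gat-morphism-existence}, which hands us a specific morphism $i$ characterised by its action on terms: $\model{A}(i)$ sends each $t \in \TmSet{A}$ to $\modelTerm{t}{ω}$. So only uniqueness remains to be argued.

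For uniqueness, I would take an arbitrary morphism $f : 0_Ω \to ω$ in $\model{Ω}$ and show that it must equal $i$. By the lemma preceding the corollary, any such $f$ automatically satisfies the same characterisation as $i$, namely $\model{A}(f)(t) = \modelTerm{t}{ω}$ for every $A : Ω \to \XGAT$ and every $t \in \TmSet{A}$. Then the uniqueness clause of Proposition~\ref{prop:initial-model-gat-morphism-existence}, which asserts that there is a \emph{unique} morphism with that property, forces $f = i$.

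Thus the whole proof amounts to a one-line chaining of existence (from Proposition~\ref{prop:initial-model-gat-morphism-existence}) and uniqueness (from the lemma together with the uniqueness clause of Proposition~\ref{prop:initial-model-gat-morphism-existence}). There is no genuine obstacle here: the real technical work was already absorbed into Proposition~\ref{prop:initial-model-gat-existence} (via \cref{thm:universal-property-natural-transformation} applied to $\hom(Ω,-) \Rightarrow \model{-}$) and Proposition~\ref{prop:initial-model-gat-morphism-existence} (via the same theorem applied to $\hom(Ω,-) \Rightarrow \model{-}^{\to}$, which required the $\CartExp$-structure from Proposition~\ref{prop:arrow-functor-cartexp}). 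The corollary itself is a bookkeeping step recording that these two propositions, together with the lemma, deliver the universal property in the form usually demanded.
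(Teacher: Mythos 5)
Your proposal is correct and matches the paper's (implicit) argument exactly: existence comes from Proposition~\ref{prop:initial-model-gat-morphism-existence}, and uniqueness follows because the preceding lemma shows every morphism $0_Ω \to ω$ satisfies the defining condition, so the uniqueness clause of that proposition forces it to equal $i$. The paper leaves the corollary without an explicit proof precisely because this one-line chaining is all that is needed.
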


  \section{Two-sortification is Fully Faithful}
  \label{sec:two-sortification-fully-faithful}
  In this section, we show that the two-sortification functor $\TW$ from $\FinGat$ to $\FinGat /\FamG$ is fully faithful.
  We mostly rely on the following proposition.

\begin{proposition}
  \label{prop:faithful-at-XGAT-YGAT}
  Let $ℂ$ be a $\CartExp$-category
  and $I ∶ \FinGat → ℂ$ be a $\CartExp$-functor.
  Then, $I$ is faithful (resp. fully faithful) if and only if
  $I$ is \intro{faithful (resp. fully faithful) at} $\XGAT$ and $\YGAT$, i.e.,
  for any theory $Ω$, the functions $\hom(Ω, \XGAT) → \hom(I Ω, I \XGAT)$ and $\hom(Ω, \YGAT) → \hom(I Ω, I \YGAT)$
  are injective (resp. bijective).
\end{proposition}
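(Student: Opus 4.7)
The \emph{only if} direction is immediate. For the converse, I would consider the class $\mathcal{S}$ of theories $\Delta$ such that the map $\hom(\Omega, \Delta) \to \hom(I\Omega, I\Delta)$ is bijective (respectively, injective) for every theory $\Omega$, and show $\mathcal{S} = \FinGat$.

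By hypothesis, $\XGAT, \YGAT \in \mathcal{S}$, and the terminal theory lies in $\mathcal{S}$ since $I$ preserves it. Closure under pullbacks follows because the comparison map $\hom(\Omega, A \times_C B) \to \hom(I\Omega, IA \times_{IC} IB)$ is a pullback of bijections (resp.\ injections) in $\Set$, using that both $I$ and $\hom(\Omega, -)$ preserve limits.

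The key closure property concerns pushforwards. By the universal property of $P_\pGAT$ (\cref{prop:exp-mor-P-def}), a morphism $\Omega \to P_\pGAT \Gamma$ corresponds to a pair $(\alpha, \tilde{t})$ with $\alpha \colon \Omega \to \XGAT$ and $\tilde{t} \colon \Omega \times_\XGAT \YGAT \to \Gamma$; the same description holds in $\CC$ for $I\Omega \to P_{I\pGAT}(I\Gamma) \cong I(P_\pGAT \Gamma)$, using that $I$ \kl{preserves pushforwards} and pullbacks. Thus the comparison map decomposes as a coproduct, indexed by $\alpha$, of fiber maps $\hom(\Omega \times_\XGAT \YGAT, \Gamma) \to \hom(I\Omega \times_{I\XGAT} I\YGAT, I\Gamma)$. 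The map on indices is bijective (resp.\ injective) because $\XGAT \in \mathcal{S}$, and each fiber map is bijective (resp.\ injective) because $\Gamma \in \mathcal{S}$; hence so is the whole.

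The main remaining obstacle is concluding $\mathcal{S} = \FinGat$. For this I would observe that the full subcategory of $\FinGat$ spanned by $\mathcal{S}$ inherits all finite limits and pushforwards along $\pGAT$ from $\FinGat$, making it a $\CartExp$-subcategory with $\CartExp$-inclusion into $\FinGat$. By bi-initiality (\cref{thm:uemura}), there exists a $\CartExp$-functor from $\FinGat$ into this subcategory, whose composition with the inclusion is a $\CartExp$-endofunctor on $\FinGat$ and hence, again by bi-initiality, naturally isomorphic to the identity. The inclusion is therefore essentially surjective; since $\mathcal{S}$ is closed under isomorphism, $\mathcal{S} = \FinGat$.
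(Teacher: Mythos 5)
Your proposal is correct and follows essentially the same route as the paper: define the full subcategory of theories at which $I$ is faithful (resp.\ fully faithful), check it is a $\CartExp$-subcategory (closure under finite limits and under $P_{\pGAT}$), and conclude by bi-initiality that the inclusion is essentially surjective. You merely spell out the closure-under-pushforward step (via the coproduct decomposition of $\hom(\Omega, P\Gamma)$) that the paper leaves as "easy to check".
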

\begin{proof}  
  We focus on the hard direction, which is the "if" part.
  We show that $\FinGat$ is equal to its full subcategory $S$ spanned by 
  theories $Γ$ such that $I$ is \kl{faithful (resp. fully faithful) at} $Γ$.
  Note that $S$ includes $\XGAT$ and $\YGAT$ by assumption.
  It is easy to check that $S$ is stable under limits preserved by $I$, and 
  that if $Γ$ is in $S$, then so is $P Γ$.
  Therefore, $S$ is a full $\CartExp$-subcategory 
  of $\FinGat$. By bi-initiality of $\FinGat$, we 
  get a functor from $\FinGat$ to $S$ such that 
  the inclusion composed with this functor is isomorphic to the identity.
  This entails that any GAT is isomorphic to a theory at which $I$ is faithful (resp. fully faithful).
  The claimed result easily follows.
\end{proof}
\Cref{prop:faithful-at-XGAT-YGAT} requires the morphism in $𝐒𝐞𝐭^→$ from $\hom(Ω,\XGAT) \allowbreak → \hom(Ω,\YGAT)$ to 
$\hom(IΩ,I\XGAT) → \hom(IΩ,I\YGAT)$ induced by the functorial action of $I$ to be a monomorphism or an isomorphism.
It will prove useful to rephrase this condition through the equivalence $𝐒𝐞𝐭^→ ≃ \FamS$.
\begin{notation}
   Let $I ∶ \FinGat → ℂ$ be a $\CartExp$-functor
   and $Ω$ be a theory.
  The image of $\hom(IΩ,I\XGAT) → \hom(IΩ,I\YGAT)$
    by the equivalence $𝐒𝐞𝐭^→ ≃ \FamS$ 
    is denoted by
   \AP$\intro*\TmFamF{I}{Ω}$, omitting $I$ when 
  it is the identity functor on $\FinGat$.
  Explicitly, $\TmFamF{I}{Ω}$ is the family 
  $(\TmSetF{I}{A})_{A ∈ \hom(IΩ, I \XGAT)}$,
  where $\intro*\TmSetF{I}{A}$ denotes the 
  set of morphisms $t ∶ IΩ → I\YGAT$ such that
  $I \pGAT ∘ t = A$.
\end{notation}
\begin{corollary}
  \label{cor:faithful-via-terms}
  Let $I ∶ \FinGat → ℂ$ be a $\CartExp$-functor.
  Then, $I$ is faithful (resp. fully faithful) if and only if,
  for any theory $Ω$,
  the family morphism 
  $\TmFamF{}{Ω}
  → \TmFamF{I}{Ω}$
  induced by the functorial action of $I$
  is a monomorphism (resp. isomorphism).
\end{corollary}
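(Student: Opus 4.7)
The plan is to deduce this directly from \cref{prop:faithful-at-XGAT-YGAT} by translating the condition ``$I$ is \kl{faithful (resp. fully faithful) at} $\XGAT$ and $\YGAT$'' across the equivalence $\Set^\to \simeq \FamS$. The first step I would take is to observe that the functorial action of $I$, together with postcomposition by $\pGAT$ and $I\pGAT$, produces a commuting square in $\Set$
\[
\begin{tikzcd}[ampersand replacement=\&]
\hom(\Omega, \YGAT) \ar[r] \ar[d, "{\pGAT \circ {-}}"'] \& \hom(I\Omega, I\YGAT) \ar[d, "{I\pGAT \circ {-}}"] \\
\hom(\Omega, \XGAT) \ar[r] \& \hom(I\Omega, I\XGAT),
\end{tikzcd}
\]
which is a morphism in $\Set^\to$ from $\hom(\Omega, \pGAT)$ to $\hom(I\Omega, I\pGAT)$. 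By unfolding the definitions of $\TmSetF{I}{A}$ and $\TmFamF{I}{\Omega}$ just introduced, this square is sent by the equivalence $\Set^\to \simeq \FamS$ exactly to the family morphism $\TmFamF{}{\Omega} \to \TmFamF{I}{\Omega}$ of the statement.

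The key observation I would then use is that, since $\Set^\to$ is a functor category valued in $\Set$, its monomorphisms (resp. isomorphisms) are precisely the morphisms whose two components are pointwise injective (resp. bijective). Transporting this through the equivalence, the family morphism $\TmFamF{}{\Omega} \to \TmFamF{I}{\Omega}$ is a monomorphism (resp. isomorphism) if and only if both horizontal arrows of the square above are injective (resp. bijective), which is exactly the condition that $I$ is \kl{faithful (resp. fully faithful) at} both $\XGAT$ and $\YGAT$.

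Combining these two steps and quantifying over $\Omega$, \cref{prop:faithful-at-XGAT-YGAT} immediately yields the claimed equivalence. I do not anticipate any genuine technical obstacle here; the only point that requires some care is the bookkeeping in verifying that the equivalence $\Set^\to \simeq \FamS$ sends the commutative square above to the family morphism in the statement, which amounts to a direct unfolding of the notation $\TmSetF{I}{A}$ and $\TmFamF{I}{\Omega}$ just introduced.
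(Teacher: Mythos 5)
Your proof is correct and follows exactly the route the paper intends: the corollary is stated without an explicit proof precisely because it is the translation of \cref{prop:faithful-at-XGAT-YGAT} across the equivalence $\Set^\to \simeq \FamS$ set up in the preceding paragraph and notation, which is what you carry out. The only detail worth noting is that the paper's phrasing of the arrow directions in $\Set^\to$ is slightly garbled there, and you correctly identify the relevant object as postcomposition with $\pGAT$ (resp. $I\pGAT$), whose fibers are the sets $\TmSetF{I}{A}$.
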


Using these results, in \cref{sec:two-sortification-faithful}, we show that $\TW$ is faithful,
and in \cref{sec:two-sortification-full}, we show that it is also full.
\subsection{Faithfulness}
\label{sec:two-sortification-faithful}
In this section, we show that $\model{\TW -}$ is faithful, 
and therefore, so is $\TW$.
We will rather work with the following isomorphic variant of 
$\model{\TW-}$
for which we have an exact description
of the images of $\XGAT$ and $\YGAT$.
\begin{notation}
   We denote 
   the $\CartExp$-functor $\FinGat → \CAT / \FamS$
   from \cref{cor:models-two-sortification-lax-slice-cart}
  mapping a theory $Γ$ to 
  $\laxSliceCart[\FamFunctor{Γ}]$
  by \AP$\intro*\TWb$.
\end{notation}
Recall that $\TWb$ is isomorphic $\model{\TW -}$ by \cref{thm:uemura}.
Now, for any theory $Ω$, we construct an object of $\TWb Ω$ 
such that given a pair of morphisms $f,g$ from $Ω$ to $\XGAT$ (resp. $\YGAT$),
if its image by ${\TWb f}$ and ${\TWb g}$ are equal, 
then $f$ and $g$ are equal as well. This ensures 
that $\TWb$, and therefore $\model{\TW -}$, is \kl{faithful at} $\XGAT$ and $\YGAT$.
\begin{proposition}
  \label{prop:model-faithful}
  For any theory $Ω$, 
  there is an object \AP$\intro*\M{Ω}$ of $\TWb Ω$ such that 
  \begin{itemize}
    \item  $\bang (\M{Ω})$ is the family
     $\TmFamF{}{Ω}$, 
    where \AP $\intro*\bang ∶ \TWb Ω → \FamG$ is the canonical projection mapping 
    any object $\FamFunctor{Ω}(ω) → A$ to $A$;
    \item for any $A ∶ Ω → \XGAT$, the
    \kl{cartesian} family morphism $\TWb{A}(\M{Ω})$ is 
  the morphism
   from $\FamFunctor{\XGAT}(\TmSet{A}) = (\TmSet{A})_{* ∈ \{ * \}}$ 
  to $\TmFamF{}{Ω}$ mapping $*$ to $A ∈ \hom(Ω,\XGAT)$;
  \item for any $(Ω \xrightarrow{t} \YGAT) ∈ \TmSet{A}$,
  the \kl{cartesian} family morphism $\TWb{t}(\M{Ω})$ is
  the same morphism  
  $\FamFunctor{\YGAT}(1 \xrightarrow{t} \TmSet{A}) = (\TmSet{A})_{* ∈ \{ * \}}$
  to $\TmFamF{}{Ω}$ as above,
  where $A$ denotes $\pGAT ∘ t$.
  \end{itemize}
\end{proposition}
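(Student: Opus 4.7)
The plan is to apply~\cref{thm:universal-property-natural-transformation} with $G = \TWb$ and a carefully chosen functor $F : \FinGat \to \CAT/\FamS$, so that $\M{Ω}$ can be defined as $α_Ω(\id_Ω)$, where $α : F \to \TWb$ is the unique natural transformation produced by the theorem. I would take $F$ to send a theory $Γ$ to the discrete category $\hom(Ω, Γ)$ equipped with the constant projection to $\FamS$ at $\TmFamF{}{Ω}$; on morphisms, $F$ is just post-composition. Since $\hom(Ω,-)$ preserves all limits and the projection is constant, $F$ preserves all pullbacks, in particular those along $\pGAT$, as required by the theorem.

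To supply the input pullback square, I would first compute $\TWb(\XGAT)$ and $\TWb(\YGAT)$ explicitly using~\cref{ex:family-functor-simple-gats}: since $\FamFunctor{\XGAT}(X) \cong (\{*\}, * \mapsto X)$ and $\FamFunctor{\YGAT}(X, x) \cong (\{*\}, * \mapsto X)$, the category $\TWb(\XGAT)$ is isomorphic to the category of pointed families $(U, El, u \in U)$, and $\TWb(\YGAT)$ to the category of doubly pointed families $(U, El, u \in U, x \in El(u))$, with $\TWb(\pGAT)$ the functor forgetting $x$. I would then take the horizontal arrows of the naturality square to be $A \mapsto (\TmFamF{}{Ω}, A)$ on $F(\XGAT) = \hom(Ω, \XGAT)$ and $t \mapsto (\TmFamF{}{Ω}, \pGAT \circ t, t)$ on $F(\YGAT) = \hom(Ω, \YGAT)$; both are compatible with the constant $\FamS$-projection at $\TmFamF{}{Ω}$, and hence live in $\CAT/\FamS$.

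The main step will be checking that this square is a pullback in $\CAT/\FamS$: since this strict slice inherits limits from $\CAT$, the pullback of $\TWb(\pGAT)$ along $F(\XGAT) \to \TWb(\XGAT)$ is the discrete set of pairs $(A, x)$ with $A \in \hom(Ω, \XGAT)$ and $x \in \TmSet{A}$, and by the very definition of $\TmSet{A}$ in~\cref{not:mor-terms-set}, the map $t \mapsto (\pGAT \circ t, t)$ provides the required bijection with $F(\YGAT) = \hom(Ω, \YGAT)$. Once $α$ has been produced, the three advertised properties of $\M{Ω} := α_Ω(\id_Ω)$ follow immediately from naturality: property~1 from the fact that $\bang \circ α_Ω$ equals the projection of $F$ at $Ω$, which is constant at $\TmFamF{}{Ω}$; properties~2 and~3 by applying naturality of $α$ at $A : Ω \to \XGAT$ and at $t : Ω \to \YGAT$ and using $F(A)(\id_Ω) = A$ and $F(t)(\id_Ω) = t$. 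The only real obstacle is the pullback verification above, which hinges on the explicit description of $\TWb(\XGAT)$ and $\TWb(\YGAT)$.
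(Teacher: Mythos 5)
Your proposal is correct and follows essentially the same route as the paper: the paper likewise applies \cref{thm:universal-property-natural-transformation} to the constant functor $Γ ↦ (\hom(Ω,Γ) → \FamS)$ at $\TmFamF{}{Ω}$ and to $\TWb$, then extracts $\M{Ω}$ via the Yoneda lemma at $\id_Ω$. The explicit descriptions of $\TWb \XGAT$ and $\TWb \YGAT$ and the verification of the input pullback square that you supply are precisely the details the paper leaves implicit.
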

\begin{proof}
  The proof is similar to \cref{prop:initial-model-gat-existence}:
  we apply \cref{thm:universal-property-natural-transformation} to 
  construct a natural transformation between 
  \begin{itemize}
    \item the functor from $\FinGat$ to $\CAT / \FamS$ that maps any theory
  $Γ$ to the constant functor $\hom(Ω, Γ) → \FamS$ mapping 
  any morphism to $\TmFamF{}{Ω}$, and
    \item the functor $\TWb$.
  \end{itemize} 
  The Yoneda lemma yields the desired object $\M{Ω}$ of $\TWb Ω$.
\end{proof}
\begin{corollary}
  \label{cor:two-sortification-faithful}
  $\model{\TW -}$ is faithful, and thus, so is $\TW$.
\end{corollary}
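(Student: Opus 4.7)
The plan is to invoke \cref{prop:faithful-at-XGAT-YGAT} for the $\CartExp$-functor $\TWb$, which is isomorphic to $\model{\TW -}$ by \cref{thm:uemura}. This reduces faithfulness of $\model{\TW -}$ to showing that $\TWb$ is \kl{faithful at} $\XGAT$ and $\YGAT$. Faithfulness of $\TW$ itself then follows trivially: if $\TW f = \TW g$, then $\model{\TW f} = \model{\TW g}$ by functoriality of $\model{-}$, so $f = g$ by faithfulness of $\model{\TW -}$.

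For faithfulness at $\XGAT$ and $\YGAT$, I will use the object $\M{Ω} \in \TWb Ω$ of \cref{prop:model-faithful} as a ``generic'' test object. Given parallel $f, g : Ω \to \XGAT$ with $\TWb f = \TWb g$, evaluating at $\M{Ω}$ produces identical objects of $\TWb \XGAT \cong \laxSliceCart[\FamFunctor{\XGAT}]$. By \cref{prop:model-faithful}, the underlying cartesian morphism in $\FamS$ of $\TWb f(\M{Ω})$ acts on index sets as the function $\{*\} \to \hom(Ω, \XGAT)$ sending $* \mapsto f$, and similarly for $g$; equating these indexing functions forces $f = g$. For the $\YGAT$ case, given $t, s : Ω \to \YGAT$ with $\TWb t = \TWb s$, set $A = \pGAT \circ t$ and $B = \pGAT \circ s$. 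Via the characterisation in \cref{thm:characterisation-models-two-sortification}, the objects $\TWb t(\M{Ω})$ and $\TWb s(\M{Ω})$ in $\laxSliceCart[\FamFunctor{\YGAT}]$ carry the pointed sets $(\TmSet{A}, t)$ and $(\TmSet{B}, s)$ as their model components in $\PtdSet = \model{\YGAT}$; equality of these pointed sets recovers both $A = B$ and $t = s$.

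The main subtlety is to confirm that the explicit description in \cref{prop:model-faithful} really makes $f$ (respectively $t$) recoverable from the object $\TWb f(\M{Ω})$ (respectively $\TWb t(\M{Ω})$): for $\XGAT$ through the indexing function of the cartesian family morphism, and for $\YGAT$ through the distinguished basepoint of the model component. Both readings are immediate once one unfolds equality of objects in $\laxSliceCart[F_Γ]$ via \cref{thm:characterisation-models-two-sortification} as joint equality of the underlying model and of the cartesian family morphism, so no further calculation is required.
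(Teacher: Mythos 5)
Your proof is correct and follows the paper's own route: reduce via \cref{prop:faithful-at-XGAT-YGAT} to faithfulness of $\TWb \cong \model{\TW -}$ at $\XGAT$ and $\YGAT$, then separate parallel morphisms by evaluating at the object $\M{Ω}$ of \cref{prop:model-faithful}, recovering $f$ from the indexing function of the cartesian family morphism in the $\XGAT$ case and $t$ from the basepoint of the pointed-set model component in the $\YGAT$ case. This is precisely the argument the paper sketches in the paragraph preceding \cref{prop:model-faithful}, so no further comment is needed.
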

\begin{remark}
  \Cref{cor:two-sortification-faithful} raises the question whether the \kl{model functor} is faithful as well. We leave this question open.
\end{remark}
\subsection{Fullness}
\label{sec:two-sortification-full}
In this section, we show that $\TW$ is also full.
Since we showed in the previous section that $\TW$ is faithful,
by \cref{cor:faithful-via-terms},
we know that $\TmFamF{}{Ω} → \TmFamF{\TW}{Ω}$ is a monomorphism for any theory $Ω$. 
Thus, it is enough to construct a section.
We start by replacing $\TmFamF{\TW}{Ω}$ with 
an isomorphic family.
\begin{notation}
  Given a theory $Ω$, we denote 
  the composition $\TW Ω → P \XGAT → \XGAT$ by \AP$\intro*\UGATw{Ω}$.
  Given $(\TW Ω \xrightarrow{A} \YGAT) ∈\TmSet{\UGATw{Ω}}$, 
  we denote the composition 
$Ω → \TW\XGAT \xrightarrow{ε_{\XGAT}} \XGAT$  
by \AP$\intro*\ElMorTw{A}$, where
the first morphism is the one corresponding to $A$ and $\UGATw{Ω}$ by the universal property of the pullback \eqref{eq:p2-diagram} defining $\TW \XGAT$, and 
$ε_{\XGAT}$ is the counit at $\XGAT$
of the adjunction $\pGAT^* ⊣ P$. 
\end{notation}
\begin{remark}
  The notations are motivated by the fact that 
  an element of $\TmSet{\UGATw{Ω}}$ 
  corresponds to a term $\TW Ω ⊢ A ∶ \U $,
  and an element of $\TmSet{\ElMorTw{A}}$
  corresponds to a term $\TW Ω ⊢ t ∶ \El\,A$,
\end{remark}
\begin{proposition}
  \label{prop:two-sortification-terms-isomorphic}
  Given a theory $Ω$, 
  let \AP $\intro*\TmElFam{Ω}$ denote the family 
  $(\TmSet{\ElMorTw{A}})_{A ∈ \TmSet{\UGATw{Ω}}}$.
  Then, the morphism from
  $\TmFamF{\TW}{Ω}$ to $\TmElFam{Ω}$ 
  mapping $A ∈ \hom(\TW Ω, \TW \XGAT)$
  to $\TW Ω \xrightarrow{\TW A} \TW \XGAT → \YGAT$
  and $(\TW Ω \xrightarrow{t} \TW \YGAT) ∈ \TmSet{A}$
  to $ \TW Ω \xrightarrow{t} \TW \YGAT → \YGAT$
  is an isomorphism, where 
  $\TW \XGAT → \XGAT$ and $\TW \YGAT → \YGAT$ are the projections
  from the pullbacks \eqref{eq:p2-diagram} defining $\TW \XGAT$ and $\TW \YGAT$.
\end{proposition}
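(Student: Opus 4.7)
The plan is to verify independently that the indexing map and each fiber map of the proposed family morphism is a bijection. Both verifications reduce to a direct application of the pullback universal properties in Equation~\eqref{eq:p2-diagram} defining $\TW\XGAT$ and $\TW\YGAT$, once morphisms in $\FinGat/\FamG$ are unfolded as morphisms in $\FinGat$ compatible with the respective structure maps to $\FamG$.

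For the indexing map, a morphism $A : \TW\Omega \to \TW\XGAT$ in $\FinGat/\FamG$ is equivalently a morphism into the underlying object of $\TW\XGAT$ in $\FinGat$ whose postcomposition with the structure map $\TW\XGAT \to \FamG$ agrees with that of $\TW\Omega$. Applying the universal property of the left pullback in~\eqref{eq:p2-diagram}, such data is equivalent to a pair $(A_Y : \TW\Omega \to \YGAT,\, A_\FamG : \TW\Omega \to \FamG)$ satisfying $\pGAT \circ A_Y = (\FamG \to \XGAT) \circ A_\FamG$; the slice-compatibility condition forces $A_\FamG$ to equal the structure map $\TW\Omega \to \FamG$, whence the remaining constraint becomes $\pGAT \circ A_Y = \UGATw{\Omega}$, i.e., $A_Y \in \TmSet{\UGATw{\Omega}}$. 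Since the assignment $A \mapsto A_Y$ is postcomposition with the pullback projection $\TW\XGAT \to \YGAT$, this bijection is precisely the indexing map in the statement.

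For the fiber map, fix $A : \TW\Omega \to \TW\XGAT$ with corresponding $A_Y \in \TmSet{\UGATw{\Omega}}$. A morphism $t : \TW\Omega \to \TW\YGAT$ in $\FinGat/\FamG$ with $\TW\pGAT \circ t = A$ is, by the universal property of the right pullback in~\eqref{eq:p2-diagram}, equivalent to a pair $(t_Y : \TW\Omega \to \YGAT,\, t_\XX : \TW\Omega \to \TW\XGAT)$ with $\pGAT \circ t_Y = \epsilon_\XGAT \circ t_\XX$. The condition $\TW\pGAT \circ t = A$ forces $t_\XX = A$, and slice-compatibility with $\FamG$ is automatic because the structure map $\TW\YGAT \to \FamG$ factors through $\TW\XGAT$ via $\TW\pGAT$. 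Hence $t_Y$ ranges over morphisms to $\YGAT$ satisfying $\pGAT \circ t_Y = \epsilon_\XGAT \circ A$, which by definition of $\ElMorTw{A_Y}$ is exactly $\TmSet{\ElMorTw{A_Y}}$. Since $t \mapsto t_Y$ is postcomposition with the pullback projection $\TW\YGAT \to \YGAT$, it coincides with the fiber map of the statement.

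The principal bookkeeping concern is to consistently track compatibility with the structure maps to $\FamG$: in both pullback arguments, this compatibility fixes the ``other leg'' of the universal property (either as the structure map of $\TW\Omega$, or as the given $A$), reducing the two-parameter pullback universal property to a single-parameter bijection onto the desired target set. No deeper obstacle is anticipated beyond this unwrapping.
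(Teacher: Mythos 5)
Your proof is correct and takes essentially the same route as the paper, whose own proof is just the one-line remark that the statement ``is a consequence of the definitions of $\TW \XGAT$ and $\TW \YGAT$ as suitable pullbacks''; you have simply unfolded that remark, correctly identifying that the slice-compatibility with $\FamG$ pins down one leg of each pullback universal property (the structure map of $\TW\Omega$ for the left square, the given $A$ for the right square, with the slice condition on $t$ then automatic since the structure map of $\TW\YGAT$ factors through $\pp$). No gaps.
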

\begin{proof}
  This is a consequence of the definitions of $\TW \XGAT$ and $\TW \YGAT$ as suitable pullbacks.
\end{proof}
Thanks to this isomorphism, it is enough to construct a section of 
 $\TmFamF{}{Ω} → \TmFamF{\TW}{Ω} \xrightarrow{≅} \TmElFam{Ω}$.
We exploit initiality of the following object of $\TWb Ω$.
\begin{proposition}
  \label{prop:initial-object-two-sortification}
  For any theory $Ω$, there is an initial object \AP$\intro*\Z{Ω}$ of $\TWb Ω$
  such that:
  \begin{itemize}
    \item $\bang (\Z{Ω})$ is the family
     $\TmElFam{Ω}$;
    \item for any $A ∶ Ω → \XGAT$,
    the \kl{cartesian} family morphism $\TWb{A}(\Z{Ω})$ is 
  the morphism from  $(\TmSet{\ElMorTw{A'}})_{* ∈ \{ * \}}$ to $\TmElFam{Ω}$ mapping $*$ to $A'$, where 
  $A'$ denotes
  $(\TW Ω \xrightarrow{\TW A} \TW \XGAT → \YGAT)$;
  
  \item for any $( Ω \xrightarrow{t} \YGAT) ∈ \TmSet{A}$,
  the \kl{cartesian} family morphism $\TWb{t}(\M{Ω})$ is
  the same as above, but where 
  $\TmSet{A'}$ is considered 
  pointed by $\TW Ω \xrightarrow{\TW t} \TW \YGAT → \YGAT$.
  \end{itemize}
\end{proposition}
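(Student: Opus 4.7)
The plan is to apply the existence of initial models of GATs (\cref{cor:initial-model-gat}) to the theory $\TW Ω$, obtaining an initial model $0_{\TW Ω}$. Transporting $0_{\TW Ω}$ through the natural isomorphism $\model{\TW Ω} \cong \laxSliceCart[\FamFunctor{Ω}] = \TWb Ω$ supplied by \cref{cor:models-two-sortification-lax-slice-cart} then yields an initial object $\Z{Ω}$ of $\TWb Ω$, initiality being transported across the isomorphism of categories. It remains to read off its structure using the concrete description of $0_{\TW Ω}$ provided by \cref{prop:initial-model-gat-existence}: for every $B \colon \TW Ω → \XGAT$, one has $\model{B}(0_{\TW Ω}) = \TmSet{B}$, and each $s \in \TmSet{B}$ is its own interpretation $\modelTerm{s}{0_{\TW Ω}} = s$.

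To compute $\bang(\Z{Ω})$, I would use that $\bang$ corresponds under the isomorphism to the \kl{model functor} applied to the canonical projection $\TW Ω → \FamG$, which sends the two sorts of $\FamG$ to the term $\UGATw{Ω}$ and, fiberwise over each element, to $\ElMorTw{(-)}$. Instantiating \cref{prop:initial-model-gat-existence} then gives that the underlying set of $\bang(\Z{Ω})$ is $\TmSet{\UGATw{Ω}}$ and each fiber over $A$ is $\TmSet{\ElMorTw{A}}$, so $\bang(\Z{Ω}) = \TmElFam{Ω}$ on the nose.

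For the functorial actions $\TWb A(\Z{Ω})$ and $\TWb t(\Z{Ω})$, these correspond via the natural isomorphism $\model{\TW -} \cong \TWb$ to $\model{\TW A}(0_{\TW Ω})$ and $\model{\TW t}(0_{\TW Ω})$, viewed as objects of $\TWb \XGAT$ and $\TWb \YGAT$ respectively. Using the explicit description of $\FamFunctor{\XGAT}$ and $\FamFunctor{\YGAT}$ from \cref{ex:family-functor-simple-gats}---they send a (pointed) set to the family supported on the singleton $\{*\}$---giving such an object amounts to picking a distinguished element of the indexing set of the target family of $\bang$. Applying \cref{prop:initial-model-gat-existence} to the term $A' \colon \TW Ω \xrightarrow{\TW A} \TW \XGAT → \YGAT$ (resp.\ the analogous $t'$) identifies this element with the $A'$ (resp.\ $t'$) appearing in the statement, and the fiber $\TmSet{\ElMorTw{A'}}$ is again supplied by \cref{prop:initial-model-gat-existence}.

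The main obstacle is the careful bookkeeping needed to express $\TWb A(\Z{Ω})$ and $\TWb t(\Z{Ω})$ as explicit \kl{cartesian} morphisms in $\laxSliceCart[\FamFunctor{\XGAT}]$ and $\laxSliceCart[\FamFunctor{\YGAT}]$: one must trace the natural isomorphism $\model{\TW -} \cong \TWb$ together with the concrete formulae for the family functors on $\XGAT$ and $\YGAT$, and verify that the cartesian condition (identity on fibers) holds automatically because all the maps in sight arise from terms of $\TW Ω$ in the initial model via \cref{prop:initial-model-gat-existence}. Initiality of $\Z{Ω}$ itself, by contrast, is immediate from the transport.
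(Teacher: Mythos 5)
Your proposal follows the same route as the paper: both take $\Z{Ω}$ to be the initial model $0_{\TW Ω}$ of \cref{prop:initial-model-gat-existence} transported along the isomorphism $\model{\TW Ω} \cong \TWb Ω$, observe that initiality is preserved by transport, and then read off the three bullet points from the explicit description of that initial model. The one step where your argument has a real gap is the computation of $\bang(\Z{Ω})$: \cref{prop:initial-model-gat-existence} characterises $0_{\TW Ω}$ only at morphisms into $\XGAT$ and $\YGAT$, whereas the projection $\TW Ω → \FamG$ lands in $P\XGAT$, so ``instantiating'' that proposition does not by itself yield the claim that the underlying set is $\TmSet{\UGATw{Ω}}$ with fibers $\TmSet{\ElMorTw{A}}$. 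One must additionally establish how the natural transformation $σ ↦ \model{σ}(0_{\TW Ω})$ of \cref{thm:universal-property-natural-transformation} interacts with the pushforward $P$, namely that its component at $P\XGAT$ factors as the transpose of $\varepsilon_{\XGAT}\circ -$ followed by $\FamCat{-}$ applied to its component at $\XGAT$; the paper isolates exactly this as \cref{lem:nat-universal-pullback-P}, feeding into \cref{prop:initial-value-fam} in the appendix. Your phrase ``sends the two sorts of $\FamG$ to the term $\UGATw{Ω}$ and, fiberwise over each element, to $\ElMorTw{(-)}$'' is precisely the conclusion of that lemma, but in your write-up it is asserted rather than derived. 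The remaining points --- initiality by transport and the identification of $\TWb A(\Z{Ω})$ and $\TWb t(\Z{Ω})$ via \cref{ex:family-functor-simple-gats} --- match the paper's argument and are fine.
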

\begin{proof}
  We take $\Z{Ω}$ to be the initial model $0_{\TW Ω}$ of $\TW Ω$
  from \cref{prop:initial-model-gat-existence}, through the isomorphism 
  $\model{\TW Ω} ≃ \TWb Ω$ from \cref{thm:uemura}.
  The result follows by reasoning about the natural transformation 
  $(\hom(\TW Ω, Γ) →  \model{Γ})_Γ$ mapping $σ$ to $\model{σ}{0_{\TW Ω}}$.
  More details are available in the appendix.
\end{proof}
We conclude by the following result.
\begin{proposition}
  \label{prop:section-terms-two-sortification}
  The image of the initial morphism 
  $\Z{Ω} → \M{Ω}$ by $\bang$
  yields a family morphism 
   which 
is a section of the morphism 
  $\TmFamF{}{Ω} → \TmFamF{\TW}{Ω} ≅ \TmElFam{Ω}$.
\end{proposition}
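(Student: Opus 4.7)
The plan is to identify the family morphism $s := \bang(\Z{\Omega} \to \M{\Omega})$ by transporting the initial morphism through the isomorphism $\TWb\Omega \simeq \model{\TW\Omega}$, and then verify the section property $\phi \circ s = \mathrm{id}_{\TmElFam{\Omega}}$ on indexing and fiber components, where $\phi$ denotes the composite $\TmFamF{}{\Omega} \to \TmFamF{\TW}{\Omega} \cong \TmElFam{\Omega}$.

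First, I will show that both $\M{\Omega}$ and $\Z{\Omega}$, as objects of $\TWb\Omega \simeq \laxSliceCart[\FamFunctor{\Omega}]$, have underlying $\Omega$-model $0_{\Omega}$. For $\M{\Omega}$, \cref{prop:model-faithful} describes the underlying $\XGAT$-model of $\TWb A(\M{\Omega})$ as $\TmSet{A}$ and the underlying pointed set of $\TWb t(\M{\Omega})$ as $(\TmSet{A}, t)$, for every $A$ and $t$; by \cref{prop:initial-model-gat-existence} these conditions force the underlying $\Omega$-model of $\M{\Omega}$ to be $0_{\Omega}$. For $\Z{\Omega} = 0_{\TW\Omega}$, the identification follows from \cref{cor:coreflection-models-two-sortification}, since the coreflective left adjoint sends $M$ to $(M, \mathrm{id}_{\FamFunctor{\Omega}(M)})$ and preserves initial objects.

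Through $\TWb\Omega \simeq \model{\TW\Omega}$, the initial morphism $\Z{\Omega} \to \M{\Omega}$ corresponds to the unique morphism $i \colon 0_{\TW\Omega} \to \M{\Omega}$ in $\model{\TW\Omega}$. Applying \cref{prop:initial-model-gat-morphism-existence} to $\TW\Omega$ characterises $i$: for every $A \colon \TW\Omega \to \XGAT$, the induced function $\model{A}(i) \colon \TmSet{A} \to \model{A}(\M{\Omega})$ sends each term $t$ to $\modelTerm{t}{\M{\Omega}}$. Specialising to $A = \UGATw{\Omega}$ recovers the indexing component $s^U$ of $s$, and specialising to $A = \ElMorTw{A'}$ for each $A' \in \TmSet{\UGATw{\Omega}}$ recovers the fiber components.

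It then remains to verify that $\phi^U(\modelTerm{A}{\M{\Omega}}) = A$ for every $A \in \TmSet{\UGATw{\Omega}}$ together with its fiber analogue. Unfolding $\phi$ via \cref{prop:two-sortification-terms-isomorphic}, these equations reduce to showing $\TW(\modelTerm{A}{\M{\Omega}}) \circ (\TW\XGAT \to \YGAT) = A$ and the analogous fiber equation. This is to be established by exploiting the bi-initial construction of $\M{\Omega}$ from its $\pGAT$-behaviour in \cref{prop:model-faithful}, tracing how the distinguished elements $\modelTerm{A}{\M{\Omega}}$ are inherited from the cartesian morphism data specifying $\TWb A(\M{\Omega})$. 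The main obstacle is precisely this final verification: although $\M{\Omega}$ is uniquely determined by its $\pGAT$-behaviour, one must carefully follow how the interpretation of $\TW\Omega$-terms in $\M{\Omega}$ (as a $\TW\Omega$-model via the isomorphism) matches the canonical map $\phi^U$.
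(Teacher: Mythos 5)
There is a genuine gap, and it sits exactly where you flag "the main obstacle": the final verification $\phi(\modelTerm{A'}{\M{Ω}}) = A'$ for \emph{arbitrary} $A' ∈ \TmSet{\UGATw{Ω}}$ cannot be carried out by "tracing how the distinguished elements are inherited from the cartesian morphism data" of \cref{prop:model-faithful}. That data only constrains the behaviour of $\M{Ω}$ on morphisms in the image of $\TW$, i.e.\ on $\TWb σ(\M{Ω})$ for $σ ∶ Ω → \XGAT$ or $\YGAT$. Unwinding it, what you can actually extract is that the index component $f_U$ of $\bang(\Z{Ω} → \M{Ω})$ satisfies $f_U(\phi(σ)) = σ$ for $σ ∈ \hom(Ω,\XGAT)$ — a \emph{retraction} identity, valid only on elements of $\TmSet{\UGATw{Ω}}$ of the form $π_{\YGAT} ∘ \TW σ$. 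The statement to be proved is the \emph{section} identity $\phi(f_U(A')) = A'$ for every $A'$, including those not (yet known to be) of that form; that every $A'$ is of that form is precisely the fullness being established, so arguing only on such elements is circular. For a general $A' ∶ \TW Ω → \YGAT$ over $\UGATw{Ω}$, the element $\modelTerm{A'}{\M{Ω}}$ is simply not computable from the $\pGAT$-behaviour of $\M{Ω}$ recorded in \cref{prop:model-faithful}, because $A'$ is a morphism out of $\TW Ω$ rather than out of $Ω$ and the bi-initial construction of $\M{Ω}$ via \cref{thm:universal-property-natural-transformation} is only natural in morphisms out of $Ω$.

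The paper closes this gap by a global argument rather than a pointwise one: it exhibits two natural transformations over $\hom(Ω,-)$ with values in $\colaxSlice$ — one built from $\TWb σ(\Z{Ω})$, the other obtained by postcomposing with $\phi ∘ \bang i$ — checks that they agree at $\XGAT$ and $\YGAT$ (where only the retraction-type identities above are needed, via \cref{prop:model-faithful,prop:initial-object-two-sortification}) and that their naturality squares at $\pGAT$ are pullbacks, and then invokes the uniqueness clause of \cref{thm:universal-property-natural-transformation} to conclude they are equal everywhere. Evaluating at $σ = \id_Ω$, where $\TWb{\id_Ω}(\Z{Ω})$ is invertible, yields $\phi ∘ \bang i = \id$ on all of $\TmElFam{Ω}$. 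Your reduction of the statement to the concrete equation, and your identification of the components of $\bang i$ via \cref{prop:initial-model-gat-morphism-existence}, are correct and consistent with the paper; but without a globalisation step of this kind the proof does not go through.
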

\begin{proof}
  We deduce the proposition from the uniqueness result of \cref{thm:universal-property-natural-transformation}
  to show equality of two suitable natural transformations 
  between functors from $\FinGat$ to
  the $\CartExp$-category $\colaxSlice$ 
  from \cref{thm:colax-exponentiable}.
  The domain of these natural transformations
  is the  
  functor $\FinGat → \colaxSlice$
  mapping $Γ$ to
  the constant function $\hom(Ω, Γ) → \FamCat{𝐒𝐞𝐭}$
  that maps any morphism to $\TmElFam{Ω}$.
  The codomain is the strict initial functor 
  $\FinGat → \colaxSliceCart$ 
  from \cref{prop:initial-functor-colax-slice-cart}
  composed with the embedding 
  $\colaxSliceCart → \colaxSlice$
  from \cref{prop:cartesian-colax-cartexp},
  mapping $Γ$ to $\FamFunctor{Γ} ∶ \model{Γ} → \FamCat{𝐒𝐞𝐭}$.

  The first natural transformation consists
  of the function from $\hom(Ω, Γ)$ to $\model{Γ}$
  mapping any morphism $σ$ to $\model{σ}(\coref{Ω}\Z{Ω})$, 
  and the family morphism 
  $\FamFunctor{Γ}\model{σ}\coref{Ω}\Z{Ω} 
  \xrightarrow{\TWb σ(\Z{Ω})} \TmElFam{Ω}$.
  The second natural transformation is like the first one, but we postcompose 
  the family morphism with $\TmElFam{Ω} → \TmFamF{}{Ω} → \TmElFam{Ω}$.
  We now show that the two natural transformations are equal, 
  thus recovering the desired equality by taking $σ = \id_{Ω}$, since 
  $\TWb \id_Ω(\Z{Ω}) = \TWb\Z{Ω}$ is necessarily an isomorphism as $\Z{Ω}$ is initial.
  To that end, we apply \cref{thm:universal-property-natural-transformation}.
  The fact that the naturality squares at $\pGAT$ are pullbacks 
  reduces to the fact that the naturality square at $\pGAT$ after whiskering by 
  $\colaxSlice → \CAT$ are pullbacks.  This is deduced 
   from the fact that $\coref{}\Z{Ω}$ is initial 
  and thus is isomorphic to the initial model of $Ω$ of \cref{prop:initial-model-gat-existence}, 
  which satisfies this pullback property by construction.
  
  It remains to show that that the above family morphisms are equal for any 
  morphism $σ$ from $Ω$ to $\XGAT$ or $\YGAT$.
  We focus on the case of $\XGAT$, the case of $\YGAT$ being similar.
  Let $σ ∶ Ω → \XGAT$ be a morphism of GATs.
  By applying 
  $\TWb{σ}$ to the initial morphism $i ∶ \Z{Ω} → \M{Ω}$,
  we get that 
  $\FamFunctor{Γ}\model{σ}\coref{Ω}\Z{Ω} 
  \xrightarrow{\TWb σ(\Z{Ω})}  \TmElFam{Ω} 
  \xrightarrow{\bang i} \TmFamF{}{Ω}$
  is equal to the following composition.
  \[
  \FamFunctor{Γ}\model{σ}\coref{Ω}\Z{Ω} 
  \xrightarrow{\FamFunctor{Γ}\model{σ}\coref{Ω}i} 
  \FamFunctor{Γ}\model{σ}\coref{Ω}\M{Ω} 
  \xrightarrow{\TWb σ(\M{Ω})}  \TmFamF{}{Ω}
  \]
  It is easy to check that 
  the two family morphisms are equal on the base sets,
  using the characterisations of $\Z{Ω}$ and $\M{Ω}$.
  Now, for the unique fiber $\TmSet{σ}$ of the domain, 
  consider an element $(Ω \xrightarrow{t} \YGAT) ∈ \TmSet{σ}$.
  Then, by considering $\model{t}\coref{Ω} i$, we 
  get that the image of $t$ by 
  $\FamFunctor{Γ}\model{σ}\coref{Ω}i$ is 
  $(\TW Ω \xrightarrow{\TW t} \TW \YGAT → \YGAT )$.
  It follows
  that the two family morphisms are equal on that fiber as well.
\end{proof}
\begin{corollary}
  $\TW$ is fully faithful.
\end{corollary}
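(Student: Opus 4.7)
The plan is to package the previous results into a short split-mono argument. By \cref{cor:two-sortification-faithful}, we already know that $\TW$ is faithful, and by \cref{cor:faithful-via-terms} this is equivalent to saying that for every theory $\Omega$ the family morphism
\[
\TmFamF{}{\Omega} \longrightarrow \TmFamF{\TW}{\Omega}
\]
induced by the functorial action of $\TW$ is a monomorphism. To conclude fullness, the same corollary tells us that it suffices to upgrade this monomorphism into an isomorphism.

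The key input is \cref{prop:section-terms-two-sortification}, which provides a section of the composite
\[
\TmFamF{}{\Omega} \longrightarrow \TmFamF{\TW}{\Omega} \xrightarrow{\,\cong\,} \TmElFam{\Omega},
\]
the second arrow being the isomorphism of \cref{prop:two-sortification-terms-isomorphic}. Composing this section with that isomorphism in reverse yields a section $s$ of the original morphism $f : \TmFamF{}{\Omega} \to \TmFamF{\TW}{\Omega}$. The remainder of the argument is the standard observation that a monomorphism that admits a section is automatically an isomorphism: from $f \circ s = \id$ we deduce $f \circ s \circ f = f = f \circ \id$, and cancelling the mono $f$ on the left gives $s \circ f = \id$.

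Hence $f$ is an isomorphism for every $\Omega$, and \cref{cor:faithful-via-terms} delivers the fullness (and, again, faithfulness) of $\TW$. I do not expect any obstacle here, since all the work — constructing the canonical object $\M{\Omega}$, the initial object $\Z{\Omega}$, and the section produced from the initial morphism $\Z{\Omega} \to \M{\Omega}$ — has already been carried out in \cref{sec:two-sortification-faithful,sec:two-sortification-full}; the final step is a one-line diagram chase.
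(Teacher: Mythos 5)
Your proposal is correct and follows the paper's own argument exactly: reduce to the family morphism via \cref{cor:faithful-via-terms}, combine the monomorphism from \cref{cor:two-sortification-faithful} with the section from \cref{prop:section-terms-two-sortification} (transported through the isomorphism of \cref{prop:two-sortification-terms-isomorphic}), and conclude that a split mono with a section is an isomorphism. The only difference is that you spell out the one-line cancellation $f \circ s \circ f = f$ explicitly, which the paper leaves implicit.
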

\begin{proof}
  By \cref{cor:faithful-via-terms}, it is enough to show that 
  for any theory $Ω$, the family morphism $\TmFamF{}{Ω} → \TmFamF{\TW}{Ω}$
  is an isomorphism. 
  We already showed that it is a monomorphism in \cref{cor:two-sortification-faithful},
  and \cref{prop:section-terms-two-sortification} provides a section.
  Therefore, it is an isomorphism and we get the desired result.
\end{proof}

  \section{Infinite GATs}
  \label{sec:infinite-gats}
  In this section, we sketch how to adapt our results
  to GATs that are not necessarily finite. 
  One could think of them as infinite contexts, following the type-theoretic 
  description of finite GATs, or more formally, as 
  Pro-objects~\cite[Section 2.2]{blom2023simplicial} of $\FinGat$, that is,
  as small cofiltered diagrams of finite GATs.
  For example, the infinite context $A₁ : \GatSet, A₂ : A_1 → \GatSet, ...$
  is modeled as a functor from the ordinal $ω^\op$, viewed as a category, to $\FinGat$,
  mapping $n$ to the finite theory $A_1 : \GatSet, A_2 : A_1 → \GatSet, ..., A_n : … → \GatSet$.

 The main result of this section is the following statement.
  \begin{theorem}
    \label{thm:infinite-GATs}
    The category \AP$\intro*\GatCat$ of GATs, equipped with $\pGAT$ viewed as a morphism in $\GatCat$, is bi-initial 
  in the 2-category which is 
  like $\CartExp$ except that objects are now categories with all limits and 
  the morphisms must preserve them as functors. Explicitly:
  \begin{itemize}
    \item an object is a category $ℂ$ with all limits 
    equipped with an \kl{exponentiable} morphism $p ∶ Y → X$ in $ℂ$;
    \item a morphism from $(ℂ, p)$ to $(ℂ', p')$ is a limit-preserving functor $F∶ ℂ \to 𝔻$ equipped with an isomorphism 
    $F(p) ≅ p'$, \kl{preserving pushforwards} along $p$;
    \item a 2-cell between two morphisms is a natural transformation between the underlying functors.
  \end{itemize}
  \end{theorem}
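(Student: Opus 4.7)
The plan is to realise $\GatCat$ as the free completion of $\FinGat$ under small limits---concretely, as $\mathrm{Pro}(\FinGat)$, noting that since $\FinGat$ already has finite limits, freely adding cofiltered limits produces all small limits---and then to bootstrap Uemura's bi-initiality result (\cref{thm:uemura}) from the finite to the infinite case. The key ingredient is the universal property of the inclusion $\iota : \FinGat \hookrightarrow \GatCat$: any finite-limit-preserving functor from $\FinGat$ into a category with all small limits extends, uniquely up to unique isomorphism, to a small-limit-preserving functor from $\GatCat$, and natural transformations extend similarly.

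First I would fix the setup: check that $\GatCat$ has all small limits and that the embedding $\iota$ preserves finite limits, and then verify that $\pGAT$, viewed as a morphism of $\GatCat$ via $\iota$, remains \kl{exponentiable} in $\GatCat$, with pushforward $P$ computed as the limit-preserving extension of $P : \FinGat \to \FinGat$. This relies on the fact that $P$ is a right adjoint (so preserves any limits that exist) and that $\pGAT$ itself lies in the image of $\iota$, so that pullback along $\pGAT$ on $\GatCat$ is obtained from pullback along $\pGAT$ on $\FinGat$ by the Pro-construction.

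Given an object $(\mathbb{D}, p')$ in the target 2-category, bi-initiality then follows in three steps. \textbf{(i) Existence.} \cref{thm:uemura} supplies a $\CartExp$-functor $F_0 : \FinGat \to \mathbb{D}$ with $F_0(\pGAT) \cong p'$; extending along $\iota$ gives a small-limit-preserving functor $\tilde F : \GatCat \to \mathbb{D}$ with $\tilde F \circ \iota \cong F_0$, and in particular $\tilde F(\pGAT) \cong p'$. \textbf{(ii) Preservation of pushforwards.} By \cref{lem:preservation_pushforward}, it suffices to show $\tilde F \circ P \cong P \circ \tilde F$. Since both $P$'s are right adjoints, hence limit-preserving, and since $\tilde F$ preserves small limits, the isomorphism already known on $\FinGat$ (from $F_0$ being $\CartExp$) propagates to $\GatCat$ by writing every infinite GAT as a cofiltered limit of finite ones. \textbf{(iii) Uniqueness.} Any other such $\CartExp$-functor $G : \GatCat \to \mathbb{D}$ restricts along $\iota$ to one naturally isomorphic to $F_0$ by the 2-dimensional part of \cref{thm:uemura}; since both $G$ and $\tilde F$ preserve small limits, the universal property of $\iota$ extends this natural isomorphism uniquely to $G \cong \tilde F$, and analogously for 2-cells.

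The main obstacle will be the setup step, specifically verifying that $\pGAT$ is genuinely \kl{exponentiable} in $\GatCat$. Exponentiability requires a right adjoint to $\pGAT^*$, so one needs $\pGAT^*$ to preserve enough colimits in the infinite setting; moreover, one must check that the slice $\GatCat/\XGAT$ is compatible with the Pro-construction, so that pushforwards at the level of Pro-objects can be computed pointwise from finite pushforwards and then glued along cofiltered limits. A clean alternative would be to avoid the explicit $\mathrm{Pro}$-description entirely and present $\GatCat$ abstractly as a 2-categorical colimit in the relevant 2-category of complete categories with a chosen \kl{exponentiable} morphism, transferring the finite bi-initiality directly along this presentation; this would sidestep the delicate bookkeeping about pushforwards and cofiltered limits at the cost of a more abstract argument.
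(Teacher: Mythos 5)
Your proposal follows essentially the same route as the paper: realise $\GatCat$ as the \kl{free filtered completion} (Pro-completion) of $\FinGat$ (\cref{prop:carGATs}), observe that it is complete and that the embedding preserves finite limits (\cref{prop:free-filtered-preserves-lim}), transport the $\CartExp$-structure along the completion, and bootstrap \cref{thm:uemura} via the universal property of the completion; your existence and uniqueness steps (i) and (iii) match the paper's proof of \cref{thm:infinite-GATs} essentially verbatim.

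The one step you explicitly leave open --- that $\pGAT$ remains \kl{exponentiable} in $\GatCat$ --- is exactly where the paper invests its technical effort, and the route you gesture at (getting a right adjoint to $\pGAT^*$ from colimit preservation, presumably via an adjoint functor theorem) is not the one taken. The paper instead proves two dedicated lemmas: first, that slicing commutes with free filtered completion, i.e.\ $\FinGat/\XGAT \to \GatCat/\iGAT(\XGAT)$ is again a \kl{free filtered completion} (\cref{prop:free-filtered-slice}, established through the presentation of the Pro-completion by finite-limit-preserving functors into $\Set$, \cref{prop:def-lex}); second, a general lemma extending an adjunction $L \dashv R$ along free filtered completions whenever the extension $\overline{L}$ preserves filtered limits (\cref{prop:extension-adjoints-filtered-limits}). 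Applying the latter with $L = \pGAT^*$ --- whose extension preserves filtered limits because it is itself a right adjoint, namely to postcomposition with $\iGAT(\pGAT)$ --- produces the pushforward in $\GatCat$ and simultaneously shows that $\iGAT$ \kl{preserves pushforwards} (\cref{cor:exponentiable-free-filtered}). Your step (ii) is handled in the paper by \cref{prop:lift-cartexp-functor}: one checks that the relevant mate becomes an isomorphism after whiskering with the slice completion $\FinGat/\YGAT \to \GatCat/\iGAT(\YGAT)$, which is the rigorous form of your ``propagate along cofiltered limits'' argument. So the plan is sound, but closing your declared obstacle requires precisely these two lemmas rather than an appeal to cocontinuity of $\pGAT^*$.
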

  The proofs and definitions of the previous sections can be easily adapted to this 
  bi-initiality property. 
  In particular, we still get a model functor and initial models, a fully faithful
  two-sortification functor from $\GatCat$ to the slice category $\GatCat/\FamG$,
  a coreflector morphism whose image by the model functor is the right adjoint of a strict coreflection between models of a GAT and its translation, together with 
  the explicit description of the models of the reduced GAT.

The rest of this section is devoted to the proof of \cref{thm:infinite-GATs},
  mostly combining standard results about locally finitely presentable categories~\cite{Adamek}.
  The starting point is the following characterisation of 
  the category of GATs.
  \begin{proposition}
    \label{prop:carGATs}
    The embedding \AP$\intro*\iGAT ∶ \FinGat → \GatCat$ of finite GATs into GATs 
    makes $\GatCat$ the \AP\intro{free filtered completion} (or Pro-completion) of 
    $\FinGat$, in the sense that 
   $\GatCat$ has filtered limits and 
   for any category $𝔻$ with filtered limits, 
   the precomposition functor $- ∘ \iGAT ∶ [\GatCat, 𝔻]_{fl} → [\FinGat, 𝔻]$
  is an equivalence of categories, where $[\GatCat, 𝔻]_{fl}$ denotes the full subcategory of 
  $[\GatCat, 𝔻]$ consisting of functors preserving filtered limits.
  \end{proposition}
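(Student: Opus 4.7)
The plan is to interpret this as the classical universal property of the Pro-completion, applied to $\FinGat$. Since the paper defines $\GatCat$ as the category of Pro-objects of $\FinGat$ (i.e., small cofiltered diagrams of finite GATs, with morphisms computed by the usual Pro-formula), the proposition is essentially the defining universal property of the construction, and the proof should reduce to invoking this standard result together with some light bookkeeping.

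The first step is to verify that $\GatCat$ admits small filtered limits in the paper's sense. Given a filtered diagram of Pro-objects $(G^j)_{j \in J}$ with each $G^j = (G^j_i)_{i \in I_j}$, its limit in $\GatCat$ is constructed as a single Pro-object indexed by a combined cofiltered category built from $J$ and the $I_j$ via the standard reindexing procedure for Pro-categories.

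The second step is to establish the equivalence. Given a category $\mathbb{D}$ with filtered limits and a functor $F : \FinGat \to \mathbb{D}$, I would define the extension $\hat F : \GatCat \to \mathbb{D}$ on a Pro-object $G = (G_i)_{i \in I}$ by $\hat F(G) := \lim_i F(G_i)$ in $\mathbb{D}$. The action of $\hat F$ on morphisms is forced by the standard hom-formula $\hom_{\GatCat}(G, G') = \lim_j \mathrm{colim}_i \hom_{\FinGat}(G_i, G'_j)$ together with the universal property of the limit in $\mathbb{D}$. By construction $\hat F$ preserves filtered limits and restricts to $F$ along $\iGAT$; any other filtered-limit-preserving extension must agree with $\hat F$ up to unique natural isomorphism, since each Pro-object $G = (G_i)_{i \in I}$ is canonically a filtered limit of $(\iGAT\, G_i)_i$ in $\GatCat$. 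This provides the required quasi-inverse to precomposition with $\iGAT$.

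The routine verifications (functoriality of $\hat F$ on morphisms, that this assignment is indeed quasi-inverse to restriction, and naturality in the functor categories) are standard calculations in Pro-categories and present no substantial obstacle. The only conceptual wrinkle worth flagging is terminological: ``filtered limit'' is used here for a limit of a diagram whose indexing category is cofiltered, consistent with the earlier example modelling infinite GATs as limits over $\omega^{\op}$; once that convention is fixed, the argument is a direct instance of the classical Pro-completion universal property.
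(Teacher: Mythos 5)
Your proof is correct and follows essentially the same route as the paper, which defines $\GatCat$ as $\mathrm{Pro}(\FinGat)$ and disposes of this proposition in one line by appealing to the standard universal property of the Pro-completion (later connected, via \cref{prop:def-lex}, to $\Lex{\FinGat}^{\op}$ with citations to Makkai--Par\'e and Kelly). You simply spell out the classical argument --- reindexing for cofiltered limits of Pro-objects, extension by $\hat F(G)=\lim_i F(G_i)$, and the fact that every Pro-object is the cofiltered limit of its components --- and your terminological remark about ``filtered limit'' meaning a limit over a cofiltered index category matches the paper's usage.
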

  This follows from making formal the above intuitive description of infinite GATs.
  In the same vein, Uemura~\cite{UEMURA2022106991} states that the category of GATs, with our convention 
  of the direction of morphisms\footnote{Recall that it is opposite 
  to Cartmell's original definition.}, is 
  equivalent to the opposite category of functors preserving finite limits from $\FinGat$ to $\Set$.
  The connection between these two characterisations of $\GatCat$ is provided by the following standard construction of \kl{free filtered completions}.
  \begin{proposition}
    \label{prop:def-lex}
    Let $ℂ$ be a small \kl{cartesian} category.
    Denoting the full subcategory 
    of functors $ℂ → 𝐒𝐞𝐭$ preserving finite limits by \AP$\intro*\Lex{ℂ}$, 
    the Yoneda embedding \AP$\intro*\yo ∶ ℂ → \Lex{ℂ}^\op$ 
    mapping $c$ to $\hom(c,-)$ is a \kl{free filtered completion}.
  \end{proposition}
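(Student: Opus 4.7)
The plan is to verify the three ingredients of a free filtered completion in order. First, $\Lex{ℂ}$ is closed under filtered colimits in the presheaf category $[ℂ, \Set]$: filtered colimits in $\Set$ commute with finite limits, so a pointwise filtered colimit of finite-limit-preserving functors is again finite-limit-preserving. Dually, $\Lex{ℂ}^\op$ has all filtered limits (in the sense used throughout the paper, i.e.\ limits of cofiltered diagrams), computed pointwise. Second, the Yoneda embedding $\yo$ is well-defined because each representable $\hom(c,-)$ preserves all limits in $\Set$, in particular the finite ones.

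The heart of the proof is the equivalence $[\Lex{ℂ}^\op, 𝔻]_{fl} \simeq [ℂ, 𝔻]$ induced by $- \circ \yo$. The key input is that every $F \in \Lex{ℂ}$ is canonically a filtered colimit of representables: writing $\int F$ for the category of elements of $F$, the co-Yoneda lemma gives $F \cong \mathrm{colim}_{(c,x) \in \int F}\, \hom(c,-)$, and $\int F$ is filtered precisely because $ℂ$ has finite limits and $F$ preserves them (this is the standard characterisation of flat functors out of a finitely complete category). Dually, every object of $\Lex{ℂ}^\op$ is a canonical filtered limit (in the paper's sense) of objects in the image of $\yo$.

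From this presentation, the putative inverse of $- \circ \yo$ sends $G \colon ℂ \to 𝔻$ to the functor $\tilde G(F) := \lim_{(c,x) \in (\int F)^\op} G(c)$, which is well-defined since $(\int F)^\op$ is of the shape for which $𝔻$ admits limits by assumption. The equation $\tilde G \circ \yo \cong G$ holds because $\int \hom(c,-)$ admits $(c,\mathrm{id}_c)$ as a terminal object, so the limit collapses to $G(c)$. Uniqueness up to natural isomorphism of any filtered-limit-preserving extension of $G$, as well as fullness and faithfulness of $- \circ \yo$ on morphisms, then follows because any such extension must agree on representables and must commute with the canonical filtered-limit presentation of an arbitrary object; the same principle forces natural transformations between such extensions to be determined by their components at representables.

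The main obstacle is verifying that $\tilde G$ itself preserves filtered limits: this amounts to an interchange argument showing that a filtered limit of filtered limits in $𝔻$ can be re-expressed as a single filtered limit over a suitable product-like shape built from the categories of elements. Rather than redoing this by hand, I would appeal to the standard theory of locally finitely presentable categories and (the appropriate form of) Gabriel--Ulmer duality, from which it follows that $\Lex{ℂ}^\op$ is the Pro-completion of a small finitely complete $ℂ$ with $\yo$ as its universal embedding; this provides the required interchange and closes the argument.
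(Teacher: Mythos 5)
Your proof is correct and amounts to unpacking exactly the references the paper cites: the paper's own proof of this proposition is a one-line appeal to \cite[Proposition 1.2.4]{MakkaiPare} and \cite[Proposition 5.39]{Kelly1982Basic}, which encode precisely the facts you lay out (lex $=$ flat over a finitely complete domain, the canonical filtered presentation by representables, the pointwise Kan-extension formula for the extension), and your closing appeal to the standard LFP/Gabriel--Ulmer machinery for the interchange step is the same move the paper makes by citing. The only blemish is a variance slip: under the usual convention $(c,\mathrm{id}_c)$ is \emph{initial}, not terminal, in the category of elements of $\hom(c,-)$, which is what makes the limit over the opposite category collapse to $G(c)$ --- a bookkeeping point that does not affect the argument.
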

  \begin{proof}
     By
   \cite[Proposition 1.2.4]{MakkaiPare} and 
  \cite[Proposition 5.39]{Kelly1982Basic}.
  \end{proof}
  \Cref{prop:carGATs} allows us to conclude that 
   $\GatCat$ is complete.
  \begin{proposition}
    \label{prop:free-filtered-preserves-lim}
    Given a small \kl{cartesian} category $ℂ$, 
    any \kl{free filtered completion} $i ∶ ℂ → \overline{ℂ}$
    preserves finite limits, and moreover, 
    $\overline{ℂ}$ is complete.
    Therefore, $\iGAT ∶ \FinGat → \GatCat$ preserves finite limits
    and $\GatCat$ is complete.
  \end{proposition}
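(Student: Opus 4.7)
The plan is to reduce to the explicit model of the free filtered completion from \cref{prop:def-lex}, namely $\yo ∶ ℂ → \Lex{ℂ}^{\op}$. By the universal property of free filtered completions, any two such completions of $ℂ$ are canonically equivalent, and the equivalence commutes with the universal embeddings; hence it suffices to establish the desired properties for this particular presentation. The consequence for $\iGAT$ will then follow directly from \cref{prop:carGATs}, which exhibits $\iGAT$ as a free filtered completion.

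For preservation of finite limits by $\yo$, I would argue as follows. Given a finite diagram $(c_i)$ with limit $\lim_i c_i$ in $ℂ$, it suffices to show that $\hom(\lim_i c_i, -)$ is the colimit in $\Lex{ℂ}$ of the diagram of representables $\hom(c_i, -)$, since this colimit is by definition the limit in $\Lex{ℂ}^{\op}$ of the $\yo(c_i)$. For any $F \in \Lex{ℂ}$, a twofold application of the Yoneda lemma combined with the lex property of $F$ yields
\[
\Hom_{\Lex{ℂ}}\!\bigl(\hom(\textstyle\lim_i c_i, -),\, F\bigr) \cong F(\lim_i c_i) \cong \lim_i F(c_i) \cong \lim_i \Hom_{\Lex{ℂ}}\!\bigl(\hom(c_i, -),\, F\bigr),
\]
naturally in $F$, which is precisely the universal property defining the colimit $\colim_i \hom(c_i, -)$. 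This simultaneously establishes existence of that colimit in $\Lex{ℂ}$ and identifies it with $\yo(\lim_i c_i)$, as required.

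For completeness of $\Lex{ℂ}^{\op}$, I would prove the equivalent statement that $\Lex{ℂ}$ is cocomplete. Completeness of $\Lex{ℂ}$ is easy: as a full subcategory of $[ℂ, \Set]$ it is closed under arbitrary limits (since finite limits commute with arbitrary limits in $\Set$), so limits are computed pointwise and are inherited from the presheaf category. Cocompleteness is more delicate; the standard strategy is to exhibit $\Lex{ℂ}$ as a reflective subcategory of $[ℂ, \Set]$, so that arbitrary colimits can be computed by taking the pointwise colimit in $[ℂ, \Set]$ and then applying the lex reflector. I expect this reflection step to be the main obstacle, as the rest of the argument is either a direct Yoneda computation or a syntactic transfer via \cref{prop:carGATs}; however, the existence of the lex reflector is classical in the theory of locally finitely presentable categories and can be invoked from the literature (e.g., Adámek–Rosický).
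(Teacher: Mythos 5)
Your proposal is correct and follows essentially the same route as the paper: both reduce, via the uniqueness of \kl{free filtered completions}, to the canonical presentation $\yo ∶ ℂ → \Lex{ℂ}^{\op}$ from \cref{prop:def-lex}, and then establish the two properties for that model. The only difference is that the paper simply cites \cite[Proposition 1.45.(ii)]{Adamek} at that point, whereas you unfold the citation into a direct Yoneda computation (for preservation of finite limits) and a reflective-subcategory argument (for cocompleteness of $\Lex{ℂ}$), both of which are sound.
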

  \begin{proof}
    It is easy to check that if this is true for one \kl{free filtered completion} of $ℂ$, then it is true for all such completions of $ℂ$. 
  This is true, in particular, for $ \yo ∶ ℂ → \Lex{ℂ}^{\op}$ from \cref{prop:def-lex}, by \cite[Proposition 1.45.(ii)]{Adamek}.
  \end{proof}
  
In the view to proving \cref{thm:infinite-GATs}, 
we need to lift the
$\CartExp$-structure of $\FinGat$ to $\GatCat$.
In particular, we need to show that $\iGAT(\pGAT)$ is exponentiable in $\GatCat$.
We notably rely on the following \kl{free filtered completion} of a slice category.
\begin{proposition}
  \label{prop:free-filtered-slice}
  Let $ℂ$ be a small \kl{cartesian} category and $j ∶ ℂ → \overline{ℂ}$ be 
  a \kl{free filtered completion}.
  Then, given any object $X$ of $ℂ$, the canonical functor $ℂ/X → \overline{ℂ}/jX$ mapping 
  $c → X$ to its image by $j$ is a \kl{free filtered completion}.
\end{proposition}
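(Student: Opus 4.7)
The plan is to reduce the claim to a specific choice of \kl{free filtered completion}, namely the Yoneda embedding $\yo: ℂ \to \Lex{ℂ}^{\op}$ from \cref{prop:def-lex}. First, I would observe that any two \kl{free filtered completions} $j_1: ℂ \to \overline{ℂ}_1$ and $j_2: ℂ \to \overline{ℂ}_2$ are related by a filtered-limit-preserving equivalence $E: \overline{ℂ}_1 \to \overline{ℂ}_2$ with $E \circ j_1 \cong j_2$. Post-composition with $E$ induces an equivalence of slices $\overline{ℂ}_1/j_1 X \to \overline{ℂ}_2/j_2 X$ preserving filtered limits and compatible (up to natural isomorphism) with the canonical embeddings from $ℂ/X$. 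Hence whether $ℂ/X \to \overline{ℂ}/jX$ is a \kl{free filtered completion} depends only on the equivalence class of $j$, so it suffices to verify the statement when $j = \yo$.

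For this choice, the key step will be to establish an equivalence $\Lex{ℂ}^{\op}/\yo(X) \simeq \Lex{ℂ/X}^{\op}$, by first exhibiting the underlying equivalence $\yo(X)/\Lex{ℂ} \simeq \Lex{ℂ/X}$. An object on the left is a pair $(F, \xi)$ with $F \in \Lex{ℂ}$ and $\xi \in F(X)$, corresponding to a natural transformation $\yo(X) \to F$ via the Yoneda lemma; I would send it to the functor $\tilde F: ℂ/X \to \Set$ defined on objects by $\tilde F(c \xrightarrow{f} X) = F(f)^{-1}(\xi)$, the fiber of $F(f)$ over $\xi$. This functor preserves finite limits, since $F$ does and finite limits in $ℂ/X$ are computed as in $ℂ$. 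Conversely, given $\tilde F \in \Lex{ℂ/X}$, I would recover $F: ℂ \to \Set$ by $F(c) = \tilde F(c \times X \xrightarrow{\pi_X} X)$ and extract $\xi \in F(X)$ from the image of the unique element of $\tilde F(\id_X)$ under $\tilde F(\Delta_X)$, where $\Delta_X: X \to X \times X$ is the diagonal viewed as a morphism from $\id_X$ to $\pi_X$ in $ℂ/X$.

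After taking opposites, I would check that the resulting equivalence $\Lex{ℂ}^{\op}/\yo(X) \simeq \Lex{ℂ/X}^{\op}$ identifies the canonical functor $ℂ/X \to \Lex{ℂ}^{\op}/\yo(X)$ induced by $\yo$ (which sends $c \xrightarrow{f} X$ to the morphism $\yo(c) \to \yo(X)$ in $\Lex{ℂ}^{\op}$ corresponding to $f \in \yo(c)(X)$) with the Yoneda embedding $ℂ/X \to \Lex{ℂ/X}^{\op}$. By \cref{prop:def-lex} applied to the small \kl{cartesian} category $ℂ/X$, this Yoneda embedding is a \kl{free filtered completion}, and transporting along the equivalence proved in the first paragraph establishes the same for $ℂ/X \to \overline{ℂ}/jX$ for any $j$.

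The main obstacle will be the verification of the equivalence $\yo(X)/\Lex{ℂ} \simeq \Lex{ℂ/X}$: although classical, it requires showing that the two constructions are mutually inverse up to natural isomorphism (using that $F$ preserves products so that $F(c \times X) \cong F(c) \times F(X)$ with $F(\pi_X)$ as the projection) and that they transport the Yoneda embeddings on each side into one another compatibly, which amounts to a careful identification of fibers of representable functors with representables in the slice.
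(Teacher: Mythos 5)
Your proposal is correct and follows essentially the same route as the paper: reduce to the Yoneda embedding $\yo : ℂ → \Lex{ℂ}^{\op}$ via the uniqueness of free filtered completions, establish the equivalence $\Lex{ℂ}^{\op}/\yo(X) \simeq \Lex{ℂ/X}^{\op}$ (the paper cites Pitts for this and records only the direction $\tilde F \mapsto F$ with $F(c)=\tilde F(c\times X\to X)$ and the point extracted via the diagonal, exactly as in your second paragraph, whereas you additionally spell out the inverse via fibers), and then identify the canonical functor with the Yoneda embedding of $ℂ/X$. The extra verifications you flag (fibers of representables being representables in the slice, preservation of finite limits by the fiber functor) are the right ones and go through.
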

\begin{proof}
  Again, it is enough to show it for one \kl{free filtered completion} of $ℂ$.
  Thus, we consider 
  $ \yo ∶ ℂ → \Lex{ℂ}^{\op}$ from \cref{prop:def-lex}.
  By \cite[Proof of Proposition 2.6]{PittsAMreslfp},
  the slice category $\Lex{ℂ}^\op/\yo c$ is equivalent to $\Lex{ℂ/c}^\op$.
   Explicitly, the equivalence from $\Lex{ℂ/c}$ to $\yo c/\Lex{ℂ}$ maps 
   $F ∶ (ℂ/c) → 𝐒𝐞𝐭$ to the functor
   $F' ∶ ℂ → 𝐒𝐞𝐭$
   mapping $d$ to $F(d × c \xrightarrow{π₂} c)$, equipped with 
   the morphism $\yo c → F'$ given, by the Yoneda lemma, 
   by the element
    corresponding to the image of the single element 
    of the (terminal) set $ F(c → c)$
    by $F Δ ∶ F(c → c) → F'(c ×c \xrightarrow{π₂}) $.
    From this description, it is clear 
    that $ℂ \xrightarrow{\yo} \Lex{ℂ}^\op ≃ \Lex{ℂ}^\op / \yo c$
    is, up to isomorphism, the claimed functor.
\end{proof}
To show exponentiability of $\iGAT(\pGAT)$ in $\GatCat$, we need 
to define a right adjoint to the pullback functor.
The following proposition gives a general construction of right adjoints,
 exploiting the universal property of \kl{free filtered completions}.
\begin{proposition}
  \label{prop:extension-adjoints-filtered-limits}
  Let $i ∶ ℂ → \overline{ℂ}$ and 
  $j ∶ 𝔻 → \overline{𝔻}$ be  
  \kl{free filtered completions} of small \kl{cartesian} categories,
  with a left adjoint $L ∶ ℂ → 𝔻$ and 
  a functor  $\overline{L} ∶ \overline{ℂ} → \overline{𝔻}$
  preserving filtered limits, related by an isomorphism 
  $j ∘ L ≅ \overline{L} ∘ i$.
  Then, $\overline{L}$ has a right adjoint
  $\overline{R} ∶ \overline{𝔻} → \overline{ℂ}$
  such that the mate 
  $i ∘ R → \overline{R} ∘ j$
  of the above isomorphism is an isomorphism as well,
  where $R ∶ 𝔻 → ℂ$ denotes the right adjoint of $L$.
\end{proposition}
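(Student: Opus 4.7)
The plan is to exploit the 2-categorical universal property of free filtered completions: for any category $\mathbb{E}$ with filtered limits, precomposition with $i$ (resp.\ $j$) yields an equivalence between filtered-limit-preserving functors out of $\overline{\mathbb{C}}$ (resp.\ $\overline{\mathbb{D}}$) and arbitrary functors out of $\mathbb{C}$ (resp.\ $\mathbb{D}$). First I would define $\overline{R}$ as the essentially unique filtered-limit-preserving extension of $i \circ R \colon \mathbb{D} \to \overline{\mathbb{C}}$ along $j$, which makes sense because $\overline{\mathbb{C}}$ is complete by \cref{prop:free-filtered-preserves-lim}. By construction this comes equipped with an isomorphism $\overline{R} \circ j \cong i \circ R$; a short diagram chase confirms that this is the mate of the hypothesized isomorphism $j \circ L \cong \overline{L} \circ i$.

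Next I would construct a unit $\overline{\eta} \colon \id_{\overline{\mathbb{C}}} \to \overline{R}\,\overline{L}$ and a counit $\overline{\varepsilon} \colon \overline{L}\,\overline{R} \to \id_{\overline{\mathbb{D}}}$ by invoking the universal property a second time. Since $\overline{L}$ and $\overline{R}$ both preserve filtered limits, so do the composites $\overline{R}\,\overline{L}$ and $\overline{L}\,\overline{R}$; hence natural transformations between each such composite and the relevant identity functor are in bijection, via fully-faithfulness of precomposition, with natural transformations between their restrictions along $i$ and $j$. The restriction $\overline{R}\,\overline{L} \circ i$ is canonically isomorphic to $i \circ RL$ (by chaining the two given isomorphisms), while the restriction $\id_{\overline{\mathbb{C}}} \circ i$ is simply $i$, so I would take $\overline{\eta}$ to correspond to $i$ whiskered with the unit $\eta \colon \id_{\mathbb{C}} \to RL$ of $L \dashv R$; dually $\overline{\varepsilon}$ corresponds to $j$ whiskered with the counit $\varepsilon \colon LR \to \id_{\mathbb{D}}$.

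Finally, the triangle identities $(\overline{\varepsilon}\,\overline{L}) \circ (\overline{L}\,\overline{\eta}) = \id_{\overline{L}}$ and $(\overline{R}\,\overline{\varepsilon}) \circ (\overline{\eta}\,\overline{R}) = \id_{\overline{R}}$ are equalities of natural transformations between filtered-limit-preserving functors, so the same universal property reduces them to their restrictions along $i$ and $j$. Under the correspondences above, the left-hand sides then become $j$ (resp.\ $i$) whiskered with the composites $\varepsilon L \circ L\eta$ and $R\varepsilon \circ \eta R$, which are identities by the triangle identities of $L \dashv R$. The only real obstacle is bookkeeping: one must track which canonical comparison 2-cell is meant at each use of $\overline{L} \circ i \cong j \circ L$ or $\overline{R} \circ j \cong i \circ R$ and verify that the two sides of each triangle identity are compared through the same chain of isomorphisms. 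This is automatic from the 2-dimensional part of the universal property, since any two natural transformations between filtered-limit-preserving functors that agree on restriction along $i$ or $j$ must coincide.
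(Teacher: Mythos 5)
The paper states \cref{prop:extension-adjoints-filtered-limits} without proof, so there is nothing to compare against directly; judged on its own, your argument is correct and is the standard one: define $\overline{R}$ as the filtered-limit-preserving extension of $i \circ R$ along $j$, transport the unit, counit and triangle identities through the (fully faithful) precomposition equivalences, and check by a mate calculation that the resulting isomorphism $\overline{R} \circ j \cong i \circ R$ is indeed the mate of $j \circ L \cong \overline{L} \circ i$. This is exactly in the style the paper uses elsewhere (e.g.\ the proofs of \cref{prop:lift-cartexp-functor} and \cref{thm:infinite-GATs}, which reduce statements about functors out of the completion to their restrictions along $i$); the only point I would insist you write out is the final mate computation, since the mate $iR \to \overline{R}j$ is a transformation between functors out of $\mathbb{D}$ rather than out of $\overline{\mathbb{D}}$, so faithfulness of precomposition does not apply there and one must instead verify directly (using the triangle identity for $L \dashv R$) that the mate composed with your chosen isomorphism $\overline{R}j \cong iR$ is the identity.
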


\begin{corollary}
  \label{cor:exponentiable-free-filtered}
  Given a \kl{free filtered completion} $i ∶ ℂ → \overline{ℂ}$ of a 
  \kl{cartesian} category $ℂ$ and an exponentiable morphism
  $p ∶ Y → X$ in $ℂ$, the morphism
  $i p ∶ i Y → i X$ is exponentiable in $\overline{ℂ}$
  and $i$ \kl{preserves pushforwards} along $p$.

\end{corollary}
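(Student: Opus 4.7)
My plan is to reduce both claims to a single application of Proposition~\ref{prop:extension-adjoints-filtered-limits}, carried out at the level of slice categories over $X$ and $Y$. The starting observation is that exponentiability of $p$ in $\CC$ amounts to the adjunction $p^* \dashv p_*$ between $\CC/X$ and $\CC/Y$, while exponentiability of $ip$ in $\overline{\CC}$ would amount to an adjunction $(ip)^* \dashv (ip)_*$ between $\overline{\CC}/iX$ and $\overline{\CC}/iY$. So I would lift the first adjunction to the second, and read off preservation of pushforwards from the compatibility of this lift with the slice-induced embeddings along $i$.

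Concretely, I would first apply Proposition~\ref{prop:free-filtered-slice} to recognise the slice-level functors induced by $i$ (via postcomposition) from $\CC/X$ to $\overline{\CC}/iX$ and from $\CC/Y$ to $\overline{\CC}/iY$ as free filtered completions of small \kl{cartesian} categories. I would then note that $(ip)^*$ exists and preserves all limits in $\overline{\CC}$: existence follows from completeness of $\overline{\CC}$ (Proposition~\ref{prop:free-filtered-preserves-lim}), and limit preservation follows from $(ip)^*$ being a right adjoint to postcomposition with $ip$. In particular, $(ip)^*$ preserves filtered limits, fulfilling the hypothesis of Proposition~\ref{prop:extension-adjoints-filtered-limits}. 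The required compatibility isomorphism between $p^*$ and $(ip)^*$, through the slice-induced versions of $i$, reduces to $i$ sending a pullback of $p$ along some $c \to X$ to the pullback of $ip$ along $ic \to iX$, which holds because $i$ preserves finite limits (again Proposition~\ref{prop:free-filtered-preserves-lim}).

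Applying Proposition~\ref{prop:extension-adjoints-filtered-limits} with $L := p^*$ and $\overline{L} := (ip)^*$ then produces both (a) a right adjoint $\overline{R}$ to $(ip)^*$, establishing that $ip$ is \kl{exponentiable} in $\overline{\CC}$ with $(ip)_* := \overline{R}$, and (b) an isomorphism between the induced mates of the compatibility isomorphism above. By Lemma~\ref{lem:preservation_pushforward}, this latter isomorphism is precisely the assertion that $i$ \kl{preserves pushforwards} along $p$. The only point requiring care, more a bookkeeping step than a genuine obstacle, is to check that the mate furnished by Proposition~\ref{prop:extension-adjoints-filtered-limits} really is the canonical natural transformation appearing in Definition~\ref{def:preserves-pushforwards}; this follows from the uniqueness of mates relative to the two given adjunctions $p^* \dashv p_*$ and $(ip)^* \dashv (ip)_*$.
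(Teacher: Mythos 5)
Your proposal is correct and follows essentially the same route as the paper: both apply Proposition~\ref{prop:extension-adjoints-filtered-limits} to the slice-level free filtered completions from Proposition~\ref{prop:free-filtered-slice}, with $L = p^*$ and $\overline{L} = (ip)^*$, obtain the compatibility isomorphism from $i$ preserving pullbacks, justify that $(ip)^*$ preserves filtered limits by exhibiting it as a right adjoint to postcomposition with $ip$, and read off both exponentiability and preservation of pushforwards from the resulting right adjoint $\overline{R}$ and the mate isomorphism. Your closing remark about identifying the mate with the canonical transformation of Definition~\ref{def:preserves-pushforwards} is a careful bookkeeping step the paper leaves implicit.
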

\begin{proof}
We apply \cref{prop:extension-adjoints-filtered-limits}
by taking $i$ and $j$ to be the embeddings $ℂ/X → \overline{ℂ}/i X$
and $ℂ/Y → \overline{ℂ}/i Y$
by \cref{prop:free-filtered-slice},
 with
  $L$ and $\overline{L}$ being the functors pulling back
  along $p$ and $i p$ respectively.
 The isomorphism $j ∘ L ≅ \overline{L} ∘ i$ is 
 given by the fact that $i$ preserves pullbacks,
 by \cref{prop:free-filtered-preserves-lim}.
 Moreover, $\overline{L}$ preserves filtered limits 
 since it is right adjoint to 
 the functor $\overline{ℂ}/i Y → \overline{ℂ}/i X$ 
 mapping a morphism $c → iY$ 
 to $ c → iY \xrightarrow{ip} iX$. 
 We then get a right adjoint $\overline{R}$ 
 which ensures that $i p$ is exponentiable in $\overline{ℂ}$.
 Moreover, the isomorphism $i ∘ R ≅ \overline{R} ∘ j$,
  where $R$ is the pushforward along $p$,
  ensures that $i$ \kl{preserves pushforwards} along $p$.
\end{proof}
\begin{corollary}
  \label{cor:iGAT-cartexp}
  $\GatCat$, equipped with 
  $\iGAT(\pGAT)$, 
   is a $\CartExp$-category, and
  $\iGAT ∶ \FinGat → \GatCat$ is a $\CartExp$-functor.  
\end{corollary}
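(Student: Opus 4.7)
The plan is to assemble the result by directly chaining the two preceding propositions, since the corollary is essentially a packaging of their conclusions. First, by \cref{prop:carGATs}, the embedding $\iGAT : \FinGat \to \GatCat$ is a \kl{free filtered completion}, and $\FinGat$ is a small \kl{cartesian} category (as it appears as the domain of the $\CartExp$-structure of \cref{thm:uemura}). Applying \cref{prop:free-filtered-preserves-lim} then delivers in one stroke that $\GatCat$ is complete—hence, in particular, \kl{cartesian}—and that $\iGAT$ preserves finite limits.

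For the exponentiable morphism and preservation of pushforwards, I would apply \cref{cor:exponentiable-free-filtered} to the same completion $\iGAT$ and to $\pGAT$, which is \kl{exponentiable} in $\FinGat$ by \cref{thm:uemura}. This simultaneously gives that $\iGAT(\pGAT)$ is \kl{exponentiable} in $\GatCat$ and that $\iGAT$ \kl{preserves pushforwards} along $\pGAT$. Taking $\iGAT(\pGAT)$ as the chosen exponentiable morphism of $\GatCat$, together with the identity as the canonical isomorphism $\iGAT(\pGAT) \cong \iGAT(\pGAT)$, completes the data required to exhibit $\iGAT$ as a $\CartExp$-functor in the sense of \cref{thm:infinite-GATs}.

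There is no substantial obstacle: the two cited results were set up precisely to combine in this way, and the only bookkeeping is to confirm their hypotheses are met, which has already been done in the section. If anything, the mild subtlety is just to record that the $\CartExp$-structure on $\GatCat$ intended here uses the notion from \cref{thm:infinite-GATs} (objects are categories with \emph{all} limits and morphisms preserve them), so that the ``cartesian'' requirement is upgraded to completeness—but this is exactly what \cref{prop:free-filtered-preserves-lim} provides.
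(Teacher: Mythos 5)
Your proposal is correct and follows exactly the paper's argument: finite limits and their preservation come from \cref{prop:free-filtered-preserves-lim}, and exponentiability of $\iGAT(\pGAT)$ together with preservation of pushforwards comes from \cref{cor:exponentiable-free-filtered}. The only superfluous remark is the one about upgrading to completeness — the corollary as stated only asserts the $\CartExp$-structure of \cref{thm:uemura}, which requires finite limits; completeness of $\GatCat$ is invoked separately in the proof of \cref{thm:infinite-GATs}.
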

\begin{proof}
  By \cref{prop:free-filtered-preserves-lim}, $\GatCat$ has finite limits, preserved by 
  $\iGAT$.
  By \cref{cor:exponentiable-free-filtered},
  $\iGAT (\pGAT)$ is exponentiable in $\GatCat$
  and $\iGAT$ \kl{preserves pushforwards} along $\pGAT$.
\end{proof}
The proof of \cref{thm:infinite-GATs} finally relies on 
the following observation.
\begin{proposition}
  \label{prop:lift-cartexp-functor}
  Let $i ∶ ℂ → \overline{ℂ}$ be a \kl{free filtered completion}
  of a small \kl{cartesian} category $ℂ$
  with an exponentiable morphism $p_ℂ ∶ Y_ℂ → X_ℂ$.
  Consider a $\CartExp$-functor 
  $F ∶ ℂ → 𝔻$ with $𝔻$ complete.
  By universal property of the \kl{free filtered completion},
  we get a functor $\overline{F} ∶ \overline{ℂ} → 𝔻$
  with an isomorphism $\overline{F} ∘ i ≅ F$.

   Then, $\overline{F}$ preserves limits and pushforwards 
   along $i p_ℂ$.
\end{proposition}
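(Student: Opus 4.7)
The plan is to prove the two claims separately, both exploiting the universal property of the \kl{free filtered completion} $i : \CC \to \overline{\CC}$.

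I would first show that $\overline{F}$ preserves finite limits. Given a finite diagram in $\overline{\CC}$, express each of its objects as a filtered limit of objects from $i(\CC)$; since limits commute, the finite limit in $\overline{\CC}$ can be rewritten as a filtered limit of finite limits in $i(\CC)$, using that $i$ itself preserves finite limits by \cref{prop:free-filtered-preserves-lim}. Applying $\overline{F}$, preservation of filtered limits (by construction) and preservation of finite limits by $F \cong \overline{F} \circ i$ combine, together with the analogous commutation of finite and filtered limits in the complete category $\DD$, to yield preservation of the original finite limit. For arbitrary small limits, every small product in $\overline{\CC}$ is a filtered limit of finite products (indexed by the directed poset of finite subsets of the index set), so $\overline{F}$ preserves small products; since it also preserves equalizers as a special case of finite limits, it preserves all small limits.

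For preservation of pushforwards, I would use \cref{prop:free-filtered-slice} to identify $\overline{\CC}/i Y_\CC$ with the \kl{free filtered completion} of $\CC/Y_\CC$. The two candidate functors $\overline{F} \circ (i p_\CC)_*$ and $(F p_\CC)_* \circ \overline{F}$ from $\overline{\CC}/i Y_\CC$ to $\DD$ both preserve filtered limits, since $\overline{F}$ does by construction and the pushforwards are right adjoints. Their restrictions to $\CC/Y_\CC$ are isomorphic: this combines the fact that $i$ \kl{preserves pushforwards} along $p_\CC$ (\cref{cor:exponentiable-free-filtered}) with the hypothesis that $F$ is a $\CartExp$-functor and hence already \kl{preserves pushforwards} along $p_\CC$. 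By the universal property of the \kl{free filtered completion}, the two functors are isomorphic globally.

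The main obstacle will be to verify that the global isomorphism produced in the second step is in fact the canonical mate from pullback preservation required by \cref{def:preserves-pushforwards}, rather than some arbitrary isomorphism. I would address this by appealing to the 2-dimensional aspect of the universal property: natural transformations between filtered-limit-preserving functors out of $\overline{\CC}/i Y_\CC$ are determined by their restrictions along $\CC/Y_\CC \to \overline{\CC}/i Y_\CC$. Since the mate transformation on $\overline{\CC}/i Y_\CC$ restricts naturally to the mate for $F$ on $\CC/Y_\CC$, and the latter is an isomorphism by assumption on $F$, the extension is likewise an isomorphism.
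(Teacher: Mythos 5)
Your argument for pushforward preservation is essentially the paper's: both reduce, via the universal property of the free filtered completion $\CC/Y_\CC \to \overline{\CC}/iY_\CC$ from \cref{prop:free-filtered-slice} (precomposition being an equivalence onto filtered-limit-preserving functors, hence reflecting isomorphisms of natural transformations), to the fact that the whiskering of the canonical mate along this embedding is an isomorphism because $F$ itself preserves pushforwards --- so your preliminary step producing \emph{some} isomorphism between the two composites is subsumed by your final paragraph, which is the actual proof. For limit preservation the paper simply invokes \cite[Proposition~1.45.(ii)]{Adamek}, whereas you argue by hand; the one point your sketch glosses over is that rewriting a finite limit in $\overline{\CC}$ as a filtered limit of finite limits requires reindexing the entire finite diagram over a single cofiltered index category (a level representation), not merely decomposing each object separately, though this is a standard fact about pro-completions and does not affect the conclusion.
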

\begin{proof}
  The fact that it preserves limits is a consequence 
  of \cite[Proposition 1.45.(ii)]{Adamek}.
  For preservation of pushforwards, we need to show that 
  the mate of the following natural isomorphism is also an isomorphism.
  \[
\begin{tikzpicture}
\draw[white,-,line width=0.20833333333333334em] (15.885416666666668em,-2.8645833333333335em) .. controls (16.788194444444443em,-2.8645833333333335em) and (17.69097222222222em,-2.8645833333333335em) .. (18.59375em,-2.8645833333333335em);
\draw[black,->, ] (15.885416666666668em,-2.8645833333333335em) .. controls (16.788194444444443em,-2.8645833333333335em) and (17.69097222222222em,-2.8645833333333335em) .. (18.59375em,-2.8645833333333335em);
\draw[white,-,line width=0.20833333333333334em] (14.322916666666668em,-7.708333333333334em) .. controls (14.322916666666668em,-6.388888888888888em) and (14.322916666666668em,-5.069444444444445em) .. (14.322916666666668em,-3.75em);
\draw[black,->, ] (14.322916666666668em,-7.708333333333334em) .. controls (14.322916666666668em,-6.388888888888888em) and (14.322916666666668em,-5.069444444444445em) .. (14.322916666666668em,-3.75em);
\draw[white,-,line width=0.20833333333333334em] (15.833333333333334em,-8.59375em) .. controls (16.75347222222222em,-8.59375em) and (17.67361111111111em,-8.59375em) .. (18.59375em,-8.59375em);
\draw[black,->, ] (15.833333333333334em,-8.59375em) .. controls (16.75347222222222em,-8.59375em) and (17.67361111111111em,-8.59375em) .. (18.59375em,-8.59375em);
\draw[white,-,line width=0.20833333333333334em] (20.052083333333336em,-7.708333333333334em) .. controls (20.052083333333332em,-6.388888888888888em) and (20.052083333333332em,-5.069444444444445em) .. (20.052083333333336em,-3.75em);
\draw[black,->, ] (20.052083333333336em,-7.708333333333334em) .. controls (20.052083333333332em,-6.388888888888888em) and (20.052083333333332em,-5.069444444444445em) .. (20.052083333333336em,-3.75em);
\node at (20.052083333333336em,-8.59375em) {$ \overline{ℂ}/iY$} ;
\node at (14.322916666666668em,-2.8645833333333335em) {$ 𝔻/X_𝔻$} ;
\node at (20.052083333333336em,-2.8645833333333335em) {$ 𝔻/Y_𝔻$} ;
\node at (14.322916666666668em,-8.59375em) {$ \overline{ℂ}/iX$} ;
\node at (17.1875em,-5.8020830154418945em) {$\cong $} ;
\node[scale=0.7] at (17.239583333333336em,-2.335896063517366em) {$p_𝔻^*$} ;
\node[scale=0.7] at (13.956744572529768em,-5.729166666666667em) {$\overline{F}$} ;
\node[scale=0.7] at (17.213541666666668em,-9.121807014882435em) {$ip_ℂ^*$} ;
\node[scale=0.7] at (20.418255427470232em,-5.729166666666667em) {$\overline{F}$} ;
\end{tikzpicture}
\]
  By the universal property 
  of the \kl{free filtered completion} 
  $ℂ/Y → \overline{ℂ}/iY$ by \cref{prop:free-filtered-slice}, it is enough to show 
  that the whiskering of the mate by 
  $ℂ/Y → \overline{ℂ}/iY$ is an isomorphism.
  But this follows from the fact that $F$ itself 
  \kl{preserves pushforwards} along $p$.
\end{proof}
We are now ready to prove the claimed bi-initial 
characterisation of $\GatCat$.

\begin{proof}[Proof of \cref{thm:infinite-GATs}]
  First, $(\GatCat, \iGAT(\pGAT))$ is a $\CartExp$-category by 
  \cref{cor:iGAT-cartexp}. Moreover, it is complete by 
  \cref{prop:free-filtered-preserves-lim}.
Consider a $\CartExp$-category $ℂ$ with all limits.
By bi-initiality of $\FinGat$, we get 
a $\CartExp$-functor $F ∶ \FinGat → ℂ$, which 
extends to a $\CartExp$-functor $\overline{F} ∶ \GatCat → ℂ$
 preserving limits by \cref{prop:lift-cartexp-functor}.

 Now, assume given another $\CartExp$-functor $G ∶ \GatCat → ℂ$
 preserving limits,
 and let us show that $G$ is isomorphic to $\overline{F}$, for a unique isomorphism.
 By the universal property of the \kl{free filtered completion}
 $\iGAT ∶ \FinGat → \GatCat$, it is enough to show 
 that $G ∘ \iGAT$ is isomorphic to $\overline{F} ∘ \iGAT$ for a unique morphism, which follows 
 from bi-initiality of $\FinGat$, since both are composition 
 of $\CartExp$-functors, by \cref{cor:iGAT-cartexp}.
  
\end{proof}

\section{Conclusion and Future Work}
We introduced the two-sortification translation of generalised algebraic theories and
notably showed 
that there is a strict coreflection between the categories of models
of a GAT and its translation.
Our development is heavily based on the bi-initiality of the category of finite GATs~\cite{UEMURA2022106991}, as a cartesian category equipped with the exponentiable morphism $(\GVar{U} : \GatSet, u : \U) → (\GVar{U} : \GatSet)$. We extended 
this characterisation to account for two-sortification of infinite GATs.
Other generalisations remain to be investigated, such as infinitary operations~\cite{DBLP:conf/lics/KovacsK20}, second-order generalised algebraic theories~\cite{DBLP:journals/mscs/Uemura23},
as well as adapting this work 
to a type-theoretic, possibly univalent, metatheory.

\bibliographystyle{ACM-Reference-Format}
\bibliography{bibliography}

\appendixOrNot{
\appendix
\

\section{Proofs of the Main Results}

In this appendix, we provide some detailed proofs of the main results of the paper.
\subsection{$\CartExp$-structure on $\FinGat/\FamG$}
The right adjoint to the pullback functor along $\pp$ can be computed expliclty as follows.
\begin{proposition}
  In the setting of \cref{prop:p2-exponentiable}, the right adjoint to the pullback functor along $\pp$ maps a morphism $z ∶ Z → PX$ to the vertical composite on the left, where the dashed arrow is
  a morphism in the slice over $X$ between $\XX \xrightarrow{ε} X$ and $PX → X$ obtained as 
   the transpose of the morphism on the right under the adjunction $p^* ⊣ P$.
  \[
  \begin{tikzpicture}
\draw[white,-,line width=0.20833333333333334em] (9.114583333333334em,-2.7083333333333335em) .. controls (9.114583333333334em,-3.3333333333333335em) and (9.114583333333334em,-3.9583333333333335em) .. (9.114583333333334em,-4.583333333333334em);
\draw[black,->, ] (9.114583333333334em,-2.7083333333333335em) .. controls (9.114583333333334em,-3.3333333333333335em) and (9.114583333333334em,-3.9583333333333335em) .. (9.114583333333334em,-4.583333333333334em);
\draw[white,-,line width=0.20833333333333334em] (6.354166666666667em,-5.46875em) .. controls (6.822916666666667em,-5.46875em) and (7.291666666666667em,-5.46875em) .. (7.760416666666667em,-5.46875em);
\draw[black,->, dashed, ] (6.354166666666667em,-5.46875em) .. controls (6.822916666666667em,-5.46875em) and (7.291666666666667em,-5.46875em) .. (7.760416666666667em,-5.46875em);
\draw[white,-,line width=0.20833333333333334em] (6.640625em,-1.8229166666666667em) .. controls (7.144097222222222em,-1.8229166666666667em) and (7.647569444444444em,-1.8229166666666667em) .. (8.151041666666668em,-1.8229166666666667em);
\draw[black,->, ] (6.640625em,-1.8229166666666667em) .. controls (7.144097222222222em,-1.8229166666666667em) and (7.647569444444444em,-1.8229166666666667em) .. (8.151041666666668em,-1.8229166666666667em);
\draw[white,-,line width=0.20833333333333334em] (5.46875em,-2.7083333333333335em) .. controls (5.46875em,-3.3333333333333335em) and (5.46875em,-3.9583333333333335em) .. (5.46875em,-4.583333333333334em);
\draw[black,->, ] (5.46875em,-2.7083333333333335em) .. controls (5.46875em,-3.3333333333333335em) and (5.46875em,-3.9583333333333335em) .. (5.46875em,-4.583333333333334em);
\draw[white,-,line width=0.20833333333333334em] (5.46875em,-6.354166666666667em) .. controls (5.46875em,-6.979166666666667em) and (5.46875em,-7.604166666666667em) .. (5.46875em,-8.229166666666668em);
\draw[black,->, ] (5.46875em,-6.354166666666667em) .. controls (5.46875em,-6.979166666666667em) and (5.46875em,-7.604166666666667em) .. (5.46875em,-8.229166666666668em);
\draw[white,-,curve={ratio=0.3}, line width=0.20833333333333334em] (4.937500000000001em,-2.7083333333333335em) .. controls (3.833333333333334em,-4.548611111111111em) and (3.833333333333334em,-6.388888888888888em) .. (4.937500000000001em,-8.229166666666668em);
\draw[black,->, curve={ratio=0.3}, ] (4.937500000000001em,-2.7083333333333335em) .. controls (3.833333333333334em,-4.548611111111111em) and (3.833333333333334em,-6.388888888888888em) .. (4.937500000000001em,-8.229166666666668em);
\draw[white,-,line width=0.20833333333333334em] (12.760416666666666em,-2.7083333333333335em) .. controls (12.760416666666666em,-3.3333333333333335em) and (12.760416666666666em,-3.9583333333333335em) .. (12.760416666666666em,-4.583333333333334em);
\draw[black,->, ] (12.760416666666666em,-2.7083333333333335em) .. controls (12.760416666666666em,-3.3333333333333335em) and (12.760416666666666em,-3.9583333333333335em) .. (12.760416666666666em,-4.583333333333334em);
\draw[white,-,line width=0.20833333333333334em] (12.760416666666666em,-6.354166666666667em) .. controls (12.760416666666666em,-6.979166666666667em) and (12.760416666666666em,-7.604166666666667em) .. (12.760416666666666em,-8.229166666666668em);
\draw[black,->, ] (12.760416666666666em,-6.354166666666667em) .. controls (12.760416666666666em,-6.979166666666667em) and (12.760416666666666em,-7.604166666666667em) .. (12.760416666666666em,-8.229166666666668em);
\draw[white] (5.729166666666667em,-3.229166666666667em) -- (7.161458333333334em,-3.229166666666667em);
\draw[white] (7.161458333333334em,-3.229166666666667em) -- (7.161458333333334em,-2.0833333333333335em);
\draw[black] (5.729166666666667em,-3.229166666666667em) -- (7.161458333333334em,-3.229166666666667em);
\draw[black] (7.161458333333334em,-3.229166666666667em) -- (7.161458333333334em,-2.0833333333333335em);
\node at (9.114583333333334em,-1.8229166666666667em) {$PZ$} ;
\node at (9.114583333333334em,-5.46875em) {$PPX$} ;
\node at (5.46875em,-5.46875em) {$X_2$} ;
\node at (5.46875em,-1.8229166666666667em) {$P_2 Z$} ;
\node at (5.46875em,-9.114583333333334em) {$PX$} ;
\node at (12.760416666666666em,-1.8229166666666667em) {$Y_2$} ;
\node at (12.760416666666666em,-5.46875em) {$X_2$} ;
\node at (12.760416666666666em,-9.114583333333334em) {$PX$} ;
\node[scale=0.7] at (9.671379309805483em,-3.6458333333333335em) {$Pz$} ;
\node[scale=0.7] at (13.162633156456327em,-3.6458333333333335em) {$p_2$} ;
\end{tikzpicture}
  \]
\end{proposition}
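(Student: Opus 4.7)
The plan is to verify the universal property of the right adjoint directly, using the defining pullback of $P_2 Z$ together with the adjunction $p^* \dashv P$ in $\CC$. By \cref{prop:p2-exponentiable}, the right adjoint $P_\pp$ to the pullback functor along $\pp$ is already known to exist; it only remains to identify its value at an arbitrary $z : Z \to PX$. Concretely, for every $w : W \to \XX$ in $\CC/\XX \cong (\CC/PX)/\XX$ I will produce a natural bijection
\[
\Hom_{\CC/\XX}\bigl(w,\; P_2 Z \to \XX\bigr) \;\cong\; \Hom_{\CC/PX}\bigl(\dom \pp^*(w),\; z\bigr).
\]

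The first step is to unfold the left-hand side through the universal property of the pullback defining $P_2 Z$: a morphism $W \to P_2 Z$ over $\XX$ amounts to a morphism $f : W \to PZ$ in $\CC$ such that $Pz \circ f$ agrees with the composite $W \xrightarrow{w} \XX \to PPX$ (the dashed arrow precomposed with $w$). The morphism $w$ equips $W$ with a canonical structure map $W \to \XX \xrightarrow{\epsilon} X$, so both $f$ and $Pz \circ f$ can be viewed as morphisms in $\CC/X$ with targets $PZ$ and $PPX$ respectively.

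The central step is to transpose under the adjunction $\dom \circ p^* \dashv P$. On the one hand, morphisms $W \to PZ$ in $\CC/X$ correspond naturally to morphisms $g : \dom p^*(W) = W \times_X Y \to Z$ in $\CC$. On the other hand, by construction the dashed arrow $\XX \to PPX$ is the transpose of the composite $\YY \xrightarrow{\pp} \XX \to PX$ (the ``morphism on the right''); naturality of the adjunction in both variables then yields that the composite $Pz \circ f = Pz \circ (\text{dashed} \circ w)$ transposes to
\[
W \times_X Y \;\xrightarrow{\dom p^*(w)}\; \XX \times_X Y \;=\; \YY \;\longrightarrow\; \XX \;\longrightarrow\; PX,
\]
so $f$ satisfies the compatibility iff its transpose $g$ satisfies $z \circ g = (W \times_X Y \to \YY \to PX)$.

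The final step is to apply the pullback pasting lemma to identify $W \times_X Y \cong W \times_\XX \YY = \dom \pp^*(w)$, using that $\YY = \XX \times_X Y$ by construction in \cref{eq:p2-diagram}. Under this identification, the last compatibility condition says exactly that $g$ is a morphism over $PX$ from $\dom \pp^*(w)$ to $z$, yielding the desired bijection. Naturality in $w$ and $z$ follows from the naturality of the pullback universal property and of the transposition $p^* \dashv P$. The main obstacle I anticipate is the bookkeeping of structure morphisms to $PX$ and to $X$ across the various pullback identifications and transpositions; once these are handled carefully, the rest is routine.
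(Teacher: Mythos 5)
The paper states this proposition in the appendix \emph{without proof}, so there is no in-paper argument to compare against; your write-up supplies the missing verification, and it is correct. You check the universal property of $P_{\pp}$ pointwise: by the defining pullback of $P_2Z$, a morphism $w \to (P_2Z \to \XX)$ over $\XX$ is a map $f : W \to PZ$ with $Pz \circ f = d \circ w$ (writing $d$ for the dashed arrow); transposing under $\dom \circ p^* \dashv P$, using naturality of the counit on one side (so $Pz\circ f$ transposes to $z\circ \tilde f$) and the definition of $d$ as a transpose on the other (so $d\circ w$ transposes to the structure map of $W\times_X Y$ over $PX$), converts this set bijectively into $\Hom_{\CC/PX}(\dom\,\pp^*(w), z)$ after the pasting identification $W\times_{\XX}\YY \cong W\times_X Y$. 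The one point you leave implicit --- that $f$ is automatically a morphism over $X$, which is needed before you may transpose it --- does follow from the compatibility equation, since both $Pz$ and $d$ are morphisms in $\CC/X$; this is exactly the bookkeeping you flag, and it closes. For comparison, the route the paper gestures at in the proof sketch of \cref{prop:p2-exponentiable} would be to compose the two standard adjunctions (pushforward along the pullback $\pp = \epsilon^* p$ of the exponentiable $p$, then transfer of exponentiability to the slice over $PX$) and unwind the resulting composite right adjoint; your direct hom-set computation is equally valid and arguably more transparent for extracting the explicit formula in the diagram.
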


\subsection{$\CartExp$-structure on $F/\CC$}
\label{sec:proof-comma-exp-structure}
The following proposition is used in the proof of \cref{thm:universal-property-natural-transformation}.
\begin{proposition}
    \label{prop:comma-cartexp}
    Let $(\DD, p_\DD : Y \to X)$ and $(\CC, p_\CC : Y' \to X')$ be $\CartExp$-categories, and let $F : \DD \to \CC$ be a functor preserving pullbacks along $p_\DD$, 
    and let the following square be a pullback in $\CC$.
    \[\begin{tikzcd}[ampersand replacement=\&]
	{F Y} \& {Y'} \\
	{F X} \& {X'}
	\arrow["{y}",from=1-1, to=1-2]
	\arrow["{F p_\DD}"', from=1-1, to=2-1]
	\arrow["\lrcorner"{anchor=center, pos=0.125}, draw=none, from=1-1, to=2-2]
	\arrow["{p_\CC}", from=1-2, to=2-2]
	\arrow["{x}"',from=2-1, to=2-2]
    \end{tikzcd}\] 
    The comma category $F/\CC$, whose objects are triples 
    $(C \in \CC,D \in \DD,f : FC \to D)$ and morphisms are commuting squares,
    can be equipped with a $\CartExp$-structure, where the chosen 
    exponentiable morphism is given by the pullback square $(Fp_\DD, p_\CC) : y \to x$, and 
    the functor
    $F/\CC \to \CC \times \DD$ is a $\CartExp$-functor.
\end{proposition}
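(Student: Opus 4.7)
The plan is to exhibit the $\CartExp$-structure on $F/\CC$ explicitly and verify the stated properties in three steps. First, I would show that $F/\CC$ has finite limits computed componentwise: given a finite diagram $(D_i, C_i, f_i : FD_i \to C_i)$, set $D^* := \lim D_i$ in $\DD$ and $C^* := \lim C_i$ in $\CC$, and let $f^* : FD^* \to C^*$ be the unique morphism whose composite with each projection $C^* \to C_i$ equals $f_i \circ F\pi_i$ (with $\pi_i : D^* \to D_i$ the limit projections). Coherence of the $f_i$ with the diagram yields a limit cone on $(D^*, C^*, f^*)$, so the projection $F/\CC \to \CC \times \DD$ creates (hence preserves) finite limits.

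Second, the given pullback square is by definition a morphism $q := (p_\DD, p_\CC) : (Y, Y', y) \to (X, X', x)$ in $F/\CC$, and I would construct a right adjoint to the pullback functor $q^*$ componentwise. Given $(D, C, f : FD \to C)$ in the slice over $(Y, Y', y)$, set
\[
  D' := (p_\DD)_* D \in \DD/X, \qquad C' := (p_\CC)_* C \in \CC/X',
\]
using exponentiability of $p_\DD$ and $p_\CC$. The key observation is the natural isomorphism
\[
  F(D' \times_X Y) \;\cong\; FD' \times_{FX} FY \;\cong\; FD' \times_{X'} Y',
\]
where the first uses preservation of pullbacks along $p_\DD$ by $F$, and the second follows from the pasting lemma applied to the assumed pullback square $FY \cong FX \times_{X'} Y'$. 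The counit $p_\DD^* (p_\DD)_* D \to D$ produces a morphism $F(D' \times_X Y) \to FD \xrightarrow{f} C$ over $Y'$; transporting through the iso above and transposing under the adjunction $p_\CC^* \dashv (p_\CC)_*$ yields the desired $f' : FD' \to C'$. I set $q_*(D, C, f) := (D', C', f')$, with the action on morphisms defined componentwise.

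Unit and counit for the adjunction $q^* \dashv q_*$ are inherited componentwise from those of $p_\DD^* \dashv (p_\DD)_*$ and $p_\CC^* \dashv (p_\CC)_*$; their compatibility with the third coordinate follows from naturality together with the mate construction of $f'$. The main obstacle will be verifying the hom-set bijection $\Hom(q^*(-), (D, C, f)) \cong \Hom(-, q_*(D, C, f))$ at the level of the compatibility squares that define morphisms in $F/\CC$, which amounts to a somewhat delicate but essentially mechanical diagram chase combining the two base adjunctions with the pullback hypothesis. Given this, the projection $F/\CC \to \CC \times \DD$ preserves finite limits by the first step, maps $q$ to $(p_\DD, p_\CC)$ which is exponentiable in the product $\CartExp$-category $\DD \times \CC$, and by construction maps $q_*$ to the componentwise pushforward $((p_\DD)_* D, (p_\CC)_* C)$; hence it is a $\CartExp$-functor, as claimed.
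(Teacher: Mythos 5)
Your proposal is correct and follows essentially the same route as the paper: both build the right adjoint componentwise from the pushforwards along $p_\DD$ and $p_\CC$, obtaining the connecting morphism by transposing a counit composite through the isomorphism $F(D'\times_X Y)\cong FD'\times_{X'}Y'$ supplied by $F$'s preservation of pullbacks along $p_\DD$ together with the assumed pullback square. The only differences are in packaging: you construct the slice-to-slice right adjoint $q_*$ of \cref{def:exp-mor} and check the hom-set bijection directly, whereas the paper constructs the right adjoint $P$ to the composite with $\dom$ (equivalent by \cref{prop:exp-mor-P-def}) and verifies adjointness via terminal objects in the relevant comma categories.
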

\begin{proof}
The composite
\[\begin{tikzcd}[ampersand replacement=\&]
	{(F/\CC)/x} \&[+2em] {(F/\CC)/y} \& {F/\CC}
	\arrow["{(Fp_\DD, p_\CC)^*}", from=1-1, to=1-2]
	\arrow["U", from=1-2, to=1-3]
	\end{tikzcd}\]
has a right adjoint $P : F/\CC \to (F/\CC)/x$ as follows:
\[
\begin{tikzcd}
	{F A} \\
	B
	\arrow["\alpha"',from=1-1, to=2-1]
\end{tikzcd}  
\mapsto 
\begin{tikzcd}
	{F PA} & {F X} \\
	{P'B} & {X'}
	\arrow["{F P_\DD A}", from=1-1, to=1-2]
	\arrow["P \alpha"',from=1-1, to=2-1]
	\arrow["x", from=1-2, to=2-2]
	\arrow["{P_\CC B}"', from=2-1, to=2-2]
\end{tikzcd}
\]
where $P \alpha$ is the transposition of $F \PP \xrightarrow{F(\overline{1_{QA}})} FA \xrightarrow{\alpha} B$ and $F \PP$ is the following pullback.
\[\begin{tikzcd}
	{F\PP} & FPA \\
	FY & FX \\
	{Y'} & {X'}
	\arrow[from=1-1, to=1-2]
	\arrow[from=1-1, to=2-1]
	\arrow["\lrcorner"{anchor=center, pos=0.125}, draw=none, from=1-1, to=2-2]
	\arrow["F P_\DD X", from=1-2, to=2-2]
	\arrow["Fp_\DD", from=2-1, to=2-2]
	\arrow["y"', from=2-1, to=3-1]
	\arrow["\lrcorner"{anchor=center, pos=0.125}, draw=none, from=2-1, to=3-2]
	\arrow["x", from=2-2, to=3-2]
	\arrow["{p_\CC}"', from=3-1, to=3-2]
\end{tikzcd}\]
The upper square is a pullback since $F$ preserves pullbacks along $p_\DD$. The lower square is the pullback from the assumption.

To prove the adjunction, we use the following characterization of adjoint functors:
a functor $L : \CC \to \DD$ has a right adjoint $R$ if and only if for any object $d$ of $D$, the comma category $L/d$ has a terminal object $(Rd : \CC, \alpha : LRd \to d)$.
Hence, we need to show that for all $(D \in \DD, C \in \CC, g : FD \to C)$ in $F/\CC$, there exists a morphism $\theta: LPg \to g$ in $F/\CC$ such that $(Pg , \theta)$ is terminal in $L/g$.
	We take $\theta$ to be the counit $\epsilon_g$ in $F/\CC$.
	To show that $(P g, \epsilon_g)$ is terminal in $L/g$, we need to show that for all $\alpha \in (F/\CC)/x$
	and $\beta: L \alpha \to g$ in $F/\CC$,
	there exists a unique morphism $\iota : \alpha \to  P g$ in $(F/\CC)/x$, making the following diagram commute in $F/\CC$.
	\[\begin{tikzcd}[ampersand replacement=\&]
	{L\alpha} \&\& LPg \\
	\& g
	\arrow["L \iota ", from=1-1, to=1-3]
	\arrow["\beta"', from=1-1, to=2-2]
	\arrow["\epsilon_g", from=1-3, to=2-2]
	\end{tikzcd}\]
	We take $\iota$ to be the transposition of $\beta$.
	The above diagram commutes by the definition of $\overline{\beta}$.
	Such an $\iota$ is unique by the universal property of the adjunction.

	Now to show that the functor $F/\CC \to \CC \times \DD$ is a $\CartExp$-functor, we need to show that it 
	preserves finite limits, the exponentiable morphism, and the pushforwards along it.
	Preservation of finite limits is a general fact about comma categories $F/G$ when $G$ preserves them.
	Finally, preservation of the exponentiable morphism and pushforwards along it follow directly from the construction as the right adjoint $P$ is
	defined such that it preserves the action of its underlying polynomial functors $P_\CC$ and $P_\DD$. 
\end{proof}
\begin{proof}[Proof of \cref{thm:universal-property-natural-transformation}]
Let us start with existence of the natural transformation.
The given pullback turns $F/ ℂ$ into a $\CartExp$-category, by 
\cref{prop:comma-cartexp}. Thus, we get a bi-initial $\CartExp$-functor 
$\FinGat → F/ℂ$. 
Let $I : \FinGat → \FinGat$ and $G' : \FinGat → ℂ$ be induced 
by the composition of $\CartExp$-functors $\FinGat → F/ℂ → \FinGat × ℂ$.
It follows that $I$ and $G'$ are  $\CartExp$-functors, and are thus isomorphic to the identity endofunctor 
and $G$ respectively. Because the projection $F/ℂ → \FinGat × ℂ$ is an isofibration, 
we get a $\CartExp$-functor $H ∶ \FinGat → F / ℂ$ for which $G'$ and $I$ defined as above are 
exactly $G$ and the identity functor respectively.
The component of the desired natural transformation at $Γ$ is then given by taking the image of 
$Γ$ by $H$.

For uniqueness, consider another natural transformation $β$ between $F$ and $G$, with the same pullback naturality square for $\pGAT$.
By universal property of the comma category~\cite[Proposition 1.6.3]{BorceuxI}, 
this induces a functor $\FinGat → F/ℂ$ such that post composition with either projection 
from $F/ℂ$ yields the identity functor or $G$. It can be checked that it is a $\CartExp$-functor,
because $G$ is. Therefore, it is isomorphic to $H$. This isomorphism is easily seen to be an identity,
because its whiskering with $F/ℂ → \FinGat × ℂ$ is, by uniqueness of the isomorphism between 
two bi-initial $\CartExp$-functors.
\end{proof}

\subsection{The Semantic Translation}
\label{sec:proof-semantic-translation}

\paragraph{\cref{thm:colax-exponentiable} (repeated)}
  The forgetful functor from pointed sets to sets, equipped with the identity natural transformation between the family functors, 
  is exponentiable in $\colaxSlice$, and the functor $\colaxSlice → \CAT$ preserves pushfowards along it.
  Therefore, combining with \cref{prop:colax-slice-limits}, $\colaxSlice$ is a $\CartExp$-category and the functor $\colaxSlice → \CAT$ is a $\CartExp$-functor.
  \begin{proof}
    We start by showing that the following morphism is exponentiable in $\colaxSlice$:
    \[\begin{tikzcd}[ampersand replacement=\&, column sep=small]
	{\PtdSet} \&\& \Set \\
	\& \Fam,
	\arrow["p", from=1-1, to=1-3]
	\arrow[""{name=0, anchor=center, inner sep=0}, from=1-1, to=2-2]
	\arrow["\delta", from=1-3, to=2-2]
	\arrow["\id", between={0.5}{0.7}, Rightarrow, from=1-3, to=0]
	\end{tikzcd}\]
    where $\delta$ maps a set $X$ to the family defined by $U = 1$ and $El = X$, and $p$ is the forgetful functor from pointed sets to sets.
    By \cref{prop:exp-mor-P-def}, we need to show that the composite 
    \[\begin{tikzcd}[ampersand replacement=\&]
      {(\colaxSlice)/\delta} \& {(\colaxSlice)/(\delta \circ p)} \& \colaxSlice
      \arrow["{p^*}", from=1-1, to=1-2]
      \arrow["\dom", from=1-2, to=1-3]
    \end{tikzcd}\]
    has a right adjoint.
    Similar to the case of $\CAT$, explained in \cref{exa:CAT}, this composite maps an object $(F : \CC \to \Fam, U : \CC \to \Set, \gamma : \delta \circ U \Rightarrow F)$ 
    in $(\colaxSlice)/\delta$ to the object $\int U \to \CC \xrightarrow{F} \Fam$ in $\colaxSlice$, where $\int U$ is the category of elements of $U$, and 
    the morphism $\int U \to \CC$ is the associated discrete opfibration.

    Before defining the right adjoint, we first note that a morphism of the form
    \[\begin{tikzcd}[ampersand replacement=\&, column sep=small]
	\CC \&\& \DD \\
	\& \Fam
	\arrow["H", from=1-1, to=1-3]
	\arrow[""{name=0, anchor=center, inner sep=0}, "F"', from=1-1, to=2-2]
	\arrow["G", from=1-3, to=2-2]
	\arrow["\alpha", between={0.4}{0.7}, Rightarrow, from=1-3, to=0]
    \end{tikzcd}\]
    in $\colaxSlice$ can be equivalently described as the following commuting square, using the equivalence between $\Fam$ and $\Set^\to$:
    \[\begin{tikzcd}[ampersand replacement=\&]
	{G_1 H c} \& {F_1 c} \\
	{G_0 H c} \& {F_0 c}.
	\arrow["{\alpha_1 c }", from=1-1, to=1-2]
	\arrow[from=1-1, to=2-1]
	\arrow[from=1-2, to=2-2]
	\arrow["{\alpha_0 c}"', from=2-1, to=2-2]
    \end{tikzcd}\]

    The claimed right adjoint maps an object $\DD \xrightarrow{G} \Fam$ in $\colaxSlice$ to an object
    $(F' : P\DD \to \Fam, U' : P\DD \to \Set, \gamma': \delta \circ U' \Rightarrow F')$ in $(\colaxSlice)/\delta$, 
    where $P$ is the right adjoint to the pullback functor for the case of $\CAT$ defined in \cref{exa:CAT},
    $U'$ is defined as the composite $P\DD \xrightarrow{P(H)} P\Fam \to \Set$,
    $F'$ maps an object $(X \in \Set, f : X \to \DD)$ to the family identified by 
    \begin{tikzcd}[ampersand replacement=\&]
	{X + \sum_x G_1 f(x)} \\
	{1 + \sum_x G_0 f(x)}
	\arrow[from=1-1, to=2-1]
    \end{tikzcd},
    and $\gamma'$ is identified by the following commuting square:
    \[\begin{tikzcd}[ampersand replacement=\&]
	X \& {X + \sum_x G_1 f(x)} \\
	1 \& {1 + \sum_x G_0 f(x)}.
	\arrow["{\iota_l}", from=1-1, to=1-2]
	\arrow[from=1-1, to=2-1]
	\arrow[from=1-2, to=2-2]
	\arrow["{\iota_l}"', from=2-1, to=2-2]
    \end{tikzcd}\]

    To prove the adjunction, it suffices to construct the homset bijection and show that it is natural in $\X$.
    Naturality in $X$ then implies that there exists a unique way to extend the map on objects to a functor which is natural in $Y$.
    
    For the homset bijection, let $(F : \CC \to \Fam, U : \CC \to \Set, \gamma : \delta \circ U \Rightarrow F)$ be an object in $(\colaxSlice)/\delta$ and
    $\DD \xrightarrow{G} \Fam$ be an object in $\colaxSlice$.
    We need to show a bijection between the homsets identified by the following two commuting square:
    \[
    \begin{tikzcd}[ampersand replacement=\&]
	{G_1H(c,u)} \& {F_1 c} \\
	{G_0H(c,u)} \& {F_0 c}
	\arrow["{\alpha_1 (c,u)}", from=1-1, to=1-2]
	\arrow[from=1-1, to=2-1]
	\arrow[from=1-2, to=2-2]
	\arrow["{\alpha_0 (c,u)}"', from=2-1, to=2-2]
    \end{tikzcd},
    \begin{tikzcd}[ampersand replacement=\&]
	{U_c + \sum_uG_1H(c,u)} \& {F_1 c} \\
	{1 + \sum_uG_0 H(c,u)} \& {F_0 c}
	\arrow["{\beta_1 }", from=1-1, to=1-2]
	\arrow[from=1-1, to=2-1]
	\arrow[from=1-2, to=2-2]
	\arrow["{\beta_0}"', from=2-1, to=2-2]
    \end{tikzcd}\]
    such that $\beta$ precomposition by $\gamma'$ is equal to $\gamma$:
    \[\begin{tikzcd}[ampersand replacement=\&]
	{U_c} \& {U_c + \sum_uG_1H(c,u)} \& {F_1 c} \\
	1 \& {1 + \sum_uG_0 H(c,u)} \& {F_0 c}
	\arrow["{\gamma_1}", from=1-1, to=1-2]
	\arrow[from=1-1, to=2-1]
	\arrow["{\beta_1 }", from=1-2, to=1-3]
	\arrow[from=1-2, to=2-2]
	\arrow[from=1-3, to=2-3]
	\arrow["{\gamma_0}"', from=2-1, to=2-2]
	\arrow["{\beta_0}"', from=2-2, to=2-3]
    \end{tikzcd}
    =
    \begin{tikzcd}[ampersand replacement=\&]
	{U_c} \& {F_1 c} \\
	1 \& {F_0 c}.
	\arrow["{\gamma_1 c}", from=1-1, to=1-2]
	\arrow[from=1-1, to=2-1]
	\arrow[from=1-2, to=2-2]
	\arrow["{\gamma_0 c}"', from=2-1, to=2-2]
    \end{tikzcd}
    \]
    Note that the equation $\beta \circ \gamma' = \gamma$ implies that 
    $\beta_1 \circ \iota_l = \gamma_1$ and $\beta_0 \circ \iota_l = \gamma_0$.
    Therefore, the data of $\beta$ is equivalent to that of the following commuting square:
    \[\begin{tikzcd}[ampersand replacement=\&]
	{\sum_uG_1H(c,u)} \& {F_1 c} \\
	{\sum_uG_0 H(c,u)} \& {F_0 c}
	\arrow["{\beta_{r_1} }", from=1-1, to=1-2]
	\arrow[from=1-1, to=2-1]
	\arrow[from=1-2, to=2-2]
	\arrow["{\beta_{r_0} }"', from=2-1, to=2-2]
    \end{tikzcd}\]

    Finally, by the universal property of the coproduct $\sum_u$, 
    specifying $\beta_{r_i} : \sum_uG_i H(c,u) \to F_i c$ is equivalent to specifying for each $u \in U_c$, a morphism 
    $G_i H(c,u) \to F_i c$. 
    These families of morphisms are exactly the data of the natural transformation $\alpha$.
	
	The naturality of $\alpha$ follows from the naturality of $\beta$ and the construction above.
	In particular, $\beta$ is natural in $c$,
	and by restricting this naturality square
	along the coproduct injections, we obtain the naturality of $\alpha$. 

    It remains to show the naturality of this bijection in $X$, that is
    for every morphism
    \(f : X \to X'\) in \((\colaxSlice)/\delta\) and every object
    \(Y \in \colaxSlice\), the following square commutes:
    \[\begin{tikzcd}[ampersand replacement=\&]
	{\Hom_{\colaxSlice}(L X',\, Y)} \& {\Hom_{(\colaxSlice)/\delta}(X',\, RY)} \\
	{\Hom_{\colaxSlice}(L X,\, Y)} \& {\Hom_{(\colaxSlice)/\delta}(X,\, RY),}
	\arrow["{\Psi_{X',Y}}", from=1-1, to=1-2]
	\arrow["{{(-)\circ L(f)}}"', from=1-1, to=2-1]
	\arrow["{{(-)\circ f}}", from=1-2, to=2-2]
	\arrow["{\Psi_{X,Y}}"', from=2-1, to=2-2]
    \end{tikzcd}\]
    where $L$ is the composition of the pullback along $p$ and $\dom$, and $R$ is $P_{\colaxSlice}$, the right adjoint defined above.
    The bijection $\Psi$ acts by taking coproducts indexed by the elements of $U_c$ and copairing with $\gamma$, both of which commute with reindexing of the
    indexing sets $U_c$ induced by $f$.
    Hence, the above square commutes.

    This concludes proving the first part of the theorem. 
    Showing that the forgetful functor $\colaxSlice → \CAT$ preserves pushforwards along the exponentiable morphism is straightforward,
    as $P_{\colaxSlice}$ is $P_{\CAT}$ on the level of the underlying categories, by definition.
  \end{proof}

\paragraph{\cref{prop:cartesian-colax-cartexp} (repeated)}
  $\colaxSliceCart$ is a sub-$\CartExp$-category of $\colaxSlice$, in the sense that it has finite limits, the exponentiable morphism from \cref{thm:colax-exponentiable} is in $\colaxSliceCart$ and is again exponentiable, and finally, the functor $\colaxSliceCart → \colaxSlice$ preserves finite limits and pushforwards along this exponentiable morphism.
\begin{proof}  
	It is straightforward to check that the terminal object in $\colaxSlice$ is also in 
	$\colaxSliceCart$.
	As for pullbacks, they can be computed in $\colaxSlice$ as follows:
	first compute the pullback of the categories
	in $\CAT$. The functor from this pullback to $\Fam$ is given 
	by the pushout of the whiskering of the given natural transformations by the pullback projections.
	If these natural transformations are cartesian, we
	can compute this pushout so that the pushouts coprojections are also cartesian. This results 
	in a pullback in $\colaxSliceCart$ which is also a pullback in $\colaxSlice$.
	It follows that $\colaxSliceCart$ has finite limits and that the inclusion
	$\colaxSliceCart → \colaxSlice$ preserves them.

	The proof of \cref{thm:colax-exponentiable} can directly be extended to the case of $\colaxSliceCart$ 
	to show that the chosen exponentiable morphism of $\colaxSlice$ is also exponentiable in $\colaxSliceCart$.
	It is then straightforward that the inclusion preserves pushforwards along this exponentiable morphism.
   
\end{proof}
\paragraph{\cref{prop:semantic-twosort-functor} (repeated)}
The mapping $ (ℂ \xrightarrow{F} \Fam) ↦ (\laxSliceCart[F] → \Fam)$ extends to a $\CartExp$-functor from $\colaxSliceCart$ to $\CAT/\Fam$.
\begin{proof}
	The induced functor from $\colaxSliceCart$ to $\CAT/\Fam$ can be characterised as 
	the right adjoint of the obvious embedding of $\CAT/\Fam$ into $\colaxSliceCart$.

	We abstract the situation as follows: 
	we have an adjunction $L ⊣ R ∶ ℂ → 𝔻$ between categories such that 
	\begin{itemize}
		\item $(ℂ, X_ℂ \xrightarrow{p_ℂ} Y_ℂ)$ and $(𝔻, X_𝔻 \xrightarrow{p_𝔻} Y_𝔻)$ are $\CartExp$-categories;
		\item  
		      $R  p_ℂ ≅ p_𝔻$;
		\item the left adjoint $L$ preserves pullbacks;
		\item the naturality square of the counit for the exponentiable morphism $p_ℂ$ is a pullback.
	\end{itemize}
	We can show that the right adjoint $R$ preserves pushforwards along $p_ℂ$
	and is thus a $\CartExp$-functor as desired.
	Consider indeed the square below commuting up to isomorphism, where the vertical functors are induced by 
	$R$ and the isomorphism $R p_ℂ ≅ p_𝔻$.
	\[
	\begin{tikzcd}
	ℂ/X_ℂ \ar[r, "p_{ℂ}^*"] \ar[d] & ℂ/Y_ℂ \ar[d] \\	
	𝔻/X_𝔻 \ar[r, "{p_{𝔻}^*}"'] & 𝔻/Y_𝔻
	\end{tikzcd}
	\]
	Preserving pushforwards along $p_ℂ$ requires the mate natural transformation 
	$ P_ℂ R → R P_𝔻 $ to be an isomorphism. 
	Note that we have another mate natural transformation, induced by the left adjoints of the vertical functors.
	It is easy to check that this one is an isomorphism.
	 The result follows from the fact that in this situation, one mate natural transformation is an isomorphism if and only if the other one is.
\end{proof}


\subsection{Two-Sortification is Fully Faithful}

In this section, we prove a generalisation of
the key lemma stated in the proof of \cref{prop:initial-object-two-sortification}.


\begin{notation}
  Given a morphism $f∶Ω → PX$, we denote 
  the composition $Ω \xrightarrow{f}  PX → \XGAT$ by \AP$\intro*\UGAT{f} ∶ Ω → \XGAT$.

Given $f∶ Ω → PX$ and $A ∈ \TmU{f}$, we denote the composition 
$Ω → \TW\XGAT \xrightarrow{ε_{\XGAT}} \XGAT$
by \AP$\intro*\ElMor{f}{A}$, where 
the first morphism is the one corresponding to $A$ and ${f}$ by the universal property of the pullback defining $\TW \XGAT$,
and the second morphism 
$ε_{\XGAT}$ is the counit at $\XGAT$
of the adjunction $\pGAT^* ⊣ P$. 
\end{notation}
\begin{remark}
  The notations are motivated by the fact that 
  a morphism $f ∶ Ω → \Fam$ corresponds to
  a pair of sorts $Ω ⊢ \U_f ∶ \GatSet$ and
  $Ω ⊢ \El_f ∶ \U_f → \GatSet$,
  an element of $\TmU{f}$ corresponds to
  a term $Ω ⊢ A ∶ \U_f$, and an element of $\TmEl{f}{A}$
  corresponds to a term $ Ω ⊢ t ∶ \El_f\,A$.
\end{remark}
From the following proposition, we easily recover
the key lemma stated in the proof of \cref{prop:initial-object-two-sortification},
by considering the initial model of $\TW Ω$ and $f$ to be the projection $\TW Ω → \Fam$.
\begin{proposition}
    \label{prop:initial-value-fam}
  Let $0_{Ω}$ denote the initial model of $Ω$ from \cref{prop:initial-model-gat-existence}.
  For any morphism $f ∶ Ω → \Fam$,
  the family $\model{f}
  (0_{Ω})$ is 
   $(\TmEl{f}{A})_{A ∈ \TmU{f}}$.
\end{proposition}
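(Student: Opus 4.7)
The plan is to unpack the family $\model{f}(0_Ω) \in \FamS$ into its base set and its fibres, and to apply \cref{prop:initial-model-gat-existence} at each level.

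First, I would identify the underlying set. By definition, $\UGAT{f}$ is the composite of $f$ with the projection $\Fam → \XGAT$ that picks out the first sort $\U$; on the semantic side, this projection is the underlying-set functor $\FamS → \Set$. Hence the base set of $\model{f}(0_Ω)$ equals $\model{\UGAT{f}}(0_Ω)$, which is $\TmSet{\UGAT{f}} = \TmU{f}$ by \cref{prop:initial-model-gat-existence}.

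Next, I would compute the fibre over an element $A \in \TmU{f}$, viewed as a morphism $A ∶ Ω → \YGAT$ with $\pGAT ∘ A = \UGAT{f}$. By the universal property of the pullback from \cref{prop:p2-construction} defining $\TW \XGAT$, the pair $(f, A)$ induces a unique morphism $\langle f, A \rangle ∶ Ω → \TW \XGAT$, and $\ElMor{f}{A} = ε_{\XGAT} ∘ \langle f, A \rangle$ by definition. Applying the \kl{model functor} to this pullback exhibits $\model{\TW \XGAT}$ as the category of pointed families, with $\model{ε_{\XGAT}}$ sending $(U, \El, A)$ to $\El(A)$. The model $\model{\langle f, A \rangle}(0_Ω)$ is then the pointed family whose family is $\model{f}(0_Ω)$ and whose chosen element is $\modelTerm{A}{0_Ω} = A$, using the second clause of \cref{prop:initial-model-gat-existence} together with the matching of base sets from the previous paragraph. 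Consequently, the fibre of $\model{f}(0_Ω)$ over $A$ is
\[
\model{ε_{\XGAT}}\bigl(\model{\langle f, A \rangle}(0_Ω)\bigr) = \model{\ElMor{f}{A}}(0_Ω) = \TmSet{\ElMor{f}{A}} = \TmEl{f}{A},
\]
once again by \cref{prop:initial-model-gat-existence}. Combining the base and fibres will yield the claimed description.

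The main obstacle I anticipate is purely bookkeeping: checking that the two projections of $\model{\langle f, A \rangle}(0_Ω)$ in the pullback $\model{\TW \XGAT}$ really coincide with $\model{f}(0_Ω) \in \FamS$ and with the pointed set $\model{A}(0_Ω) \in \PtdSet$. This reduces to the fact that $\model{-}$ preserves the pullback defining $\TW \XGAT$, since the \kl{model functor} is a $\CartExp$-functor, together with the explicit description of $0_Ω$.
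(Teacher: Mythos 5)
Your proof is correct, and it reaches the conclusion by a more elementary route than the paper. The paper first proves a general factorisation lemma (\cref{lem:nat-universal-pullback-P}): for any natural transformation $\alpha$ arising from \cref{thm:universal-property-natural-transformation}, the component $\alpha_{P\Gamma}$ factors as $P_{\CC}\alpha_\Gamma$ precomposed with the transpose $\beta_\Gamma$ of $F\epsilon_\Gamma$; it then instantiates this with $\alpha_\Gamma(\sigma) = \model{\sigma}(0_\Omega)$ at $\Gamma = \XGAT$, computing $\beta(f) = (\ElMor{f}{A})_{A \in \TmU{f}}$ and applying $\FamCat{\alpha_{\XGAT}}$ fibrewise. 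You instead compute the family $\model{f}(0_\Omega)$ pointwise: the base set via the projection $P\XGAT \to \XGAT$, and the fibre over $A$ as $\model{\epsilon_{\XGAT} \circ \langle f, A\rangle}(0_\Omega)$, using that $\model{-}$ preserves the pullback defining $\TW\XGAT$ and (because it preserves pushforwards along $\pGAT$) the counit. Both arguments rest on exactly the same inputs --- the values of $0_\Omega$ on morphisms to $\XGAT$ and $\YGAT$ from \cref{prop:initial-model-gat-existence}, and the identification of $\epsilon_{\XGAT}$ with evaluation of a pointed family at its distinguished index --- but yours bypasses the mate/transpose bookkeeping of \cref{lem:nat-universal-pullback-P}, at the cost of not producing a reusable statement about arbitrary natural transformations of this shape. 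The one step you should make fully explicit is the one you flag at the end: identifying the ``fibre over $A$'' of $\model{f}(0_\Omega)$ with $\model{\epsilon_{\XGAT}}\bigl(\model{\langle f,A\rangle}(0_\Omega)\bigr)$ uses the canonical isomorphism $\model{P\XGAT} \cong \FamCat{\model{\XGAT}}$ supplied by preservation of pushforwards, under which $\model{\epsilon_{\XGAT}}$ really is $(U,El,A) \mapsto El(A)$; this is the same implicit identification the statement of the proposition itself depends on, so it is harmless, but it is where the $\CartExp$-structure of the model functor is actually doing work.
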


The proof relies on the following lemma.
\begin{lemma}
  \label{lem:nat-universal-pullback-P}
  Let $α $ be a natural transformation as induced by 
  \cref{thm:universal-property-natural-transformation}, i.e., any natural transformation
  between two functors
  $F,G ∶ \FinGat → \CC$, such that $G$ is a \kl{strict} $\CartExp$-functor,
  $F$ preserves pullbacks along $\pGAT$,
  and the naturality square of $α$ at $\pGAT$ is a pullback square in $\CC$.
  Then,
  for any theory $Γ$, the morphism $ F P Γ \xrightarrow{α_{P Γ}} G P Γ ≅ P_ℂ G Γ$
  is the composition below left,
  where $β_Γ ∶ FPΓ → P_ℂ F Γ$ is the transpose 
  of the dashed morphism below right with respect to 
  the adjunction $p_ℂ^* ⊣ P_ℂ$, with $ε_Γ ∶ Γ' → Γ$
  denoting the counit component at $Γ$.
  \[
  \begin{tikzpicture}
\draw[white,-,line width=0.20833333333333334em] (10.9375em,-11.822916666666668em) .. controls (10.9375em,-12.274305555555555em) and (10.9375em,-12.725694444444445em) .. (10.9375em,-13.177083333333334em);
\draw[black,->, ] (10.9375em,-11.822916666666668em) .. controls (10.9375em,-12.274305555555555em) and (10.9375em,-12.725694444444445em) .. (10.9375em,-13.177083333333334em);
\draw[white,-,line width=0.20833333333333334em] (10.9375em,-14.947916666666668em) .. controls (10.9375em,-15.399305555555554em) and (10.9375em,-15.850694444444445em) .. (10.9375em,-16.302083333333336em);
\draw[black,->, ] (10.9375em,-14.947916666666668em) .. controls (10.9375em,-15.399305555555554em) and (10.9375em,-15.850694444444445em) .. (10.9375em,-16.302083333333336em);
\draw[white,-,line width=0.20833333333333334em] (10.9375em,-8.697916666666668em) .. controls (10.9375em,-9.149305555555555em) and (10.9375em,-9.600694444444445em) .. (10.9375em,-10.052083333333334em);
\draw[black,->, ] (10.9375em,-8.697916666666668em) .. controls (10.9375em,-9.149305555555555em) and (10.9375em,-9.600694444444445em) .. (10.9375em,-10.052083333333334em);
\draw[white,-,line width=0.20833333333333334em] (18.385416666666668em,-12.124999997516474em) .. controls (18.55034722222222em,-12.124999997516474em) and (18.71527777777778em,-12.124999997516474em) .. (18.880208333333336em,-12.124999997516474em);
\draw[black,->, ] (18.385416666666668em,-12.124999997516474em) .. controls (18.55034722222222em,-12.124999997516474em) and (18.71527777777778em,-12.124999997516474em) .. (18.880208333333336em,-12.124999997516474em);
\draw[white,-,line width=0.20833333333333334em] (20.208333333333336em,-9.88541666418314em) .. controls (20.208333333333332em,-10.336805553072029em) and (20.208333333333332em,-10.788194441960917em) .. (20.208333333333336em,-11.239583330849808em);
\draw[black,->, ] (20.208333333333336em,-9.88541666418314em) .. controls (20.208333333333332em,-10.336805553072029em) and (20.208333333333332em,-10.788194441960917em) .. (20.208333333333336em,-11.239583330849808em);
\draw[white,-,line width=0.20833333333333334em] (17.083333333333336em,-9.88541666418314em) .. controls (17.083333333333336em,-10.336805553072029em) and (17.083333333333336em,-10.788194441960917em) .. (17.083333333333336em,-11.239583330849808em);
\draw[black,->, ] (17.083333333333336em,-9.88541666418314em) .. controls (17.083333333333336em,-10.336805553072029em) and (17.083333333333336em,-10.788194441960917em) .. (17.083333333333336em,-11.239583330849808em);
\draw[white,-,line width=0.20833333333333334em] (18.177083333333336em,-8.999999997516474em) .. controls (18.45486111111111em,-8.999999997516474em) and (18.73263888888889em,-8.999999997516474em) .. (19.010416666666668em,-8.999999997516474em);
\draw[black,->, ] (18.177083333333336em,-8.999999997516474em) .. controls (18.45486111111111em,-8.999999997516474em) and (18.73263888888889em,-8.999999997516474em) .. (19.010416666666668em,-8.999999997516474em);
\draw[white,-,curve={ratio=0.4}, line width=0.20833333333333334em] (16.375em,-9.88541666418314em) .. controls (15.180555555555555em,-11.378472219738695em) and (15.180555555555555em,-12.87152777529425em) .. (16.375em,-14.364583330849808em);
\draw[black,->, dashed, curve={ratio=0.4}, ] (16.375em,-9.88541666418314em) .. controls (15.180555555555555em,-11.378472219738695em) and (15.180555555555555em,-12.87152777529425em) .. (16.375em,-14.364583330849808em);
\draw[white] (17.34375em,-10.145833330849808em) -- (18.4375em,-10.145833330849808em);
\draw[white] (18.4375em,-10.145833330849808em) -- (18.4375em,-9.26041666418314em);
\draw[black] (17.34375em,-10.145833330849808em) -- (18.4375em,-10.145833330849808em);
\draw[black] (18.4375em,-10.145833330849808em) -- (18.4375em,-9.26041666418314em);
\node at (17.083333333333336em,-8.999999997516474em) {$F\Gamma '$} ;
\node at (10.9375em,-7.8125em) {$F P Γ$} ;
\node at (10.9375em,-10.9375em) {$P_ℂ F Γ$} ;
\node at (10.9375em,-14.0625em) {$P_ℂ G Γ$} ;
\node at (10.9375em,-17.1875em) {$G P Γ$} ;
\node at (17.083333333333336em,-12.124999997516474em) {$FP Γ$} ;
\node at (20.208333333333336em,-12.124999997516474em) {$F\XGAT$} ;
\node at (20.208333333333336em,-8.999999997516474em) {$F\YGAT$} ;
\node at (17.083333333333336em,-15.249999997516474em) {$F Γ$} ;
\node[scale=0.7] at (11.862923512968147em,-12.5em) {$P_ℂ α_Γ$} ;
\node[scale=0.7] at (11.3036720941369em,-15.625em) {$≅$} ;
\node[scale=0.7] at (11.403010803611544em,-9.375em) {$ β_Γ$} ;
\node[scale=0.7] at (20.949752708452248em,-10.562499997516474em) {$F \pGAT$} ;
\node[scale=0.7] at (14.798718981997986em,-12.124999997516474em) {$F ε_Γ$} ;
\end{tikzpicture}
  \]
\end{lemma}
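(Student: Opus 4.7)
The plan is to transpose both sides of the claimed equality along the adjunction $p_ℂ^* \dashv P_ℂ$ and show that the transposes coincide; by the adjunction bijection this yields the desired equality. Throughout, $FPΓ$ will be viewed as an object of $ℂ/X_ℂ$ via $FPΓ \to F\XGAT \xrightarrow{α_{\XGAT}} X_ℂ$, which by naturality of $α$ at $PΓ \to \XGAT$ agrees with $FPΓ \xrightarrow{α_{PΓ}} GPΓ \to G\XGAT = X_ℂ$.

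First, I would establish the identification $F\Gamma' \cong p_ℂ^*(FPΓ)$. Since $F$ preserves pullbacks along $\pGAT$, applying $F$ to the pullback defining $\Gamma'$ yields the pullback of $FPΓ \to F\XGAT$ along $F\pGAT$. By the pullback hypothesis on the naturality square of $α$ at $\pGAT$, $F\pGAT$ is itself the pullback of $p_ℂ$ along $α_{\XGAT}$. Pasting these two pullbacks exhibits $F\Gamma'$ as $p_ℂ^*(FPΓ)$; this is the identification that makes the definition of $β_Γ$ as the transpose of $F\epsilon_Γ$ type-check. General adjunction calculus then gives that the transpose of $P_ℂ α_Γ \circ β_Γ$ is $α_Γ \circ F\epsilon_Γ : F\Gamma' \to GΓ$.

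Next, I would compute the transpose of the left-hand side, namely $FPΓ \xrightarrow{α_{PΓ}} GPΓ \cong P_ℂ GΓ$. Applying the same pasting argument to $G$ (which is a strict $\CartExp$-functor, so $G\pGAT = p_ℂ$ and $G$ preserves the pullback defining $\Gamma'$) gives $G\Gamma' \cong p_ℂ^*(GPΓ)$, and preservation of pushforwards by $G$ identifies the counit $\epsilon^ℂ_{GΓ} : p_ℂ^* P_ℂ GΓ \to GΓ$ with $G\epsilon_Γ$ under the iso $GPΓ \cong P_ℂ GΓ$. Naturality of $α$ at the projection $\Gamma' \to PΓ$ then identifies $p_ℂ^* α_{PΓ}$ with $α_{\Gamma'}$. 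Hence the transpose of the left-hand side is $G\epsilon_Γ \circ α_{\Gamma'}$, which by naturality of $α$ at $\epsilon_Γ : \Gamma' \to Γ$ equals $α_Γ \circ F\epsilon_Γ$, matching the transpose of the right-hand side computed above.

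The main subtlety will be rigorously tracking the several identifications ($F\Gamma' \cong p_ℂ^*FPΓ$, $G\Gamma' \cong p_ℂ^*P_ℂGΓ$, and $G\epsilon_Γ$ with $\epsilon^ℂ_{GΓ}$) and verifying that they all respect the structure maps to $X_ℂ$; this is exactly where strictness of $G$, $G$ being a $\CartExp$-functor, and the pullback hypothesis on $α$ at $\pGAT$ interact. Once these identifications are in place, the proof reduces to two applications of naturality of $α$ and the standard bijection of the adjunction.
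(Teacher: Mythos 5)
Your proposal is correct and follows exactly the paper's strategy: the paper's entire proof is the one-line observation that both composites have the same transpose under $p_ℂ^* \dashv P_ℂ$, and your argument is a careful elaboration of precisely that computation. The identifications you flag as the main subtlety ($F\Gamma' \cong p_ℂ^* FPΓ$ via pullback pasting, and $G\epsilon_Γ$ with the counit via strictness and pushforward preservation) are indeed the content the paper leaves implicit.
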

  \begin{proof}
    This morphism and $α_{P Γ}$
    have the same transpose.
  \end{proof}
\begin{proof}[Proof of \cref{prop:initial-value-fam}]
  We consider the natural transformation $α_Γ ∶ \hom(Ω,Γ)→ \model{Γ}$
  mapping $σ$ to $\model{σ}(0_Ω)$: the naturality square at $\pGAT$ is the pullback
  \eqref{eq:naturality-square-initial-model}.
  Note that $\model{f}(0_{Ω})$ is
  $α_{P\XGAT}(f)$.
  By
  \cref{lem:nat-universal-pullback-P}, 
  we conclude that
   $α_{P\XGAT}$ is the following map 
  \begin{equation}
  \label{eq:compo-def-TmElA}
  \hom(Ω, P\XGAT)
  \xrightarrow{β_Γ} 
  \FamCat{\hom(Ω, \XGAT)}
  \xrightarrow{\FamCat{α_{\XGAT}}} 
  \FamCat{𝐒𝐞𝐭}  
  \end{equation}
  Note that $α_{\XGAT}$ maps $A ∶ Ω → \XGAT$ to 
  $\model{A}0_Ω = \TmSet{A}$.
  It remains to compute $β_Γ ∶ \hom(Ω, P\XGAT)
  → \FamCat{\hom(Ω, \XGAT)}$.
  As the transpose of $ε_{\XGAT} ∘- ∶\hom(Ω,\TW \XGAT) → \hom(Ω,\XGAT)$,
  the function $β_Γ$ maps $f ∶ Ω → P\XGAT$ 
  to the family 
  $(\ElMor{f}{A} )_{A ∈ \TmU{f}}$.
  It is now clear that the map \eqref{eq:compo-def-TmElA}
  yields the expected family 
  when applied to $f$.
  

\end{proof}

}{}

\end{document}